\definecolor{DarkRed}{rgb}{0.5,0.1,0.1}
\definecolor{DarkBlue}{rgb}{0.1,0.1,0.5}
\definecolor{ForestGreen}{rgb}{0.1333,0.5451,0.1333}
\definecolor{Red}{rgb}{0.9,0,0}
\crefname{property}{property}{Property}
\crefname{equation}{eq}{Eq}
\newtheorem*{mdresult}{Main Result}
\newenvironment{result}{\begin{mdframed}[backgroundcolor=lightgray!40,topline=false,bottomline=false, innerbottommargin=12pt]\begin{mdresult}}{\end{mdresult}\end{mdframed}}
\newtheorem{theorem}{Theorem}[section]
\newtheorem*{theorem*}{Theorem}
\newaliascnt{definition}{theorem}
\newtheorem{definition}[definition]{Definition}
\newtheorem*{definition*}{Definition}
\newaliascnt{lemma}{theorem}
\newtheorem{lemma}[lemma]{Lemma}
\newtheorem*{lemma*}{Lemma}
\newaliascnt{claim}{theorem}
\newtheorem*{claim*}{Claim}
\newaliascnt{fact}{theorem}
\newtheorem{fact}[fact]{Fact}
\newtheorem*{fact*}{Fact}
\newaliascnt{observation}{theorem}
\newtheorem{observation}[observation]{Observation}
\newtheorem*{observation*}{Observation}
\newaliascnt{conjecture}{theorem}
\newtheorem*{conjecture*}{Conjecture}
\newaliascnt{corollary}{theorem}
\newtheorem{corollary}[corollary]{Corollary}
\newtheorem*{corollary*}{Corollary}
\newaliascnt{remark}{theorem}
\newtheorem*{remark*}{Remark}
\newaliascnt{proposition}{theorem}
\newtheorem*{proposition*}{Proposition}
\newcommand{\tvd}[2]{\ensuremath{\norm{#1 - #2}_{tvd}}}
\newcommand{\eps}{\ensuremath{\varepsilon}}
\newcommand{\norm}[1]{\ensuremath{\|#1\|}}
\newcommand{\poly}{\mbox{\rm poly}}
\newcommand{\opt}{\textnormal{\ensuremath{\mbox{opt}}}\xspace}
\DeclareMathOperator*{\Prob}{\ensuremath{\textnormal{Pr}}}
\renewcommand{\Pr}{\Prob}
\newenvironment{tbox}{\begin{tcolorbox}[
		enlarge top by=5pt,
		enlarge bottom by=5pt,
		 breakable,
		 boxsep=0pt,
                  left=4pt,
                  right=4pt,
                  top=10pt,
                  arc=0pt,
                  boxrule=1pt,toprule=1pt,
                  colback=white
                  ]%%
	}
{\end{tcolorbox}}
\newcommand{\kl}[2]{\ensuremath{\mathbb{D}(#1~||~#2)}}
\newcommand{\distribution}[1]{\ensuremath{\textnormal{dist}(#1)}\xspace}
\newcommand{\unif}{\ensuremath{\mathcal{U}}}
\renewcommand{\bar}[1]{\overline{#1}}
\newcommand{\REM}[1]{}
\def\part{\mathsf{Part}}
\def\unif{\mathcal{U}}
\def\alice{\mathsf{A}}
\def\bob{\mathsf{B}}
\def\seller{\mathsf{S}}
\def\alloc{\mathsf{alloc}}
\def\price{\mathsf{price}}
\def\cc{\mathsf{CC}}
\def\opt{\mathsf{opt}}
\def\bxos{\mathtt{BXOS}}
\def\xos{\mathtt{XOS}}
\def\supp{\mathsf{supp}}
\def\len{\mathsf{len}}
\def\I{\mathbb{I}}
\def\H{\mathbb{H}}
\def\vecopt{\vec{\mathsf{opt}}}
\def\vecpair{\vec{\mathsf{pair}}}
\def\vecreg{\vec{\mathsf{reg}}}
\def\vecregpair{\vec{\mathsf{regpair}}}
\def\vecspec{\vec{\mathsf{spec}}}
\def\vecspecpair{\vec{\mathsf{specpair}}}
\def\veccmp{\vec{\mathsf{cmp}}}
\def\E{\mathop{\mathbb{E}}}
\def\ally{\mathsf{PC\text{-}ally}}
\def\pc{\mathsf{PC}}
\title{Separating the Communication Complexity of Truthful and Non-Truthful Combinatorial Auctions}
\author{Sepehr Assadi\footnote{Department of Computer Science, Rutgers University. Email: \texttt{sepehr.assadi@rutgers.edu}. Part of this work was done while the author was a postdoctoral researcher at Princeton University and was supported in part
by the Simons Collaboration on Algorithms and Geometry.} 
\and Hrishikesh Khandeparkar\footnote{Department of Computer Science, Princeton University. Email: \texttt{hrishikesh.khandeparkar@gmail.com}.} 
\and Raghuvansh R. Saxena\footnote{Department of Computer Science, Princeton University. Email:  \texttt{rrsaxena@cs.princeton.edu}. Research supported by the National Science Foundation CAREER award CCF-1750443. }
\and S. Matthew Weinberg\footnote{Department of Computer Science, Princeton University. Email: \texttt{smweinberg@princeton.edu}. Supported by NSF CCF-1717899.} 
}
\date{}
\begin{document}
\maketitle

\pagenumbering{roman}

\begin{abstract}
	We provide the first separation in the approximation guarantee achievable by truthful and non-truthful combinatorial auctions with polynomial communication. Specifically, we prove that any truthful mechanism guaranteeing a $(\nicefrac{3}{4}-\nicefrac{1}{240}+\varepsilon)$-approximation for two buyers with XOS valuations over $m$ items requires $\exp(\Omega(\varepsilon^2 \cdot m))$ communication, 
	 whereas a non-truthful algorithm by Dobzinski and Schapira [SODA 2006] and Feige [2009] is already known to achieve a $\nicefrac{3}{4}$-approximation in $\poly(m)$ communication. 
	
	\smallskip 

We obtain our separation by proving that any {simultaneous} protocol ({not} necessarily truthful) which guarantees a $(\nicefrac{3}{4}-\nicefrac{1}{240}+\varepsilon)$-approximation requires communication $\exp(\Omega(\varepsilon^2 \cdot m))$. The taxation complexity framework of Dobzinski [FOCS 2016] extends this lower bound to all  truthful mechanisms (including interactive truthful mechanisms).

\end{abstract}

\clearpage

\setcounter{tocdepth}{3}
\tableofcontents

\clearpage

\pagenumbering{arabic}
\setcounter{page}{1}

\section{Introduction}\label{sec:intro}
Combinatorial auctions have been at the forefront of Algorithmic Game Theory since the field's inception, owing both to their rich algorithmic theory and  their economic relevance. In a combinatorial auction, there are $n$ bidders, and a seller selling a set $M$ of $m$ items. Each bidder~$i$ has a value for all possible subsets of the items, given by a valuation function $v_i: \mathbbm{2}^{M} \to \mathbb{R}_{+}$. The seller's goal is to find a partition of the $M$ items into disjoint sets $S_1, \cdots, S_n$ such that the \emph{welfare}, $\sum_{i \in [n]} v_i(S_i)$, is maximized. 

The seller faces two challenges in solving this problem. First, the seller must communicate efficiently with the bidders to find a good allocation. Specifically, the seller hopes to use $\poly(n,m)$ total bits of communication, even though each bidder's full valuation function in principle requires (at least) $2^m$ bits to describe. Second, the seller must accommodate the bidders' own incentives. Specifically, the seller desires a protocol that each bidder is incentivized to follow---such protocols are called \emph{truthful}.

The main question we study in this paper is the following: Are there settings where non-truthful algorithms are strictly more powerful than truthful mechanisms? More specifically: Is it the case that for all valuation classes $\mathcal{V}$ and all $\alpha$, if a poly-communication algorithm can guarantee an $\alpha$-approximation when all bidders have valuations in $\mathcal{V}$, then a poly-communication truthful mechanism can also guarantee an $\alpha$-approximation when all bidders have valuations in $\mathcal{V}$?

Our main result is the first setting for which the answer is `no', and in fact we show this separation for the well-studied class of XOS (equivalently, fractionally subadditive) valuation functions.\footnote{A valuation function is XOS if it can be written as a maximum of additive functions---see Section~\ref{sec:caprelim} for precise definitoin.} Before detailing our result, we provide some context.

\paragraph{The VCG Mechanism.}  For some valuation classes $\mathcal{V}$, truthful mechanisms are indeed as powerful as non-truthful algorithms, due to the Vickrey-Clarke-Groves mechanism (\cite{Vickrey61,Clarke71,Groves73}). In TCS terminology, the VCG mechanism is a black-box reduction from \emph{exact} welfare maximization with a truthful mechanism to \emph{exact} welfare maximization with a non-truthful algorithm. More specifically, the VCG mechanism is truthful, maximizes welfare exactly, and can be implemented using $n+1$ black-box calls to a non-truthful algorithm which maximizes welfare exactly. 

There are indeed some restricted settings ({\em e.g.}~when $\mathcal{V}$ is the set of additive valuations, or unit-demand valuations, and even up to Gross Substitutes) for which a poly-communication algorithm precisely maximizes welfare, implying that VCG is also poly-communication and precisely maximizes welfare. Still, the cases for which VCG is poly-communication are \emph{very} restrictive, and do not include, {\em e.g.}, submodular\footnote{A valuation function is submodular if $v(S) + v(T) \leq v(S \cap T) + v(S \cup T)$.} valuations, let alone XOS or subadditive.\footnote{A valuation function is subadditive if $v(S) + v(T) \leq v(S \cup T)$.}

If one considers {\em approximate} welfare maximization, then, for general (unrestricted) valuation functions, the best achievable approximation guarantee by a poly-communication algorithm is just $O(1/\sqrt{m})$~\cite{NisanS06}. Due to the strength of this lower bound, poly-communication ``VCG-based'' truthful mechanisms actually suffice to match this guarantee~\cite{Raghavan88,LehmannOS02, LaviS05}. So in these domains too, poly-communication truthful mechanisms are as powerful as poly-communication algorithms. Still, the guarantees achievable without any assumptions are quite weak.

In summary, truthful mechanisms are as powerful as non-truthful algorithms at the extremes. When valuations are heavily restricted, VCG is poly-communication. When valuations are arbitrary, good poly-communication algorithms don't exist. Still, this leaves out the entire intermediate range of valuation classes.

\paragraph{Beyond VCG: Gaps in Relevant Cases.} Consider now this intermediate range of valuations, such as submodular, XOS, or subadditive: these classes are rich enough to contain realistic valuation functions, yet also restrictive enough to admit poly-communication constant-factor approximation algorithms. For these valuation classes, the state of affairs is drastically different. Indeed, there are huge gaps between the best-known poly-communication algorithm (where deterministic, constant-factor approximations are known for all three classes~\cite{DobzinskiS06, Feige09, FeigeV10}) and the best-known poly-communication truthful mechanism (where no randomized constant-factor approximation is known for any class~\cite{Dobzinski07,AssadiS19,AssadiKS20}, and the best deterministic mechanism guarantees only an $\Omega(1/\sqrt{m})$-approximation~\cite{DobzinskiNS10}). Yet despite these huge gaps in the state of affairs, it was previously unknown whether \emph{any} gap (even a small constant factor) exists in any domain! Our main result provides the first such separation:

\begin{result}[Informal]\label{res:main}
	No poly-communication, deterministic truthful mechanism for two bidders with XOS valuations achieves an approximation guarantee better than $\frac{179}{240}=\frac{3}{4}-\frac{1}{240}$, whereas a poly-communication, deterministic non-truthful algorithm guarantees a $\frac{3}{4}$-approximation.
\end{result}

We note that the part of our main result that deals with non-truthful algorithms is well known and due to~\cite{DobzinskiS06,Feige09}. Our contribution is the lower bound for deterministic truthful mechanisms. In fact, our result generalizes to rule out certain randomized mechanisms as well, but we defer the formal statement to \autoref{thm:mainformal}.

\paragraph{Brief Overview of Approach: Simultaneous Communication.} Communication lower bounds which hold for truthful mechanisms \emph{but not algorithms} are notoriously hard to come by. Specifically, only two general approaches are known. The first is to pick a subclass of truthful mechanisms (e.g., VCG-based), and prove lower bounds against these particular mechanisms. Aforementioned prior work successfully provides such bounds, so we now know that VCG-based truthful mechanisms cannot beat an $O(m^{-1/3})$-approximation for submodular (or XOS, subadditive) valuations~\cite{DobzinskiN11, BuchfuhrerDFKMPSSU10, DanielySS15}. While VCG-based mechanisms are surprisingly general~\cite{LaviMN03}, (deterministic) truthful mechanisms exist which are not VCG-based~\cite{DobzinskiN15, KrystaV12, Dobzinski16a, AssadiS19, AssadiKS20}, and these mechanisms indeed achieve better approximation guarantees than the aforementioned lower bounds. In particular, simple posted-price mechanisms are not VCG-based.\footnote{A posted-price mechanism computes prices $p_1,\ldots, p_m$ in poly-time, then visits each buyer one at a time and asks them to purchase their favorite set (the one maximizing $v_i(S) - \sum_{j \in S} p_j$).} 

The only alternative framework was recently proposed in~\cite{Dobzinski16b}, which establishes the following remarkable theorem (stated formally in \autoref{thm:dobzinski}): if there exists a deterministic poly-communication truthful mechanism which achieves an $\alpha$-approximation for two buyers with XOS valuation functions, then there also exists a deterministic poly-communication \emph{simultaneous} algorithm which achieves an $\alpha$-approximation for two buyers with XOS valuation functions (that is, the two bidders each send exactly one message, simultaneously, and then the designer allocates based only on these messages).\footnote{Note that~\cite{Dobzinski16b} has implications beyond XOS, beyond deterministic protocols, and beyond two bidders, but the implications are tricky to formally state and not relevant for this paper.} That is, while the existence of interactive poly-communication algorithms generally does not imply the existence of simultaneous poly-communication algorithms (e.g.~\cite{PapadimitriouS82, DurisZS84, NisanW93, BabaiGKL03, DobzinskiNO14, AlonNRW15, Assadi17}), the additional structure on interactive \emph{truthful mechanisms} does (at least, for two player combinatorial auctions). Following~\cite{Dobzinski16b}, the remaining task was `merely' to establish a separation between the approximation guarantees achievable in poly-communication with simultaneous versus interactive communication.

Initially, it seems tempting to conjecture that better than just a $\nicefrac{1}{2}$-approximation (which for two bidders is trivial---simply ask each bidder for $v_i(M)$ simultaneously and award $M$ to the highest bidder) would be impossible with poly-communication simultaneous algorithms, due to known lower bounds on ``sketching'' valuation functions~\cite{BadanidiyuruDFKNR12}. However, surprising barriers were discovered on this front:~\cite{BravermanMW18} develop a simultaneous, randomized $\nicefrac{3}{4}$-approximation with poly-communication for two buyers with binary-XOS valuations,\footnote{$v(\cdot)$ is binary-XOS if there exists a collection $\mathcal{C}$ of sets and $v(S):= \max_{T \in \mathcal{C}}\{|S \cap T|\}$. Binary-XOS implies XOS.} which is tight even for interactive algorithms with poly-communication. In addition,~\cite{EzraFNTW19} establish that even interactive algorithms with poly-communication cannot beat a $\nicefrac{1}{2}$-approximation for two bidders with subadditive valuations (which is matched by the aforementioned trivial simultaneous protocol, so there cannot possibly be a separation for two subadditive bidders). We prove our main result by establishing a lower bound of $\nicefrac{3}{4}-\nicefrac{1}{240}$ on the  approximation guarantee of any deterministic, simultaneous algorithm for two bidders with binary-XOS valuation functions, thus also providing the first successful instantiation of Dobzinski's framework~\cite{Dobzinski16b}, despite these barriers.

As the main ideas behind our construction require preliminaries and a detailed overview of prior work (especially~\cite{BravermanMW18}), we defer further details of our proof to the technical sections. We conclude with a reminder that our main result is the first separation between approximation guarantees achievable by (deterministic) truthful mechanisms and (deterministic) algorithms with poly-communication, which follows by providing the first separation between approximation guarantees achievable by (deterministic) simultaneous algorithms and (deterministic) interactive algorithms with poly-communication for two bidders, and an application of~\cite{Dobzinski16b}. 

\subsection{Related Work}\label{sec:related}

\paragraph{Communication complexity separations.} As mentioned above, there are no previously-known separations between approximation guarantees provided by poly-communication truthful mechanisms and poly-communication algorithms. However, some partial results are known.

For example, due to works of~\cite{DobzinskiN11,BuchfuhrerDFKMPSSU10, DanielySS15}, we have a separation between poly-communication algorithms and poly-communication ``VCG-based'' truthful mechanisms when the valuation functions are submodular, XOS, or subadditive. While this rules out a large class of potential mechanisms, we have already noted that (variants of) posted-price mechanisms, which are not VCG-based, outperform these lower bounds. Therefore, more general results (like ours) are necessary to consider these mechanisms.

Along similar lines,~\cite{DobzinskiN15} establishes that a separation  between polylogarithmic communication algorithms and polylogarithmic communication ``scalable'' truthful mechanisms, for the special case of multi-unit auctions (where all items are identical, so a buyer's valuation is fully specified by $m$ numbers). Scalability is not a particularly restrictive definition, although the result is still quite specialized because of its focus on multi-unit auctions (where the entire valuation function can be communicated with $\poly(m)$ bits). 

\paragraph{Other complexity measures.} We conclude with a brief overview of the line of work on {\em computational complexity} of combinatorial auctions. In this setting, the resource of interest is the {\em running-time} of the bidders and the seller during the mechanism. The VCG mechanism again shows that poly-time truthful mechanisms are as powerful as poly-time algorithms in the restricted settings where precise welfare maximization is poly-time tractable. 

Interestingly, welfare-maximization is already inapproximable in poly-time better than $\Theta(m^{-1/2})$ for XOS or subadditive valuations (unless $\bm{\mathrm{P}} = \bm{\mathrm{NP}}$), and again a VCG-based truthful mechanism matches this guarantee~\cite{DobzinskiNS10}. Note the distinction to the communication model, where XOS and subadditive valuations admit a poly-communication constant-factor approximation.

In the computational model, submodular valuations are the sweet spot where constant-factor poly-time approximations exist (but not poly-time exact solutions). Specifically, there is a poly-time $(1-1/e)$-approximation~\cite{Vondrak08}, which is optimal assuming $\bm{\mathrm{P}} = \bm{\mathrm{NP}}$~\cite{MirrokniSV08}. Yet, no (randomized) poly-time truthful mechanism can guarantee a $m^{-1/2+\varepsilon}$-approximation for any $\varepsilon > 0$ (unless $\bm{\mathrm{NP}} \subseteq \bm{\mathrm{RP}}$). Details about this separation can be found in the line of work due to~\cite{Vondrak08, MirrokniSV08, Dobzinski11, DughmiV11, DobzinskiV12b, DobzinskiV12a, DobzinskiV16}.

While this line of works in the computational model is quite impressive, we briefly note one major aspect which is better captured by the communication model. Some algorithms/mechanisms are poly-time as long as the bidders can implement \emph{demand queries}.\footnote{A demand query takes as input a price vector $\vec{p}$ and output the set $\arg\max_S\{v(S) - \sum_{i \in S} p_i\}$.} This includes the $(1-1/e)$-approximation algorithm for XOS valuations~\cite{DobzinskiS06}, the $\nicefrac{1}{2}$-approximation algorithm for subadditive valuations~\cite{Feige09}, and the $O((\log\log m)^{-3})$-approximation truthful mechanism for XOS valuations~\cite{AssadiS19} as well as subadditive valuations~\cite{AssadiKS20}. However, none of these algorithms/mechanisms are ``truly poly-time'' (unless $\bm{\mathrm{NP}} \subseteq \bm{\mathrm{RP}}$), as demand-queries are \bm{\mathrm{NP}}-hard even for submodular valuations.

This means that computational lower bounds \emph{do not} rule out poly-time approximations with demand-queries, and indeed the aforementioned algorithms/mechanisms outperform known computational lower bounds. Put another way, the computational model declares these algorithms/mechanisms to be not poly-time \emph{only because the computational model assumes that bidders cannot choose a set to purchase from a simple pricing scheme in poly-time}. Communication lower bounds do not face this issue, as bidders can clearly \emph{state} the set they wish to purchase with $m$ bits. Along these lines, our results are also the first lower bounds separating what is achievable for algorithms and truthful mechanisms with polynomially-many demand queries. We refer the reader to~\cite{CaiTW20} or~\cite{BravermanMW18} for a deeper comparison of the two models.

\subsection{Roadmap}
In Section~\ref{sec:caprelim}, we provide the minimum preliminaries necessary to state our main result, and to follow with a detailed proof overview in Section~\ref{sec:sketch}. Afterwards, we provide thorough preliminaries necessary for our proofs in Section~\ref{sec:prelim2}, followed by a complete description of our construction in Section~\ref{sec:dist}, and its analysis in Section~\ref{sec:lower}. Appendix~\ref{sec:info} contains the basic information theory tools we use in this paper. 

\section{Problem Statement and Main Result} \label{sec:caprelim}
We first formally define the setting of two player combinatorial auctions.
Let $m > 0$ denote the number of items, and $\mathcal{V}$ be a non-empty set of functions from $\mathbbm{2}^{[m]}$ to  $\mathbb{R}$. A deterministic protocol $\Pi$ for the $m$-item, $\mathcal{V}$-combinatorial auction problem with two bidders is formally specified by the following five functions:
\begin{itemize}
\item $f^\alice$ determines Alice's behavior in the protocol. Specifically, $f^\alice$ takes as input Alice's valuation function $v^\alice \in \mathcal{V}$, and the transcript $ \sigma^{\alice} \in (\{0,1\}^*)^*$ of communication with the Seller she has seen so far, and decides which message (in $\{0,1\}^*$) to next send the Seller. Alice communicates exclusively with the Seller (and not directly with Bob). 
\item $f^\bob$ determines Bob's behavior in the protocol. Similarly, $f^\bob$ takes as input Bob's valuation function $v^\bob \in \mathcal{V}$, and the transcript $\sigma^{\bob} \in (\{0,1\}^*)^*$ of communication with the Seller he has seen so far, and decides which message (in $\{0,1\}^*$) to next send the Seller. Bob communicates exclusively with the Seller (and not directly with Alice).
\item $f^\seller$ determines the Seller's behavior in the protocol. $f^\seller$ takes as input the transcripts $\sigma^{\alice \to \seller},\sigma^{\bob \to \seller} \in (\{0,1\}^*)^*$ it has seen so far, and selects a pair $\{0,1\}^*\times \{0,1\}^* \cup \{(\bot,\bot)\}$ to send. When $\bot$ is sent to both parties, the communication ends.

\item $\alloc$ determines how to allocate the items, once the communication has concluded. Specifically, $\alloc$ takes as input the entirety of Alice's and Bob's communication with the auctioneer (which is in $(\{0,1\}^*)^* \times (\{0,1\}^*)^*$) and selects a pair of sets $(O^\alice,O^\bob) \in \mathbbm{2}^{[m]} \times \mathbbm{2}^{[m]}$, satisfying $O^\alice \cap O^\bob = \emptyset$, to award Alice and Bob respectively. 

\item $\price$ determines how to charge prices, once the communication has concluded. Similarly, $\price$ takes as input the entirety of Alice's and Bob's communication with the Seller (which is in $(\{0,1\}^*)^* \times (\{0,1\}^*)^*$) and selects a pair of prices $(p^\alice,p^\bob) \in \mathbb{R}\times \mathbb{R}$ to charge Alice and Bob, respectively.
\end{itemize}
 
Observe that the functions $f^{\seller}, \alloc, \price$ output a pair (a message/set/price for Alice, and another for Bob). We shall use $f^{\seller \to \alice}$ (respectively, $f^{\seller \to \bob}$) to denote the function that outputs only the message to send to Alice (respectively, the message to send to Bob). We define the functions $\alloc^{\alice}, \alloc^{\bob}, \price^{\alice}, \price^{\bob}$ analogously. We also define a randomized protocol to be a distribution over deterministic protocols.

\paragraph{Execution of a Protocol.} A deterministic, $m$-item, $\mathcal{V}$-combinatorial auction $\Pi = (f^{\alice}, f^{\bob}, f^{\seller}, \alloc, \price)$ takes place as follows: At the beginning of the protocol, the Seller has $m$ items for sale and Alice and Bob have functions $v^{\alice} \in \mathcal{V}$ and $v^{\bob} \in \mathcal{V}$ respectively as input. The protocol takes place in multiple rounds, where before round $i$, for $i > 0$, it holds that Alice has received a transcript $\sigma^{\alice}_{<i} \in \left(\{0,1\}^*\right)^{i-1}$ from the Seller, Bob has received a transcript $\sigma^{\bob}_{<i} \in \left(\{0,1\}^*\right)^{i-1}$ from the Seller, and the Seller has received transcripts  $\sigma^{\alice \to \seller}_{<i}, \sigma^{\bob \to \seller}_{<i} \in \left(\{0,1\}^*\right)^{i-1}$ from Alice, Bob respectively.

In round $i$, Alice and Bob send messages $\sigma^{\alice \to \seller}_{i} = f^{\alice}(v^{\alice}, \sigma^{\alice}_{<i})$ and $\sigma^{\bob \to \seller}_{i} = f^{\bob}(v^{\bob}, \sigma^{\bob}_{<i})$ to the Seller respectively. The Seller appends these to the transcripts $\sigma^{\alice \to \seller}_{<i}, \sigma^{\bob \to \seller}_{<i}$ to get transcripts $\sigma^{\alice \to \seller}_{\leq i}, \sigma^{\bob \to \seller}_{\leq i} \in \left(\{0,1\}^*\right)^{i}$. Thereafter, the seller sends a message  $\sigma^{\alice}_{i} = f^{\seller \to \alice}(\sigma^{\alice \to \seller}_{\leq i}, \sigma^{\bob \to \seller}_{\leq i})$ to Alice and a message $\sigma^{\bob}_{i} = f^{\seller \to \bob}(\sigma^{\alice \to \seller}_{\leq i}, \sigma^{\bob \to \seller}_{\leq i})$ to Bob.

If $(\sigma^{\alice}_{i}, \sigma^{\bob}_{i}) \neq (\bot, \bot)$, then  Alice (resp. Bob) append $\sigma^{\alice}_{i}$ to $\sigma^{\alice}_{<i}$ (resp. $\sigma^{\bob}_{i}$ to $\sigma^{\bob}_{<i}$) to get transcript $\sigma^{\alice}_{\leq i}$ (resp. $\sigma^{\bob}_{\leq i}$) and continue round $i+1$ of the protocol. On the other hand, if $(\sigma^{\alice}_{i}, \sigma^{\bob}_{i}) = (\bot, \bot)$, then the protocol {\em terminates} after round $i$ and no further communication takes place. The Seller outputs an allocation $(O^{\alice}, O^{\bob}) = \alloc(\sigma^{\alice \to \seller}_{\leq i}, \sigma^{\bob \to \seller}_{\leq i})$, and prices $(p^{\alice}, p^{\bob}) = \price(\sigma^{\alice \to \seller}_{\leq i}, \sigma^{\bob \to \seller}_{\leq i})$.

Observe that, if $\Pi$ is deterministic, then, the values of $(O^{\alice}, O^{\bob}) $ and $(p^{\alice}, p^{\bob})$ are completely determined by $\Pi$ and the inputs $v^{\alice}, v^{\bob}$ to Alice and Bob respectively. We sometimes denote these values by $ (O^{\alice}, O^{\bob})  = \alloc_{\Pi}(v^{\alice}, v^{\bob})$ and $ (p^{\alice}, p^{\bob})  = \price_{\Pi}(v^{\alice}, v^{\bob})$. We will also use the shorthand $O^{\alice}  = \alloc^{\alice}_{\Pi}(v^{\alice}, v^{\bob})$, {\em etc.}

\paragraph{Properties of a Protocol.} We consider the following parameters of a protocol:
\begin{itemize}
\item {\bf Rounds:} For a deterministic protocol $\Pi$, and $ v^{\alice}, v^{\bob} \in \mathcal{V}$, define $R_{\Pi}(v^{\alice}, v^{\bob}) = R$ if the execution of $\Pi$ when Alice and Bob have inputs $v^{\alice}, v^{\bob}$ respectively terminates after round $R$. If the execution does not terminate at all, then we define $R_{\Pi}(v^{\alice}, v^{\bob}) =  \infty$.

We say that $\Pi$ has $R$ rounds if, for all $ v^{\alice}, v^{\bob} \in \mathcal{V}$, we have $R_{\Pi}(v^{\alice}, v^{\bob}) = R$. A randomized protocol has $R$ rounds if all the deterministic protocols in its support have $R$ rounds. If a deterministic or randomized protocol has exactly $1$ round, then, we say that the protocol is {\em simultaneous}.

To emphasize, in a simultaneous protocol, Alice and Bob each send exactly one message. The Seller does not send any messages. Then, an allocation is determined only as a function of these messages. 

\item {\bf Communication complexity:} For a deterministic protocol $\Pi$, and $ v^{\alice}, v^{\bob} \in \mathcal{V}$, we define $\cc_{\Pi}(v^{\alice}, v^{\bob}) = \infty$ if $R_{\Pi}(v^{\alice}, v^{\bob}) = \infty$. On the other hand, if $R_{\Pi}(v^{\alice}, v^{\bob}) = R < \infty$, then we define
\[
\cc_{\Pi}(v^{\alice}, v^{\bob}) =
\sum_{i \leq R} \len(\sigma^{\alice \to \seller}_{i}) + \len(\sigma^{\bob \to \seller}_{i}) + \sum_{i < R} \len(\sigma^{\alice}_{i}) + \len(\sigma^{\bob}_{i}) 
.\] 
In the above equation, the values $\sigma^{\alice \to \seller}_{i}$, $\sigma^{\bob \to \seller}_{i}$, {\em etc.} denote the corresponding values in an execution of $\Pi$ when Alice has input $v^{\alice}$ and Bob has input $v^{\bob}$. These values are well defined as $\Pi$ is deterministic.

We define $\cc(\Pi) =  \max_{v^{\alice}, v^{\bob} \in \mathcal{V}} \cc_{\Pi}(v^{\alice}, v^{\bob})$. Finally we define $\cc(\Pi')$, for a randomized protocol $\Pi'$  to be the largest value of $\cc(\Pi)$ for all deterministic protocols $\Pi$ in its support.

\item {\bf Truthfulness:} We say that a deterministic protocol $\Pi$ is truthful if for all $v^{\alice}, v^{\bob}, v' \in \mathcal{V}$, following the protocol is an \emph{ex-post Nash}. Formally:
\begin{align*}
v^{\alice}(\alloc^{\alice}_{\Pi}(v^{\alice}, v^{\bob})) - \price^{\alice}_{\Pi}(v^{\alice}, v^{\bob}) &\geq v^{\alice}(\alloc^{\alice}_{\Pi}(v', v^{\bob})) - \price^{\alice}_{\Pi}(v', v^{\bob})\\
v^{\bob}(\alloc^{\bob}_{\Pi}(v^{\alice}, v^{\bob})) - \price^{\bob}_{\Pi}(v^{\alice}, v^{\bob}) &\geq v^{\bob}(\alloc^{\bob}_{\Pi}(v^{\alice}, v')) - \price^{\bob}_{\Pi}(v^{\alice}, v')
\end{align*}

We say that a randomized protocol is \emph{universally truthful} if all the deterministic mechanism in its support are truthful. To clearly emphasize the distinction between protocols which are truthful and not truthful, we will often refer to a truthful protocol as a (truthful) \emph{mechanism}, and one which is not necessarily truthful as an \emph{algorithm}.

\item {\bf Approximation guarantee:} For $m, \mathcal{V}$ as above and $v^{\alice}, v^{\bob} \in \mathcal{V}$, define the function $\opt(v^{\alice}, v^{\bob}) = \max_{S^{\alice}, S^{\bob} \subseteq [m]: S^{\alice} \cap  S^{\bob} = \emptyset}v^{\alice}(S^{\alice}) + v^{\bob}(S^{\bob})$. Let $\nu$ be a distribution over pairs drawn from $\mathcal{V}$ and $\alpha, p > 0$. We say that a deterministic mechanism $\Pi$ is $\alpha$-approximate over $\nu$ with probability $p$ if we have 
\[
\Pr_{(v^{\alice}, v^{\bob}) \sim \nu}\left(v^{\alice}(\alloc^{\alice}_{\Pi}(v^{\alice}, v^{\bob})) + v^{\bob}(\alloc^{\bob}_{\Pi}(v^{\alice}, v^{\bob})) > \alpha \cdot \opt(v^{\alice}, v^{\bob}) \right) \geq p.
\]

We further say that a randomized mechanism $\Pi'$ is $\alpha$-approximate with probability $p$ if for all $v^{\alice}, v^{\bob} \in \mathcal{V}$, we have:
\[
\Pr_{\Pi} \left(v^{\alice}(\alloc^{\alice}_{\Pi}(v^{\alice}, v^{\bob})) + v^{\bob}(\alloc^{\bob}_{\Pi}(v^{\alice}, v^{\bob})) > \alpha \cdot \opt(v^{\alice}, v^{\bob}) \right) \geq p,
\]
where the probability is over all deterministic mechanisms $\Pi$ in the support of $\Pi'$.

\end{itemize}

\subsection{Formal Statement of Our Main Result} \label{sec:mainformal}

We now formalize our main result. For $m > 0$, let $\bxos_m$ be the class of all Binary-XOS functions on $m$ items. That is, $\bxos_m$ denotes the set of all $v : \mathbbm{2}^{[m]} \to \mathbb{R}$ such that there exists a collection $\mathcal{C} \subseteq \mathbbm{2}^{[m]}$, such that for all $S \in \mathbbm{2}^{[m]}$, $v(S) =  \max_{C \in \mathcal{C}} \{\lvert{S \cap C}\rvert\}$. Define also $\xos_m \supseteq \bxos_m$ to be the class of all XOS functions on $m$ items. That is, $\xos_m$ denotes the set of all $v: \mathbbm{2}^{[m]} \to \mathbb{R}$ such that there exists a collection $\mathcal{C} \subseteq \mathbb{R}_+^{m}$, such that for all $S \in \mathbbm{2}^{[m]}$, we have $v(S) =  \max_{c \in \mathcal{C}} \{\sum_{i \in S} c_i\}$.

\begin{theorem}[Main Result]
\label{thm:mainformal} 
There exists a constant $\beta > 0$ such that for all $\varepsilon > 0$, there is an $m_0 > 0$ satisfying the following: For all  $m > m_0$, any randomized, $m$-item, $\xos_m$-combinatorial auction $\Pi$ with two bidders and one seller that is universally truthful and $\left(\nicefrac{3}{4} - \nicefrac{1}{240} + \varepsilon \right)$-approximate with probability $\nicefrac{1}{2} + \exp( - \beta \varepsilon^2 \cdot m)$ satisfies
\[
\cc(\Pi) \geq \exp(\beta \varepsilon^2 \cdot m).
\]
\end{theorem}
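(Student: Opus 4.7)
The plan is to apply Yao's minimax principle to reduce the universally-truthful randomized lower bound to a distributional lower bound against deterministic truthful mechanisms, and then invoke Dobzinski's framework (\autoref{thm:dobzinski}) which converts any deterministic truthful mechanism for two-bidder XOS into a deterministic \emph{simultaneous} algorithm with comparable communication. Since $\bxos_m \subseteq \xos_m$, it suffices to exhibit a single hard distribution $\nu$ over pairs of binary-XOS functions such that every deterministic simultaneous algorithm with communication $\exp(o(\varepsilon^2 \cdot m))$ achieves a $(\nicefrac{3}{4} - \nicefrac{1}{240} + \varepsilon)$-approximation on a draw from $\nu$ with probability at most $\nicefrac{1}{2} + \exp(-\Omega(\varepsilon^2 \cdot m))$. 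Constructing and analyzing $\nu$ on $\bxos_m$ is the technical heart of the proof.

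Next, I would construct $\nu$ by building on the tight pair-revealing construction underlying the $\nicefrac{3}{4}$ simultaneous upper bound of [BravermanMW18]. Concretely, I would partition $[m]$ into $k = \Theta(\varepsilon^2 \cdot m)$ blocks, and within each block plant a hidden, uniformly random ``alignment bit'' that determines a pairing structure between Alice's and Bob's binary-XOS clauses. The distribution is designed so that (i) the marginal valuation of either player alone is essentially uniform over a large family of binary-XOS functions (so no party can by itself localize the hidden bits), (ii) the optimal welfare on each block crucially depends on identifying this hidden bit correctly, and (iii) there are two candidate allocations for each block, corresponding to the two values of the hidden bit, whose welfare ratio yields exactly the extra $\nicefrac{1}{240}$ loss when the wrong one is chosen. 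The combinatorial constants arise from a per-block linear program determining the best welfare a fixed pair of simultaneous messages can guarantee over a uniform hidden bit.

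The analysis is then an information-theoretic lower bound showing that, for any deterministic simultaneous protocol with total communication $c$, Alice's and Bob's messages together carry at most $c/k$ bits of information about each block's hidden bit in expectation. By Pinsker's inequality, this bounds the per-block probability of correctly identifying the hidden bit by $\nicefrac{1}{2} + O(\sqrt{c/k})$, and by a block-independence + concentration argument (using the tools of \autoref{sec:info}) the fraction of blocks correctly resolved is sharply concentrated around this value. The per-block welfare loss then aggregates to an overall approximation ratio of at most $\nicefrac{3}{4} - \nicefrac{1}{240} + O(\sqrt{c/k})$, and setting $c \leq \exp(\beta \varepsilon^2 \cdot m)$ for sufficiently small $\beta > 0$ keeps this below $\nicefrac{3}{4} - \nicefrac{1}{240} + \varepsilon$ with overwhelming probability.

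The main obstacle will be engineering the per-block binary-XOS gadget so that the deterministic-vs-randomized gap is a positive constant at all (rather than $0$, as [BravermanMW18] show is the case for the natural pair-revealing distribution), while keeping Alice's and Bob's marginal distributions symmetric enough that no one-sided message reveals the hidden bit. In particular, the gadget must (a) lie in $\bxos_m$ so that Dobzinski's reduction applies, (b) admit a $\nicefrac{3}{4}$-approximation when the hidden bit is guessed correctly, and (c) drop to approximation $\nicefrac{3}{4}-\nicefrac{1}{240}$ under a uniformly random guess by a simultaneous protocol. Pinning down that the best constant obtainable from such a gadget is exactly $\nicefrac{1}{240}$ will require a careful optimization over the space of binary-XOS clause patterns and the space of allocations compatible with a fixed pair of simultaneous messages, which I expect to be the most delicate part of the proof.
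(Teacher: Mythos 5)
Your high-level framing is correct: reduce to deterministic protocols via Yao's principle, invoke \autoref{thm:dobzinski} to pass from truthful mechanisms to simultaneous algorithms, and then prove a distributional lower bound on $\bxos_m$ for deterministic simultaneous protocols. That much matches the paper exactly.

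However, the construction and analysis you propose contain a gap that would prevent you from obtaining an \emph{exponential} communication lower bound. You partition $[m]$ into $k = \Theta(\varepsilon^2 m)$ \emph{blocks}, plant one hidden alignment bit per block, and then argue that a protocol with total communication $c$ reveals about $c/k$ bits of information about each block's hidden bit, so the per-block success probability is at most $\tfrac{1}{2}+O(\sqrt{c/k})$. For the aggregate welfare loss to stay below $\tfrac{3}{4}-\tfrac{1}{240}+\varepsilon$ you need $\sqrt{c/k} \lesssim \varepsilon$, i.e.\ $c \lesssim \varepsilon^2 k = \Theta(\varepsilon^4 m)$. That is a \emph{polynomial} lower bound, not the $\exp(\Omega(\varepsilon^2 m))$ bound claimed in \autoref{thm:mainformal}. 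If you were to set $c = \exp(\beta\varepsilon^2 m)$ as in the theorem statement, then $c/k$ is already exponentially large and your per-block information bound is vacuous; the Pinsker step gives nothing. A direct-sum over polynomially many (let alone linearly many) independent block gadgets cannot yield an exponential lower bound: the adversary can simply spend $O(1)$ bits per block and resolve every bit. You flag the difficulty of designing a gadget with a deterministic-vs-randomized gap, but the more fundamental obstruction is that any construction with polynomially many ``hiding places'' for the relevant secret is too easy.

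The paper sidesteps this by making the number of candidate hiding places \emph{exponential}. There is a single hidden bit $\theta$ and a single hidden index $i_\star \in [n]$ with $n = \exp(\varepsilon^2 m / 100)$, where $i_\star$ marks which of $n$ clause pairs is the ``special'' one. The construction takes \emph{two correlated copies} of a generalized \cite{BravermanMW18} instance living on the \emph{same} item set, carefully correlated (Definitions~\ref{def:compat}, \ref{def:mu}, \ref{def:mustar}) so that all cross-terms behave like regular terms; \autoref{lemma:nu} then shows any $(\nicefrac{3}{4}-\nicefrac{1}{240}+\varepsilon)$-welfare allocation forces the Seller to know $\theta$. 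The crucial information-theoretic step is \autoref{lemma:istar}: $i_\star$ is independent of each player's marginal input $\Upsilon^\alice$ (resp.\ $\Upsilon^\bob$), so by the averaging argument in \autoref{lemma:indexinfo} a simultaneous message of length $c$ reveals only $O(c/n)$ bits about $\theta = r^\alice_{i_\star}$. Since $n$ is exponential, one gets the exponential bound. To rescue your approach you would have to replace the $k$-block decomposition with a mechanism by which the secret is hidden in one of exponentially many locations indistinguishable from one another on each side alone — which is exactly the role played by the $n$ indistinguishable clause pairs and $i_\star$ in the paper.
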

Note, of course, that deterministic protocols are a special case of randomized protocols, so Theorem~\ref{thm:mainformal} also applies to deterministic mechanisms. Combining this with the deterministic $\nicefrac{3}{4}$-approximation for $\xos_m$ which uses only $\poly(m)$ communication~\cite{DobzinskiS06,Feige09} separates the achievable guarantees of deterministic truthful mechanisms and deterministic algorithms with poly-communication.

Our proof of~\autoref{thm:mainformal} makes use of the Taxation Complexity framework developed by~\cite{Dobzinski16b}. This framework is very rich, and has implications beyond XOS valuations, and beyond two-player auctions. We state below the only case of the framework necessary for our main results, and refer the reader to~\cite{Dobzinski16b} for the full framework.

\begin{theorem}[\cite{Dobzinski16b}]
\label{thm:dobzinski}
There exists a polynomial $P(\cdot)$ such that for all $m,p, \alpha > 0$ and all randomized, $m$-item, $\xos_m$-combinatorial auction $\Pi$ with two bidders and one seller that are universally truthful and $\alpha$-approximate with probability $p$, there is a randomized, $m$-item, $\xos_m$-combinatorial auction $\Pi'$ with two bidders and one seller that is simultaneous and $\alpha$-approximate with probability $p$, and satisfies $\cc(\Pi') \leq P(\max(\cc(\Pi), m))$.
\end{theorem}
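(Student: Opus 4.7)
The plan is to establish (or import) Dobzinski's taxation complexity framework for truthful mechanisms. The core observation is that in any truthful mechanism, each bidder's allocation-price pair must be utility-maximizing over a ``menu'' of outcomes that depends only on the other bidder's valuation and the protocol, not on the bidder's own valuation. If the family of such menus admits succinct descriptions, then the mechanism can be simulated by a simultaneous protocol in which each bidder pre-commits their best response to every candidate menu.

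First, I would formalize the menu. For a deterministic truthful $\Pi$ and fixed $v^{\bob}$, define
\[
M_{\alice}(v^{\bob}, \Pi) \;=\; \big\{\,(\alloc^{\alice}_{\Pi}(v', v^{\bob}),\, \price^{\alice}_{\Pi}(v', v^{\bob})) \,:\, v' \in \xos_m\,\big\},
\]
and symmetrically $M_{\bob}(v^{\alice}, \Pi)$. Truthfulness (ex-post Nash) forces Alice's actual outcome on input $(v^{\alice}, v^{\bob})$ to maximize $v^{\alice}(S) - p$ over $M_{\alice}(v^{\bob}, \Pi)$, and similarly for Bob. Then I would bound the \emph{taxation complexity}, i.e.\ the number of distinct menus either bidder can face, by $2^{\cc(\Pi)}$: each menu is determined by the subtree of transcripts reachable while holding $v^{\bob}$ fixed and varying $v^{\alice}$, and there are at most $2^{\cc(\Pi)}$ such subtrees.

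Next, I would construct $\Pi'$. Each bidder simultaneously sends a sketch that, for each candidate menu they could face, records their utility-maximizing bundle (together with enough auxiliary information for the seller to simulate the transcript-level interaction). A naive sketch is of size $2^{\cc(\Pi)} \cdot m$, which is exponential. The XOS-plus-two-bidder structure is what enables a polynomial sketch: one shows that each relevant menu admits a $\poly(\cc(\Pi), m)$-size description by a list of item-level price vectors (plus outside options), and that for XOS valuations, a bidder's response to any such menu is determined by their demand responses to $\poly(\cc(\Pi), m)$ price vectors. The seller, on receiving both sketches, simulates $\Pi$ round by round, at each step reading off what each bidder would have said from their sketch; the resulting transcript coincides with the true run of $\Pi$ on $(v^{\alice}, v^{\bob})$, so the allocation, prices, and hence approximation guarantee are preserved. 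For universally truthful randomized $\Pi$, apply the reduction to each deterministic protocol in the support, yielding $\Pi'$ of the same nature and with $\cc(\Pi') \leq P(\max(\cc(\Pi), m))$.

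The main obstacle is the succinct-menu step. A priori, a menu is an arbitrary subset of $\mathbbm{2}^{[m]} \times \mathbb{R}$ of size $2^{\cc(\Pi)}$, and could in principle require exponentially many bits to encode; the proof must exploit (i) the fact that in a truthful mechanism the ``active'' portion of the menu is essentially an item-level pricing scheme, and (ii) the XOS decomposability into additive clauses, to argue that a bidder's best response on the menu is pinned down by demand queries at $\poly(\cc(\Pi), m)$ price vectors. Handling the interlocking nature of the two bidders' menus (each depends on the other's responses), managing tie-breaking in XOS demand queries, and tracking the polynomial blow-up uniformly in $\alpha$ and $p$ are the delicate parts of Dobzinski's framework that the simulation rests on.
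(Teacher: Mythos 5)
The paper does not prove \Cref{thm:dobzinski}; it imports it verbatim from~\cite{Dobzinski16b} and explicitly refers the reader there for the full framework. So there is no ``paper's own proof'' to compare against, and the question is whether your sketch is a faithful reconstruction of Dobzinski's argument.

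Your framing via menus is correct: ex-post truthfulness forces Alice's outcome to be utility-maximizing over a menu $M_{\alice}(v^{\bob},\Pi)$ depending only on $v^{\bob}$, and there are at most $2^{\cc(\Pi)}$ distinct such menus. But the entire content of the theorem lives in the step you flag as the ``main obstacle,'' and what you write there is an assertion rather than an argument---and, as stated, it is not obviously true. You claim each relevant menu ``admits a $\poly(\cc(\Pi),m)$-size description by a list of item-level price vectors'' and that an XOS bidder's best response to any menu it might face is pinned down by $\poly(\cc(\Pi),m)$ demand queries. Menus of arbitrary truthful two-bidder combinatorial auctions are general sets of (bundle, price) pairs; nothing in the ex-post Nash condition forces prices to be additive over items, so the reduction to a short list of item-price vectors is exactly the nontrivial phenomenon Dobzinski's taxation-complexity theorem establishes, not an observation one can take for granted. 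Without it, your proposed $\Pi'$ either costs $\Omega(2^{\cc(\Pi)}\cdot m)$ bits (one pre-committed response per menu) or leaves the seller unable to reconstruct a unique run of $\Pi$ from the two sketches. In short: you have set up the right reduction and correctly located where the difficulty sits, but the difficulty itself is left unproven, so this is an outline of Dobzinski's result rather than a proof of it.
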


Theorem~\ref{thm:dobzinski} provides a poly-communication reduction from simultaneous combinatorial auctions to truthful combinatorial auctions. Our main technical result is a lower bound on the simultaneous communication necessary for a randomized protocol that is $\left(\nicefrac{3}{4} - \nicefrac{1}{240} + \varepsilon \right)$-approximate with probability $\nicefrac{1}{2} + \exp( - \beta \varepsilon^2 \cdot m)$.

\begin{theorem}
\label{thm:mainred} 
For all $\varepsilon > 0$, and all  $m > \frac{10^{10}}{\varepsilon^2}$, any randomized, $m$-item, $\bxos_m$-combinatorial auction $\Pi$ with two bidders and one seller that is simultaneous and $\left(\frac{3}{4} - \frac{1}{240} + \varepsilon \right)$-approximate with probability $\frac{1}{2} + \exp\left( - \frac{ \varepsilon^2 m }{500}\right)$ satisfies
\[
\cc(\Pi) \geq \exp\left(\frac{ \varepsilon^2 m }{500}\right).
\]

\end{theorem}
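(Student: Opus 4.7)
The plan is to apply Yao's minimax principle and reduce \autoref{thm:mainred} to a distributional lower bound: it suffices to exhibit a distribution $\nu$ on pairs of binary-XOS valuations $(v^{\alice}, v^{\bob})$ such that any \emph{deterministic} simultaneous protocol $\Pi$ with $\cc(\Pi) < \exp(\varepsilon^2 m/500)$ is $(\nicefrac{3}{4} - \nicefrac{1}{240} + \varepsilon)$-approximate over $\nu$ with probability at most $\nicefrac{1}{2} + \exp(-\varepsilon^2 m/500)$. A randomized protocol satisfying the hypothesized per-input guarantee must, after averaging its coins against $\nu$, achieve the same success probability in expectation, so the distributional bound yields the theorem.

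The core construction is $\nu$ itself, which I would build as a planted-structure instance. Draw a uniformly random balanced partition $(L, R)$ of $[m]$, and give Alice a binary-XOS valuation whose clause collection $\mathcal{C}^{\alice}$ contains (i) a ``planted'' clause aligned with $L$ (i.e., the set $L$ itself) which is responsible for achieving $\opt(v^{\alice}, v^{\bob}) \approx m$, together with (ii) a large pool of ``decoy'' clauses drawn from a symmetric distribution on subsets of $[m]$ that is independent of $(L,R)$; symmetrically for Bob with respect to $R$. The design criteria for $\nu$ are that: (a) from the marginal view of a single player, the planted clause is statistically indistinguishable from the decoys, so no single message can reveal its own side of $(L,R)$; (b) any simultaneous allocation that is ``blind'' to the planted pair $(L,R)$---including the trivial half-splits and the single-bidder allocations---achieves welfare strictly below $(\nicefrac{3}{4} - \nicefrac{1}{240})\cdot\opt$ in expectation; and (c) the binary-XOS structure is preserved.

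For the quantitative lower bound, I would exploit that Alice's message $\sigma^{\alice \to \seller}$ depends only on $v^{\alice}$ and Bob's only on $v^{\bob}$, so the allocation is a product-measurable function of two independent messages. An information-theoretic argument then bounds the mutual information between $(\sigma^{\alice \to \seller}, \sigma^{\bob \to \seller})$ and the planted partition $(L,R)$: each message has entropy at most $\cc(\Pi) < \exp(\varepsilon^2 m/500)$, and by the symmetry property (a) this information is spread essentially uniformly across items, contributing only $\exp(\varepsilon^2 m/500)/m$ bits per item on average. Combining this with a per-item, direct-sum style decomposition of the welfare (in the spirit of the lower-bound technique of~\cite{BravermanMW18}) shows that the protocol's expected welfare on $\nu$ exceeds the ``blind'' baseline of property (b) by at most $\varepsilon \cdot \opt$, contradicting the assumed approximation guarantee. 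The main obstacle I anticipate is not the information-theoretic machinery but rather pinning down the explicit constant $\nicefrac{1}{240}$ in property (b): this requires a careful case analysis of how any blind allocation treats items in $L$ versus $R$ versus items the protocol cannot classify, together with an explicit optimization of the planted-to-decoy ratio in $\nu$, balancing the welfare loss of any blind allocation against the indistinguishability budget needed for the information-theoretic step; I expect this combinatorial bookkeeping and parameter tuning to be the technically most demanding aspect of the proof.
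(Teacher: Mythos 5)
Your overall framing via Yao's minimax principle and an information-theoretic argument against a planted distribution is the right shape, but the proposed construction has a fatal gap that the paper itself explicitly identifies and whose circumvention is the paper's main technical contribution. In a single-planted construction---Alice hides clause $L$ among decoys, Bob hides $R := \overline{L}$ among decoys, all clauses of size $m/2$---there is a simultaneous allocation that achieves welfare close to $\frac{3m}{4}$ while revealing essentially nothing about $(L,R)$: ask Alice to name an arbitrary clause $C$ and award it to her. Alice gets welfare $m/2$, and Bob gets $\overline{C}$ of size $m/2$; since $R$ is one of Bob's clauses, $v^{\bob}(\overline{C}) \geq \lvert R \cap \overline{C}\rvert = \frac{m}{2} - \lvert R \cap C\rvert$, which concentrates at (or above) $m/4$. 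Total welfare $\approx m/2 + m/4 = 3m/4$, contradicting your property (b). This is precisely the phenomenon the paper highlights in its discussion of~\cite{BravermanMW18}: for simultaneous protocols, hardness of \emph{detecting} the planted pair does not imply hardness of \emph{allocating}, because the non-special player silently recovers $\approx m/4$ from the complement without anyone ever learning the hidden partition. Your proposal would therefore at best recover a decision lower bound, which has no implication for the allocation problem or for Dobzinski's taxation framework.

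The missing ingredient is to run \emph{two correlated copies} of a planted instance on the same item set, so that any $\frac{3}{4}$-approximate allocation is forced to commit to which copy holds the special clause. Concretely, the paper samples compatible bases $S=(S^1,S^2)$, $T=(T^1,T^2)$ (\autoref{def:compat}), draws $n$ pairs of regular clauses, plants a single special pair at a uniformly hidden index $i_\star$, and flips a hidden bit $\theta\in\{1,2\}$ choosing the live copy. Compatibility is engineered so that every cross-term intersection (regular vs.\ regular across copies, and special vs.\ regular across copies) is bounded strictly above $m/4$ (\autoref{lemma:reg}, \autoref{lemma:special}); with independent bases or the naive intersection sizes those cross-terms drop to exactly $m/4$ and the construction collapses into the failure mode above. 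The crucial consequence (\Cref{item:nu3} of \autoref{lemma:nu}) is that no single set $Z$ can be $\bigl(\frac{3}{4}-\frac{1}{240}\bigr)$-good for both $\theta=1$ and $\theta=2$ simultaneously, so a good allocation determines $\theta$. The information step then lower-bounds $\I(\theta; \text{allocation} \mid \cdot)$---not information about the partition itself---using that $i_\star$ is independent of each player's marginal view (\autoref{lemma:istar}), so leakage about $\theta = r^{\alice}_{i_\star}$ must be spread across all $n = \exp(\Theta(\varepsilon^2 m))$ indices (\autoref{lemma:indexinfo}), giving $\I(\theta;\cdot) \leq 4\,\cc(\Pi)/n$ (\autoref{lemma:infotheta}). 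None of this appears in your sketch, and without the two-copy structure and the carefully tuned compatibility condition, property (b) is simply false.
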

We briefly compare~\Cref{thm:mainred} to Theorem~1.1 of~\cite{BravermanMW18}. Theorem~1.1 of~\cite{BravermanMW18} gives a randomized, poly-communication simultaneous algorithm which gets a $\nicefrac{3}{4}$-approximation \emph{in expectation}.~\Cref{thm:mainred} rules out randomized, poly-communication simultaneous algorithms which achieve a $\nicefrac{3}{4}$-approximation \emph{with probability slightly more than $1/2$} (including deterministic algorithms). 

For the sake of completeness, we prove~\autoref{thm:mainformal} assuming~\autoref{thm:dobzinski} and~\autoref{thm:mainred} in Appendix~\ref{app:proofs}. The remainder of the paper is devoted to proving~\autoref{thm:mainred}. By Yao's minimax principle, in order to a lower bound $\cc(\Pi)$ for randomized $m$-item simultaneous mechanisms $\Pi$ that are $\alpha$-approximate with probability $p$ (for some $m, \alpha, p$), it is sufficient to show a distribution $\nu$ over pairs of functions in $\bxos_m$, such that all deterministic simultaneous mechanisms $\Pi'$ that are $\alpha$-approximate over $\nu$ with probability $p$ have large $\cc(\Pi')$. We construct $\nu$ in~\Cref{sec:dist} and analyze it in \Cref{sec:lower}. Before this, we give a detailed sketch of our construction, and the key aspects that drive it.

\section{Detailed Proof Sketch}\label{sec:sketch}
In this section, we gradually build various aspects of our main construction and highlight the roles they play. All valuation functions for the rest of the paper will be BXOS. Recall that each Binary-XOS valuation $v$ has an associated set $\mathcal{C}$ of \emph{clauses}, such that $v(S):=\max_{T \in \mathcal{C}}\{\lvert S \cap T \rvert\}$. We shall sometimes refer to $v$ simply by its set of clauses.

As mentioned previously, our work builds off a prior construction of~\cite{BravermanMW18}, which we first describe in detail.

\subsection{The~\cite{BravermanMW18} Construction}\label{sec:bmwformal} 
\cite{BravermanMW18} also studies $\bxos$ combinatorial auctions. Their result which serves as our starting point is a lower bound on the communication required to \emph{determine} the value of the optimal achievable welfare up to a factor of $\nicefrac{3}{4}$. Importantly, though, observe that for \emph{simultaneous} protocols, hardness for the decision problem \emph{does not imply} hardness for finding an approximately-optimal allocation (and hardness for the decision problem has no implications in Dobzinski's framework). Indeed, deciding the optimal achievable welfare in the~\cite{BravermanMW18} construction better than a $(\nicefrac{3}{4}-\nicefrac{1}{108})$-approximation requires exponential communication, yet an allocation guaranteeing a $\nicefrac{3}{4}$-approximation can be found with polynomial communication! We elaborate on this after presenting their construction.\footnote{To get quick intuition for how this can be ever possible, consider the trivial reduction establishing that allocation is at least as hard as decision: first, solve the allocation problem; then, ask Alice and Bob to output their value for the allocation chosen, and solve the decision problem. This reduction requires an extra round for Alice and Bob to evaluate the solution, and so it cannot be applied simultaneously. One interpretation of~\cite{BravermanMW18} is that this extra round is necessary.}

In the construction of \cite{BravermanMW18}, the valuation functions of Alice and Bob are BXOS with exponentially many {\em regular} clauses, and may or may not include one {\em special} clause. The regular clauses are constructed so that the union of a regular clause of Alice and a regular clause of Bob has size $<\nicefrac{3m}{4}$ (and therefore, the maximum possible welfare of any allocation is $<\nicefrac{3m}{4}$ as well) while the union of a special clause of Alice and a special clause of Bob has size $m$ (and therefore the optimal allocation has welfare $m$). This means that determining the optimal welfare up to a factor of $\nicefrac{3}{4}$ (or in fact, any constant better than $\nicefrac{20}{27}$) amounts to determining whether or not Alice and Bob have special clauses.

However, in the~\cite{BravermanMW18} construction, the special clauses of Alice and Bob are indistinguishable from the regular clauses. Intuitively, determining whether or not one of their exponentially many clauses is special \emph{with a simultaneous protocol} then requires exponential communication (and this is true). We now detail the~\cite{BravermanMW18} construction.

\begin{figure}[t]

\begin{center}

\begin{tikzpicture}

\node[draw, rectangle, minimum width=20pt, minimum height=20pt, inner sep=0pt] (11){\scriptsize 1}; 
\foreach \x in {2,...,6}
{
	\pgfmathtruncatemacro{\prev}{\x - 1}
	\pgfmathtruncatemacro{\name}{1\prev}
%	\ifthenelse{\x=9}{
%	\node[draw, rectangle, minimum width=20pt, minimum height=20pt, inner sep=0pt] (1\x) [right= 0.4cm of \name]{\scriptsize \x}; 
%	}{
	\node[draw, rectangle, minimum width=20pt, minimum height=20pt, inner sep=0pt] (1\x) [right=5pt of \name]{\scriptsize \x}; 
%	}
}

\node[draw, rectangle, minimum width=20pt, minimum height=20pt, inner sep=0pt] (21)[below=0.4cm of 11]{$\checkmark$}; 
\foreach \x in {2,...,6}
{
	\pgfmathtruncatemacro{\prev}{\x - 1}
	\pgfmathtruncatemacro{\name}{2\prev}
	
%	\ifthenelse{\x=9}{
%	\node[draw, rectangle, minimum width=20pt, minimum height=20pt, inner sep=0pt] (2\x) [right= 0.4cm of \name]{$\checkmark$}; 
%	}{
	\ifthenelse{\x=1 \OR \x=3 \OR \x=4}{
	\node[draw, rectangle, minimum width=20pt, minimum height=20pt, inner sep=0pt] (2\x) [right=5pt of \name]{$\checkmark$}; 
	}{
	\node[draw, rectangle, minimum width=20pt, minimum height=20pt, inner sep=0pt] (2\x) [right=5pt of \name]{};
	} 
%	}
}
\node (S1) [left=0.25cm of 21]{$S:$};

\node[draw, rectangle, minimum width=20pt, minimum height=20pt, inner sep=0pt] (31)[below=0.05cm of 21]{}; 
\foreach \x in {2,...,6}
{
	\pgfmathtruncatemacro{\prev}{\x - 1}
	\pgfmathtruncatemacro{\name}{3\prev}
	
%	\ifthenelse{\x=9}{
%	\node[draw, rectangle, minimum width=20pt, minimum height=20pt, inner sep=0pt] (3\x) [right= 0.4cm of \name]{$\checkmark$}; 
%	}{
	\ifthenelse{\x=2 \OR \x=3 \OR \x=4}{
	\node[draw, rectangle, minimum width=20pt, minimum height=20pt, inner sep=0pt] (3\x) [right=5pt of \name]{$\checkmark$}; 
	}{
	\node[draw, rectangle, minimum width=20pt, minimum height=20pt, inner sep=0pt] (3\x) [right=5pt of \name]{};
	}
%	} 
}
\node (T1) [left=0.25cm of 31]{$T:$};

\node[draw, rectangle, minimum width=20pt, minimum height=20pt, inner sep=0pt] (41)[below=0.4cm of 31]{$\checkmark$}; 
\foreach \x in {2,...,6}
{
	\pgfmathtruncatemacro{\prev}{\x - 1}
	\pgfmathtruncatemacro{\name}{4\prev}
	
%	\ifthenelse{\x=9}{
%	\node[draw, rectangle, minimum width=20pt, minimum height=20pt, inner sep=0pt] (4\x) [right= 0.4cm of \name]{$\checkmark$}; 
%	}{
	\ifthenelse{\x=1 \OR \x=3 \OR \x=5}{
	\node[draw, rectangle, minimum width=20pt, minimum height=20pt, inner sep=0pt] (4\x) [right=5pt of \name]{$\checkmark$}; 
	}{
	\node[draw, rectangle, minimum width=20pt, minimum height=20pt, inner sep=0pt] (4\x) [right=5pt of \name]{};
	} 
%	}
}
\node (S2) [left=0.25cm of 41]{$A:$};

\node[draw, rectangle, minimum width=20pt, minimum height=20pt, inner sep=0pt] (51)[below=0.05cm of 41]{}; 
\foreach \x in {2,...,6}
{
	\pgfmathtruncatemacro{\prev}{\x - 1}
	\pgfmathtruncatemacro{\name}{5\prev}
	
%	\ifthenelse{\x=9}{
%	\node[draw, rectangle, minimum width=20pt, minimum height=20pt, inner sep=0pt] (5\x) [right= 0.4cm of \name]{$\checkmark$}; 
%	}{
	\ifthenelse{\x=2 \OR \x=4 \OR \x=6}{
	\node[draw, rectangle, minimum width=20pt, minimum height=20pt, inner sep=0pt] (5\x) [right=5pt of \name]{$\checkmark$}; 
	}{
	\node[draw, rectangle, minimum width=20pt, minimum height=20pt, inner sep=0pt] (5\x) [right=5pt of \name]{};
	} 
%	}
}
\node (T2) [left=0.25cm of 51]{$B:$};

\end{tikzpicture}
\end{center}

\caption{The construction of \cite{BravermanMW18}. Each of the numbers $1$ to $6$ represents a group of~$\frac{m}{6}$ items.}
\label{fig:bmw}
\end{figure}

\subsubsection{The structure of the clauses in \cite{BravermanMW18}} 

\paragraph{Step One: Select a Basis.} For the~\cite{BravermanMW18} construction, a \emph{basis} is a pair of sets $(S,T)$ such that $|S| = |T|= m/2$, and also $|S \cap T| = m/3$. In the~\cite{BravermanMW18} construction, a basis $(S,T)$ is sampled uniformly at random from all possible bases. Alice knows $S$ and Bob knows $T$ (Alice does not know $T$, but has a Bayesian posterior conditioned on $S$ and the fact that $(S,T)$ is a uniformly random basis). We provide an illustration of one possible basis in~\autoref{fig:bmw} where each of the six blocks in a row represents a group of $\nicefrac{m}{6}$ items.  

\paragraph{Step Two: Draw Regular Clauses.} Alice's regular clauses are constructed by uniformly sampling sets of size $\nicefrac{m}{2}$ that have intersection exactly $\nicefrac{m}{3}$ with~$S$, and Bob's regular clauses are constructed by uniformly sampling sets of size $\nicefrac{m}{2}$ that have intersection exactly $\nicefrac{m}{3}$ with~$T$. Constructing the regular clauses this way satisfies the following first key property: \emph{The union of a regular clause of Alice and a regular clause of Bob has size strictly less than $\nicefrac{3m}{4}$} (in fact, at most $\nicefrac{20m}{27} + \varepsilon m$ except with exponentially small probability). 

We briefly explain why (it is $<\nicefrac{3m}{4}$). As all regular clauses have size $\nicefrac{m}{2}$, it is equivalent to describe why the intersection of a regular clause of Alice and a regular clause of Bob has size strictly more than $\nicefrac{m}{4}$. Intuitively, this is because each regular clause of Alice intersects $S$ more than random, while each regular clause of Bob intersects $T$ more than random, and $S$ and $T$ intersect more than random. Put another way, if the basis $(S,T)$ instead satisfied $|S \cap T| = \nicefrac{m}{4}$, the expected size of the intersection of two {\em independently} random sets of size $\frac{m}{2}$, then, as the regular clauses of Alice and Bob are chosen independently of each other, they will also behave like independently chosen random sets and have an intersection of size $\nicefrac{m}{4}$ in expectation. In actuality, the basis $(S,T)$ has intersection of size $\nicefrac{m}{3}$, more than the expected size of the intersection of two {\em independently} random sets of size $\nicefrac{m}{2}$. Thus, the regular clauses of Alice and Bob also intersect more than random sets, {\em i.e.}, in more than $\nicefrac{m}{4}$ places. 

Importantly, observe that if we were to curtail the construction here, that the optimal welfare would be $< \nicefrac{3m}{4}$.

\paragraph{Step Three: Special Clauses.} The second key property of this construction is that we can `hide' a special clause inside the exponentially many regular clauses sampled by Alice and Bob.

To see an illustration of how a special clause is hidden amongst the regular clauses, observe the rows corresponding to the special clauses $A$ and $B$ in \autoref{fig:bmw}. The special clauses for Alice and Bob are disjoint, and their union is of size $m$. Additionally, note that $A$ intersects $S$ in $\frac{m}{3}$ places and similarly $B$ intersects $T$ in $\frac{m}{3}$ places, just like all the regular clauses. As the size of their intersections with $S$ and $T$ (respectively) are the same, Alice and Bob cannot tell the special clauses (if they are present) apart from the regular clauses. 

Importantly, observe that we can now either add or not add a pair of special clauses to their input. If we do, then the optimal achievable welfare is now $m$. If we don't, it remains $< \nicefrac{3m}{4}$. So for Alice and Bob to \emph{simultaneously} decide whether they have a special clause or not, they must somehow send information about each of their exponentially many clauses, which requires exponential communication.

\paragraph{Two Observations.} We briefly make two observations about the~\cite{BravermanMW18} construction (without proof). First, their lower bound holds only for simultaneous protocols. Indeed, Alice and Bob could first communicate $S$ and $T$ to each other in round one, and then they could declare in round two whether they have a special clause or not. In addition, observe that if we simply award to Alice the items corresponding to a uniformly random clause, this allocation achieves a $> \nicefrac{3m}{4}$-approximation with high probability! We refer the reader to~\cite{BravermanMW18} for these calculations, but note that the main idea is that \emph{Bob can have high welfare for a set because of his special clause, without communicating to the Seller that a special clause exists.} So if we award Alice a uniformly random clause, if Bob happens to have a special clause, then his welfare is at least $m/4$ (and therefore the achievd welfare is at least $3m/4$, good enough for a $\nicefrac{3}{4}$-approximation). If Bob doesn't have a special clause, then the resulting welfare is nearly-optimal. But observe that this approximation is guaranteed \emph{without needing to learn whether Bob has a special clause or not}.

This latter phenomenon is not just an artifact of precise choices in the~\cite{BravermanMW18} construction, but a genuine barrier. For example,~\cite{BravermanMW18} also designs a randomized, poly-communication simultaneous algorithm that achieves a $\nicefrac{3}{4}$-approximation in expectation. Of course, this algorithm is not deterministic, nor does it guarantee a $\nicefrac{3}{4}$-approximation with good probability (see Theorem~\ref{thm:mainformal}). But it does help convey that the allocation and decision problems are fundamentally different for simultaneous algorithms.

\subsubsection{A Minor Generalization} 
In the presented construction, we thought of each of the blocks from $1$ to $6$  in \autoref{fig:bmw} as representing a group of $\nicefrac{m}{6}$ items. However, the exact same arguments (with numerically-different calculations) would also apply to any construction where blocks $1$ and $2$ represented $u$ items, and blocks $3$ through $6$ represented $v$ items (for any $u,v$).

With these additional parameters, it turns out (we omit the calculations), that  the size of the  intersection of a regular clause of Alice and a regular clause of Bob  is:
%\[\left(\nicefrac{2u(u+v)v}{(u+2v)^2}+\nicefrac{2v(u+v)^2}{(u+2v)^2}+\nicefrac{2v^3}{(u+2v)^2}\right) = \nicefrac{2v^3 + 2u^2v + 3uv^2}{(u+2v)^3} \cdot m \]
\[ \frac{2v^3 + 2u^2v + 3uv^2}{(u+2v)^3}\cdot m.\]
The expression above is maximized when $u=v$ (as observed in~\cite{BravermanMW18}) but is strictly larger than $\nicefrac{m}{4}$ for all $u,v$ such that $u < 2v$ (to get intuition for the breakpoint: when $u =2v$, $\lvert{S \cap T}\rvert = \nicefrac{m}{4}$, and  $S,T$ behave like independently chosen sets). We will use this idea later in our construction.

\subsection{From the Decision Problem to the Allocation Problem} \label{sec:dectoalloc}

The crucial difference between~\cite{BravermanMW18} and our work is that~\cite{BravermanMW18} show that the problem of `deciding' whether or not the optimal welfare is close to $m$ is hard while we wish to show that the problem of `computing' an allocation with welfare close to the optimal is hard. As~\cite{BravermanMW18} emphasize, these problems are incomparable for simultaneous mechanisms.

Our construction is based on the following approach of going from a lower bound for the  decision problem to a lower bound for the allocation problem: Consider two copies of the~\cite{BravermanMW18} construction on disjoint sets of items, where (a uniformly chosen) one is such that Alice and Bob have the special clauses and the other one is such that Alice and Bob do not have the special clauses. \textbf{Suppose further that the Seller can only allocate items in one of the two copies.}

We claim that the decision lower bound for~\cite{BravermanMW18} implies an allocation lower bound for this artificial problem. Indeed,  the optimal welfare of the copy with the special clauses is much larger than the optimal welfare of the copy without the special clauses (by more than a factor of $\nicefrac{4}{3}$). Thus, any allocation that allocates items in only one of the two copies and gets welfare close to optimal must allocate items in the copy with the special clause. But, this requires the Seller to at least determine which copy has the special clause, which is hard due to~\cite{BravermanMW18}. The catch, of course, is that we needed to assume that the Seller can only allocate items in one of the two copies, so this is not actually an instance of the combinatorial auctions problem.

\paragraph{Cross-terms.} It remains now to transform the system with two copies and a restriction on the Seller to only allocate items in one of the two copies to a standard combinatorial auction. A first approach may be to have two bases $(S^1, T^1)$ and $(S^2, T^2)$ \emph{on the same set of items} and give Alice and Bob regular clauses generated from both the bases together with a special clause from (a uniformly random) one of the bases. 

One would then hope that just like the system described above, computing a good allocation for this system would require the Seller to implicitly determine which basis has special clause come, and maybe we can show that determining this is hard {\em \`{a} la}~\cite{BravermanMW18}.

Unfortunately, this is not actually the case. The reason is that having two bases on the same set of items gives rise to \emph{cross terms}. Specifically,  if we have two bases on the same set of items, then not only do we have to argue about the size of the union of regular clauses from basis $1$ of Alice and regular clauses from basis $1$ of Bob, but we also need to argue about the size of the union of regular clauses from basis $1$ of Alice and regular clauses from basis $2$ of Bob.

These additional unions, which we call the cross-terms, imply that the two bases must necessarily be correlated in order to avoid the issues described in \Cref{sec:bmwformal}. Namely, if the two bases are independent, then $S^1$ and $T^2$ intersect in  $\nicefrac{m}{4}$ places in expectation (like sets of size $\nicefrac{m}{2}$ chosen independently), implying in turn that the size of the union of regular clauses from basis $1$ of Alice and regular clauses from basis $2$ of Bob is $\nicefrac{3m}{4}m$ in expectation. This is too large for our lower bound, as we need the union to be of size strictly less than $\nicefrac{3m}{4}$ in expectation.

But, we do at least have a candidate approach: pick two correlated bases, and hope to find an appropriate correlation so that knowing an allocation which achieves welfare $\nicefrac{3m}{4}$ immediately determines which basis had a special clause.

\begin{figure}[t]
\begin{center}

\begin{tikzpicture}

\node[draw, rectangle, minimum width=20pt, minimum height=20pt, inner sep=0pt] (11){\scriptsize 1}; 
\foreach \x in {2,...,12}
{
	\pgfmathtruncatemacro{\prev}{\x - 1}
	\pgfmathtruncatemacro{\name}{1\prev}
	\ifthenelse{\x=9}{
	\node[draw, rectangle, minimum width=20pt, minimum height=20pt, inner sep=0pt] (1\x) [right= 0.4cm of \name]{\scriptsize \x}; 
	}{
	\node[draw, rectangle, minimum width=20pt, minimum height=20pt, inner sep=0pt] (1\x) [right=5pt of \name]{\scriptsize \x}; 
	}
}

\node[draw, rectangle, minimum width=20pt, minimum height=20pt, inner sep=0pt] (21)[below=0.4cm of 11]{$\checkmark$}; 
\foreach \x in {2,...,12}
{
	\pgfmathtruncatemacro{\prev}{\x - 1}
	\pgfmathtruncatemacro{\name}{2\prev}
	
	\ifthenelse{\x=9}{
	\node[draw, rectangle, minimum width=20pt, minimum height=20pt, inner sep=0pt] (2\x) [right= 0.4cm of \name]{$\checkmark$}; 
	}{
	\ifthenelse{\x=2 \OR \x=3 \OR \x=7 \OR \x=9 \OR \x=10}{
	\node[draw, rectangle, minimum width=20pt, minimum height=20pt, inner sep=0pt] (2\x) [right=5pt of \name]{$\checkmark$}; 
	}{
	\node[draw, rectangle, minimum width=20pt, minimum height=20pt, inner sep=0pt] (2\x) [right=5pt of \name]{};
	} 
	}
}
\node (S1) [left=0.25cm of 21]{$S^1:$};

\node[draw, rectangle, minimum width=20pt, minimum height=20pt, inner sep=0pt] (31)[below=0.05cm of 21]{}; 
\foreach \x in {2,...,12}
{
	\pgfmathtruncatemacro{\prev}{\x - 1}
	\pgfmathtruncatemacro{\name}{3\prev}
	
	\ifthenelse{\x=9}{
	\node[draw, rectangle, minimum width=20pt, minimum height=20pt, inner sep=0pt] (3\x) [right= 0.4cm of \name]{$\checkmark$}; 
	}{
	\ifthenelse{\x=4 \OR \x=7 \OR \x=5 \OR \x=6 \OR \x=9 \OR \x=10 }{
	\node[draw, rectangle, minimum width=20pt, minimum height=20pt, inner sep=0pt] (3\x) [right=5pt of \name]{$\checkmark$}; 
	}{
	\node[draw, rectangle, minimum width=20pt, minimum height=20pt, inner sep=0pt] (3\x) [right=5pt of \name]{};
	} 
	}
}
\node (S2) [left=0.25cm of 31]{$S^2:$};

\node[draw, rectangle, minimum width=20pt, minimum height=20pt, inner sep=0pt] (41)[below=0.05cm of 31]{}; 
\foreach \x in {2,...,12}
{
	\pgfmathtruncatemacro{\prev}{\x - 1}
	\pgfmathtruncatemacro{\name}{4\prev}
	
	\ifthenelse{\x=9}{
	\node[draw, rectangle, minimum width=20pt, minimum height=20pt, inner sep=0pt] (4\x) [right= 0.4cm of \name]{$\checkmark$}; 
	}{
	\ifthenelse{\x=2 \OR \x=3 \OR \x=4 \OR \x=5 \OR \x=9 \OR \x=10}{
	\node[draw, rectangle, minimum width=20pt, minimum height=20pt, inner sep=0pt] (4\x) [right=5pt of \name]{$\checkmark$}; 
	}{
	\node[draw, rectangle, minimum width=20pt, minimum height=20pt, inner sep=0pt] (4\x) [right=5pt of \name]{};
	}
	} 
}
\node (T1) [left=0.25cm of 41]{$T^1:$};

\node[draw, rectangle, minimum width=20pt, minimum height=20pt, inner sep=0pt] (51)[below=0.05cm of 41]{}; 
\foreach \x in {2,...,12}
{
	\pgfmathtruncatemacro{\prev}{\x - 1}
	\pgfmathtruncatemacro{\name}{5\prev}
	
	\ifthenelse{\x=9}{
	\node[draw, rectangle, minimum width=20pt, minimum height=20pt, inner sep=0pt] (5\x) [right= 0.4cm of \name]{$\checkmark$}; 
	}{
	\ifthenelse{\x=2 \OR \x=7 \OR \x=6 \OR \x=8 \OR \x=9 \OR \x=10}{
	\node[draw, rectangle, minimum width=20pt, minimum height=20pt, inner sep=0pt] (5\x) [right=5pt of \name]{$\checkmark$}; 
	}{
	\node[draw, rectangle, minimum width=20pt, minimum height=20pt, inner sep=0pt] (5\x) [right=5pt of \name]{};
	} 
	}
}
\node (T2) [left=0.25cm of 51]{$T^2:$};

\node[draw, rectangle, minimum width=20pt, minimum height=20pt, inner sep=0pt] (61)[below=0.4cm of 51]{$\checkmark$}; 
\foreach \x in {2,...,12}
{
	\pgfmathtruncatemacro{\prev}{\x - 1}
	\pgfmathtruncatemacro{\name}{6\prev}
	
	\ifthenelse{\x=9}{
	\node[draw, rectangle, minimum width=20pt, minimum height=20pt, inner sep=0pt] (6\x) [right= 0.4cm of \name]{$\checkmark$}; 
	}{
	\ifthenelse{\x=3 \OR \x=6 \OR \x=7 \OR \x=9 \OR \x=12}{
	\node[draw, rectangle, minimum width=20pt, minimum height=20pt, inner sep=0pt] (6\x) [right=5pt of \name]{$\checkmark$}; 
	}{
	\node[draw, rectangle, minimum width=20pt, minimum height=20pt, inner sep=0pt] (6\x) [right=5pt of \name]{};
	} 
	}
}
\node (S1) [left=0.25cm of 61]{$A^1:$};

\node[draw, rectangle, minimum width=20pt, minimum height=20pt, inner sep=0pt] (71)[below=0.05cm of 61]{$\checkmark$}; 
\foreach \x in {2,...,12}
{
	\pgfmathtruncatemacro{\prev}{\x - 1}
	\pgfmathtruncatemacro{\name}{7\prev}
	
	\ifthenelse{\x=9}{
	\node[draw, rectangle, minimum width=20pt, minimum height=20pt, inner sep=0pt] (7\x) [right= 0.4cm of \name]{}; 
	}{
	\ifthenelse{\x=4 \OR \x=5 \OR \x=7 \OR \x=10 \OR \x=11}{
	\node[draw, rectangle, minimum width=20pt, minimum height=20pt, inner sep=0pt] (7\x) [right=5pt of \name]{$\checkmark$}; 
	}{
	\node[draw, rectangle, minimum width=20pt, minimum height=20pt, inner sep=0pt] (7\x) [right=5pt of \name]{};
	} 
	}
}
\node (S2) [left=0.25cm of 71]{$A^2:$};

\node[draw, rectangle, minimum width=20pt, minimum height=20pt, inner sep=0pt] (81)[below=0.05cm of 71]{}; 
\foreach \x in {2,...,12}
{
	\pgfmathtruncatemacro{\prev}{\x - 1}
	\pgfmathtruncatemacro{\name}{8\prev}
	
	\ifthenelse{\x=9}{
	\node[draw, rectangle, minimum width=20pt, minimum height=20pt, inner sep=0pt] (8\x) [right= 0.4cm of \name]{}; 
	}{
	\ifthenelse{\x=2 \OR \x=4 \OR \x=5 \OR \x=8 \OR \x=10 \OR \x=11}{
	\node[draw, rectangle, minimum width=20pt, minimum height=20pt, inner sep=0pt] (8\x) [right=5pt of \name]{$\checkmark$}; 
	}{
	\node[draw, rectangle, minimum width=20pt, minimum height=20pt, inner sep=0pt] (8\x) [right=5pt of \name]{};
	}
	} 
}
\node (T1) [left=0.25cm of 81]{$B^1:$};

\node[draw, rectangle, minimum width=20pt, minimum height=20pt, inner sep=0pt] (91)[below=0.05cm of 81]{}; 
\foreach \x in {2,...,12}
{
	\pgfmathtruncatemacro{\prev}{\x - 1}
	\pgfmathtruncatemacro{\name}{9\prev}
	
	\ifthenelse{\x=9}{
	\node[draw, rectangle, minimum width=20pt, minimum height=20pt, inner sep=0pt] (9\x) [right= 0.4cm of \name]{$\checkmark$}; 
	}{
	\ifthenelse{\x=2 \OR \x=3 \OR \x=6 \OR \x=8 \OR \x=9 \OR \x=12}{
	\node[draw, rectangle, minimum width=20pt, minimum height=20pt, inner sep=0pt] (9\x) [right=5pt of \name]{$\checkmark$}; 
	}{
	\node[draw, rectangle, minimum width=20pt, minimum height=20pt, inner sep=0pt] (9\x) [right=5pt of \name]{};
	} 
	}
}
\node (T2) [left=0.25cm of 91]{$B^2:$};

\end{tikzpicture}

\end{center}

\caption{An illustration of two correlated bases. Each column denotes a group of $\frac{m}{12}$ items. This construction works even if columns $1$ through $8$ denote groups of $u$ items, and columns $9$ through $12$  denote groups of $v$ items, for any $u,v$ (see \autoref{sec:sketchfinal}).}
\label{fig:bmwdouble}
\end{figure}

\subsection{Finding the Right Correlations}\label{sec:sketchfinal}

As motivated in the previous section, it is essential to have the two bases be suitably correlated to deal with the cross-terms. What is the right way to correlate these bases? It would be ideal if the cross terms coming from the `cross-pairs' $S^1, T^2$ and $S^2, T^1$ behave exactly like the terms coming from two bases $(S^1, T^1)$ and $(S^2, T^2)$. If we can make this happen, then the argument that shows why the size of the union of regular clauses from basis $1$ of Alice and regular clauses from basis $1$ of Bob is $< \nicefrac{3m}{4}$ would extend to also show that the size of the cross-terms is $< \nicefrac{3m}{4}$.

In order to show that sets $S^1, T^2$ and $S^2, T^1$ behave like bases, we need to ensure that their intersections, namely $S^1 \cap T^2$ and $S^2 \cap T^1$ have size $\nicefrac{m}{3}$, just like the intersections of two sets in a basis. Is it possible to have sets that behave in this  way?

The answer turns out to be yes, and one such construction is described in \autoref{fig:bmwdouble}. In \autoref{fig:bmwdouble}, each of the $12$ columns denotes a group of $\nicefrac{m}{12}$ items, making a total of $m$ items, and a $\checkmark$ in row $S^1$ and column $1$ means that the first $\nicefrac{m}{12}$ items are present in the set $S^1$. Importantly, note that the tuples $(S^1, T^1)$ and $(S^2, T^2)$ behave like a~\cite{BravermanMW18} basis, and have four columns in their intersection, amounting to $\nicefrac{m}{3}$ items, and so do the cross-terms $(S^1, T^2)$ and $(S^2, T^1)$.

Thus, the construction in \autoref{fig:bmwdouble} has fixed the issue with the cross-terms described in the previous section. This step is clearly necessary in order to have any hope of a successful construction, but there is one more step to ensure that knowing a $\nicefrac{3}{4}$-approximate allocation reveals which copy is special.

\paragraph{Special cross-terms.} Just like there are cross terms coming from regular clauses from basis $1$ of Alice and regular clauses from basis $2$ of Bob, there are also cross terms coming from regular clauses from basis $1$ of Alice and {\em special} clauses from basis $2$ of Bob (and vice-versa).\footnote{We do not have to deal with cross terms coming from special clauses from basis $1$ of Alice and special clauses from basis $2$ of Bob as only one of the bases will have a special clause in our construction.} 

Before we describe how we deal with these `special cross-terms', we first need to define the special clauses in our system.  We omit a precise definition in this sketch, but mention here that significant structure is imposed by the fact that special clauses need to be indistinguishable from the regular clauses. In fact, the special clauses need to more or less look like the sets $A^1$, $A^2$, $B^1$, and $B^2$ in \autoref{fig:bmwdouble}, where again a $\checkmark$ in a given column indicates that the corresponding group of $\nicefrac{m}{12}$ items is in the set. 

With this definition of special clauses, one can calculate the expected intersection of the special cross terms and check if it is $> \nicefrac{m}{4}$ or not. It turns out that with the construction in \autoref{fig:bmwdouble}, this size is exactly $\nicefrac{m}{4}$, which means that the construction does \emph{not} suffice. The reason this is problematic is because we can now simply award Alice items corresponding to an arbitrary regular clause, and Bob will get welfare $\nicefrac{m}{4}$ from its complement (using his special clause, \emph{no matter which copy his special clause is from}). 

It is here that we use the generalization of~\cite{BravermanMW18} given in \Cref{sec:bmwformal}, and let the blocks of items have unequal size. We'll assume that the first $8$ columns in \autoref{fig:bmwdouble} denote groups of $u$ items each, and the last $4$ columns denote groups of $v$ items each. For general $u$, $v$, the intersection of the regular cross terms has size:
%\[\nicefrac{1}{8u + 4v} \cdot \left(\nicefrac{4uv}{3(u+2v)} + \nicefrac{4u(u+v)}{3(u+2v)} + \nicefrac{6u}{9} + \nicefrac{2v(u+v)^2}{(u+2v)^2} + \nicefrac{2v^3}{(u+2v)^2}\right) \]
\[\frac{5u^2v + u^3 + 6uv^2 +  2v^3}{2 (u+2v)^2(2u+v)} \cdot m.\]     

On the other hand, the intersection of a special cross terms has size:
%\[\nicefrac{1}{8u + 4v} \cdot \left(\nicefrac{uv}{u+2v} + \nicefrac{5u}{3}  + \nicefrac{v(u+v)}{(u+2v)} + \nicefrac{v^2}{u+2v}\right) \]
\[\frac{16uv + 5u^2  + 6v^2}{12 (u+2v)(2u+v)} \cdot m.\]     

In fact, the parameter governing our lower bound is the minimum of the two expressions above, and this is maximized when $\nicefrac{v}{u} = 1 + \sqrt{\nicefrac{3}{2}}$. For simplicity sake, we present our main results assuming $\nicefrac{v}{u} = 2$ when the minimum of the two expressions above is $\nicefrac{61m}{240} > \nicefrac{m}{4}$. The value $\nicefrac{61m}{240}$ corresponds to the the parameter $\nicefrac{179}{240}$ in our main result.

\subsection{Summary of Outline}
So to summarize, our construction takes two correlated bases for a generalized~\cite{BravermanMW18} construction. We carefully choose the parameters of both each individual instance, as well as the correlation pattern, so that:
\begin{itemize}
\item The intersection of a regular clause of Alice and regular clause of Bob  \emph{within the same copy} is $>\nicefrac{m}{4}$.
\item The intersection of a regular clause of Alice and a regular clause of Bob \emph{across different copies} is $> \nicefrac{m}{4}$.
\item The intersection of a special clause of Alice and a regular clause of Bob \emph{from the opposite copy} is $> \nicefrac{m}{4}$.
\item It is possible to embed disjoint special clauses for both Alice and Bob within either copy, in a way so that they are indistinguishable from regular clauses.
\end{itemize}

If we can accomplish all four properties, this means that \emph{any allocation guaranteeing welfare $\geq\nicefrac{3m}{4}$ must involve at least one special clause, and a regular or special clause from the same copy}. This sketch omits the calculations, but this property suffices to guarantee that no allocation guarantees welfare $\geq \nicefrac{3m}{4}$ both when copy one is special and when copy two is special. This in turn means that knowing an allocation which guarantees welfare $\geq \nicefrac{3m}{4}$ determines which copy is special (and then careful information theoretic arguments establish that determining the special copy requires exponential communication). This completes our detailed sketch, and the technical sections confirm both that our construction satisfies the properties above, and that these properties guarantee the desired conclusion.

\section{Technical Preliminaries}\label{sec:prelim2}
This section contains notation and preliminaries necessary for our complete proofs. The following notation is standard (and some of it is previously used in our proof sketch and preliminaries), but included for completeness.

Unless otherwise specified, all logarithms are to the base $2$. We will use $\mathbb{Z}$ to denote the set of integers and $\mathbb{R}$ to denote the set of all real numbers. We also define $\mathbb{R}_+$ to denote the set of all non-negative real numbers. If $S$ is a set, then $\mathbbm{2}^S$ will denote the power set, {\em i.e.}, the set of all subsets, of $S$. Additionally, we shall denote using $S^*$ the set $\cup_{i \geq 0} S^i$, where $S^i$, for $i > 0$, is the set of all strings of length $i$ that can be formed with elements of $S$, and $S^0$ is the set containing only the empty string. The length of a string $\sigma$ will be denoted using $\len(\sigma)$.

Let $t \geq 1$ be an integer. We define $[t] = \{1,\cdots,t\}$. For a tuple $X = (X_1,\cdots,X_t)$ and integer $i \in [t]$, we define $X_{<i} = (X_1,\cdots,X_{i-1})$ and $X_{-i} = (X_1,\cdots,X_{i-1},X_{i+1},\cdots,X_t)$. 

We will use  $\unif(S)$ to denote the uniform distribution over a finite set $S$.  If $X$ is a random variable, then $\distribution{X}$ will denote the distribution of the values taken by $X$. Our proofs require careful information theoretic arguments, and Appendix~\ref{sec:info} contains thorough preliminaries for notation and facts we use.

\subsection{Partitions and Notation}\label{sec:partition}
Recall that in the~\cite{BravermanMW18} construction, one defines a distribution $D(S)$ which is uniform over all sets $A$ such that $|A \cap S| = m/3$. Such a distribution is concise to describe in text, and does not merit special notation. Our construction, however, will eventually define a distribution $\mu_\star(\cdot)$ which is uniform over all sets $A$ such that $|A \cap P_i| = p_i$ \emph{for all $i \in [16]$}. We will also frequently discuss the intersection of two sets drawn independently from such distributions, and show that it concentrates around its expectation (and compute its expectation). This section provides notation so that we can make concise descriptions and statements of this form, and concludes with a concentration inequality that we will repeatedly use. While this notation does (significantly) help keep statements concise, the reader may wish to refer back to this section for help parsing the precise statements.

We shall denote sequences with a $\vec{}\ $ on top, {\em e.g.}, $\vec{S}$. We shall use $\vec{S} \Vert \vec{S}'$ to denote the concatenation of the sequences $\vec{S}$ and $\vec{S'}$. Similarly, we shall use $\vec{S} \Vert S''$ to denote the sequence formed by appending the single element $S''$ to the sequence $\vec{S}$. Let $k > 0$ and $\vec{S} =S_1, S_2, \cdots, S_k$ be a sequence of $k$ sets. For a function $f$ defined on sets, we shall use $f(\vec{S})$ to denote the sequence $f(S_1), \cdots, f(S_k)$. Thus, $\lvert{\vec{S}}\rvert$ shall denote the sequence $\lvert{S_1}\rvert, \cdots, \lvert{S_k}\rvert$ and $\vec{S} \cap A$, for a set $A$,  shall denote the sequence $S_1 \cap A, \cdots, S_k \cap A$, {\em etc.}

Let $k > 0$. We say that a sequence $\vec{P} = P_1, P_2, \cdots, P_k$  of subsets of $M$ forms a partition of $M$ into $k$ sets if the sets $P_1, \cdots, P_k $ are pairwise disjoint and their union is $M$. Formally, it should hold that $P_i \cap P_j = \emptyset$ for all $i \neq j \in [k]$ and $\cup_{i \in [k]} P_i = M$. 
For a partition $\vec{P} = P_1, P_2, \cdots, P_k$ of $M$ into $k$ sets, and an element  $z \in M$, we define $\vec{P}[z]$ to be the unique $i \in [k]$ such that $z \in P_{i}$. Observe that our definition of a partition above ensures that  $\vec{P}[z]$ is well-defined for all $z$.

~\Cref{def:pcdist} defines the class of distributions over sets that we consider frequently throughout our construction.
\begin{definition}
\label{def:pcdist}

We say that a tuple $(k, \vec{P}, \vec{p})$ is a partition parameter if $k > 0$, $\vec{P} = P_1, \cdots, P_k$ is a partition of  $M$ into $k$ sets, and $\vec{p} = p_1, p_2, \cdots, p_k$ is a sequence of integers satisfying $0 \leq p_i \leq \lvert{P_i}\rvert$ for all $i \in [k]$.

For a partition parameter $(k, \vec{P}, \vec{p})$, we define $\pc(k, \vec{P}, \vec{p})$ to be the uniform distribution over all sets $U$ satisfying 
\[
\lvert{\vec{P} \cap U}\rvert = \vec{p}.
\]

\end{definition}

%Let $D$ be a partition constant distribution with parameters $(k, \vec{P}, \vec{p})$. We define $\ally(D)$ to be the distribution over subsets of $M$ such that we have $\Pr_{U \sim \ally(D)}(z \in U) = \frac{p_{\vec{P}[z]}}{\lvert{P_{\vec{P}[z]}}\rvert}$ independently for all $z \in M$. This is well defined as $\lvert{P_{\vec{P}[z]}}\rvert > 0$ for all $z \in M$. 

Recall in our proof sketch that we repeatedly draw regular sets from a distribution of the form $\pc(k,\vec{P},\vec{p})$, and wish to argue about the size of the intersection of two independently drawn regular sets (from different distributions). The following lemma states the expected intersection (captured in $\Delta$), and also bounds the probability that the intersection deviates far from $\Delta$. Mapping back to the~\cite{BravermanMW18} construction,~\Cref{lemma:pcdist} would help claim that all regular sets have intersection at least $7m/27-\varepsilon m$ with high probability. The proof of~\Cref{lemma:pcdist} is in~\Cref{app:partition}.

\begin{lemma}
\label{lemma:pcdist}
For any partition parameters $(k, \vec{P}, \vec{p})$ and $(k', \vec{P'}, \vec{p'})$, it holds for all $\varepsilon > 0$ that
\[
\Pr_{\substack{U \sim \pc(k, \vec{P}, \vec{p}) \\ U' \sim \pc(k', \vec{P'}, \vec{p'})}} \left(\lvert{U \cap U'}\rvert < \Delta - \varepsilon m\right) \leq  \ \exp(-\varepsilon^2 (m -  \Delta)/3 ),
\] 
where
\[
\Delta = \sum_{i \in [k]: \lvert{P_i}\rvert > 0} \sum_{i' \in [k']: \lvert{P'_{i'}}\rvert > 0} p_i p'_{i'}  \frac{\lvert{P_i \cap P'_{i'}}\rvert}{\lvert{P_i}\rvert \cdot \lvert{P'_{i'}}\rvert}.
\]

\end{lemma}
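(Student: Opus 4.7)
The statement splits into computing $\E[|U \cap U'|]$ and then proving concentration around it. For the mean, I would use the symmetry of the uniform distribution $\pc(k, \vec{P}, \vec{p})$: for any $z \in P_i$, the probability that $z \in U$ equals $p_i/|P_i|$, because $\pc(k, \vec{P}, \vec{p})$ is invariant under permutations of $P_i$. By the independence of $U$ and $U'$, for $z \in P_i \cap P'_{i'}$ we get $\Pr[z \in U \cap U'] = (p_i/|P_i|)(p'_{i'}/|P'_{i'}|)$. Summing over $z \in M$ and regrouping by the pair $(i, i')$ of blocks that $z$ lies in gives $\E[|U \cap U'|] = \Delta$ by linearity of expectation.

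For the concentration bound, I would invoke the negative association (NA) framework. Within each block $P_i$, the set $U \cap P_i$ is uniform over size-$p_i$ subsets of $P_i$, so the indicators $\{\mathbf{1}[z \in U]\}_{z \in P_i}$ are NA by the classical theorem of Joag-Dev and Proschan. Since $U \cap P_i$ is independent across $i$, the full family $\{\mathbf{1}[z \in U]\}_{z \in M}$ is NA by closure under independent union. The same holds for $\{\mathbf{1}[z \in U']\}_{z \in M}$, and by the independence of $U$ and $U'$ the combined family of $2m$ indicators is jointly NA. The NA property is preserved by applying coordinate-wise non-decreasing functions on disjoint coordinate blocks, so grouping $(\mathbf{1}[z \in U], \mathbf{1}[z \in U'])$ for each $z$ into a block and taking the product $(a, b) \mapsto ab$ shows that $\{Y_z := \mathbf{1}[z \in U]\,\mathbf{1}[z \in U']\}_{z \in M}$ is NA.

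The multiplicative lower-tail Chernoff bound for sums of NA indicators then gives, for any $\delta \in (0, 1]$,
\[
\Pr\bigl[|U \cap U'| \leq (1 - \delta)\Delta\bigr] \leq \exp\bigl(-\delta^2 \Delta / 2\bigr).
\]
Taking $\delta = \varepsilon m / \Delta$ (valid for $\Delta \geq \varepsilon m$; the stated event is impossible otherwise) yields $\exp(-\varepsilon^2 m^2 / (2\Delta))$. To match the claimed form, it suffices to check the numerical inequality $m^2/(2\Delta) \geq (m - \Delta)/3$, which rearranges to $3 m^2 \geq 2 \Delta (m - \Delta)$ and holds because $\Delta(m - \Delta) \leq m^2/4$ by AM-GM. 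Hence the bound is at most $\exp(-\varepsilon^2 (m - \Delta)/3)$ as stated.

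The main difficulty I anticipate is recognizing the correct negative-association structure so that a single Chernoff application suffices. A more pedestrian alternative---conditioning on $U'$ to view $|U \cap U'| = \sum_i |U \cap P_i \cap U'|$ as a sum of independent hypergeometric variables, applying Chernoff at that level, and then handling the randomness of $U'$ via a second Chernoff bound and a union bound---would also yield the same asymptotic rate, but would require tracking extra constants and suffer from a union-bound loss. The NA-based argument avoids this by handling both sources of randomness simultaneously.
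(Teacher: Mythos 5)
Your proof is correct, but it takes a genuinely different route from the paper's. The paper works with the complement indicators $X_z = \mathbf{1}[z \notin U \cap U']$ (so $\sum_z X_z = m - |U\cap U'|$, with mean $m-\Delta$), establishes that these are \emph{negatively correlated} (the weaker one-sided property $\Pr[\forall z\in S: X_z=1]\le\prod_{z\in S}\Pr[X_z=1]$) by a direct binomial-ratio comparison between $\pc(\cdot)$ and its independent-product analogue, and then applies a single upper-tail Chernoff with $\mu = m-\Delta$, which lands on the exponent $\varepsilon^2(m-\Delta)/3$ exactly. You instead work with $Y_z = \mathbf{1}[z\in U\cap U']$ and invoke the full negative-association machinery (Joag-Dev--Proschan for uniformly random fixed-size subsets, closure under independent union, and closure under componentwise nondecreasing functions on disjoint blocks) to get NA of the $Y_z$, then apply a multiplicative lower-tail bound with $\mu=\Delta$; since this yields exponent $\varepsilon^2 m^2/(2\Delta)$ rather than the stated form, you need the extra algebraic step $3m^2 \ge 2\Delta(m-\Delta)$ (via AM--GM) and the side remark handling $\Delta < \varepsilon m$. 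Both arguments are sound; the paper's choice of complementary indicators avoids that final numerical detour and its proof of negative correlation is more self-contained, while yours is more modular in leaning on standard NA closure theorems.
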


\subsubsection{The Function $\part$}
All of the partition parameters that we consider take a particular form, which enables further concise notation. Specifically, they will arise from the following construction. let $k > 0$. For any sequence $\vec{S} = S_1, \cdots, S_k$ of $k$ subsets of $M$ and any sequence $\vec{b}  = b_1, \cdots, b_k$ of bits, we define the set 
\[
\part_{\vec{S}}(\vec{b}) = \left\{z \in M \mid \forall i \in [k]: \mathbbm{1}(z \in S_i) = b_i\right\}.
\]

We use $\part_{\vec{S}}$ to denote the sequence of sets $\{\part_{\vec{S}}(\vec{b})\}_{\vec{b} \in \{0,1\}^k}$ ordered lexicographically according to $\vec{b}$ (i.e. $\part_{\vec{S}}(0^k)$, followed by $\part_{\vec{S}}(0^{k-1}1)$, etc.). 
Observe that the sequence $\part_{\vec{S}}$ forms  a partition of $M$ into $2^k$ sets.~\Cref{lemma:part} and~\Cref{cor:part} discuss marginals of distributions drawn jointly (intuitively: Alice and Bob will have inputs drawn jointly, and we will want to reason about the marginal distribution of the input that Alice sees). Applied to the~\cite{BravermanMW18} construction,~\Cref{cor:part} would be useful to claim that when $(S,T)$ are drawn uniformly at random among sets of size $m/2$ which intersect at $m/3$, that $S$ is a uniformly random set of size $m/2$. It would also be useful to claim that Alice's special set is indistinguishable from her regular sets.~\Cref{lemma:part} is a technical generalization of~\Cref{cor:part} which is necessary for our construction because we sometimes jointly draw tuples of sets (but has no analogue in~\cite{BravermanMW18}).

\begin{lemma}
\label{lemma:part}
Let $k , k_1, k_2 > 0$ and consider $\vec{a}_j \in \mathbb{Z}^{2^{k+ k_j}}$ for $j \in \{1,2\}$. Let $\vec{S}$ be a sequence of $k$ subsets of $M$. For $j \in \{1, 2\}$, define $\mu_j$ to be the uniform distribution over all sequences $\vec{S}_j$ of $k_j$ subsets of $M$ satisfying $\lvert{\part_{\vec{S} \Vert \vec{S}_j}}\rvert = \vec{a}_j$.

For  any $\vec{a} \in \mathbb{Z}^{2^{k+ k_1 + k_2}}$ such that $\Pr_{\vec{S}_1 \sim \mu_1, \vec{S}_2 \sim \mu_2}\left(\lvert{\part_{\vec{S} \Vert \vec{S}_1 \Vert \vec{S}_2}}\rvert = \vec{a}\right) > 0$, we have for all $j \in \{1, 2\}$ and all sequences $\vec{Z}$ of subsets of $M$, 
\[
\Pr_{\vec{S}_j \sim \mu_j}\left(\vec{S}_j = \vec{Z}\right) = \Pr_{\substack{\vec{S}_1 \sim \mu_1\\ \vec{S}_2 \sim \mu_2}}\left(\vec{S}_j = \vec{Z} \mid \lvert{\part_{\vec{S} \Vert \vec{S}_1 \Vert \vec{S}_2}}\rvert = \vec{a}\right).
\]

\end{lemma}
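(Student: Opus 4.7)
The plan is to verify the identity by direct computation, exploiting the fact that both $\mu_1$ and $\mu_2$ are \emph{uniform} distributions over their respective supports. Fix $j = 1$ without loss of generality; the case $j = 2$ is symmetric. The argument splits naturally based on whether the given $\vec{Z}$ is $\mu_1$-valid, i.e., whether $\lvert \part_{\vec{S} \Vert \vec{Z}} \rvert = \vec{a}_1$.

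If $\vec{Z}$ is not $\mu_1$-valid, then the left-hand side is $0$ by definition of $\mu_1$, and the right-hand side is also $0$ since the event $\{\vec{S}_1 = \vec{Z}\}$ has probability $0$ under $\mu_1 \times \mu_2$. So assume henceforth that $\vec{Z}$ is $\mu_1$-valid. Then the left-hand side equals $1/\lvert \supp(\mu_1) \rvert$ by uniformity. For the right-hand side, I would apply Bayes' rule to write
\[
\Pr_{\vec{S}_1 \sim \mu_1,\, \vec{S}_2 \sim \mu_2}\!\left(\vec{S}_1 = \vec{Z} \,\bigm|\, \lvert \part_{\vec{S} \Vert \vec{S}_1 \Vert \vec{S}_2} \rvert = \vec{a}\right) = \frac{\Pr_{\mu_1}(\vec{S}_1 = \vec{Z}) \cdot q(\vec{Z})}{\sum_{\vec{Z}'} \Pr_{\mu_1}(\vec{S}_1 = \vec{Z}') \cdot q(\vec{Z}')},
\]
where $q(\vec{Z}') := \Pr_{\mu_2}(\lvert \part_{\vec{S} \Vert \vec{Z}' \Vert \vec{S}_2} \rvert = \vec{a})$. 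It therefore suffices to show that $q(\vec{Z}')$ takes a common value across all $\mu_1$-valid $\vec{Z}'$; once this is established, the ratio collapses to $1/\lvert \supp(\mu_1) \rvert$, matching the left-hand side.

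The core combinatorial step is this constancy claim. Identify a sequence $\vec{S}_2$ of $k_2$ subsets of $M$ with the function $f : M \to \{0,1\}^{k_2}$ defined by $f(z)_i = \mathbbm{1}(z \in S_2^{(i)})$. Likewise, $\vec{S} \Vert \vec{Z}$ induces a function $g_{\vec{Z}} : M \to \{0,1\}^{k + k_1}$ whose fibers are precisely the cells of $\part_{\vec{S} \Vert \vec{Z}}$, with sizes prescribed by $\vec{a}_1$ when $\vec{Z}$ is valid. The event $\lvert \part_{\vec{S} \Vert \vec{Z} \Vert \vec{S}_2} \rvert = \vec{a}$ is equivalent to requiring that, within each fiber $g_{\vec{Z}}^{-1}(\vec{b} \Vert \vec{b}_1)$, exactly $a[\vec{b} \Vert \vec{b}_1 \Vert \vec{b}_2]$ of its elements are sent by $f$ to $\vec{b}_2$, for every $\vec{b}_2 \in \{0,1\}^{k_2}$. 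The number of such $f$ therefore factors as a product of multinomial coefficients, one per fiber of $g_{\vec{Z}}$, and each factor depends only on the fiber size (fixed by $\vec{a}_1$) and the required sub-counts (fixed by $\vec{a}$). Hence this count is a function of $\vec{a}$ alone, and dividing by the constant $\lvert \supp(\mu_2) \rvert$ gives the same value of $q(\vec{Z}')$ for every $\mu_1$-valid $\vec{Z}'$.

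I do not expect a substantive obstacle. The positivity hypothesis on $\vec{a}$ forces the multinomial count above to be strictly positive and implicitly pins down $\vec{a}_1$ and $\vec{a}_2$ as the appropriate marginals of $\vec{a}$, so no degenerate cases arise. The main care required is bookkeeping with the multi-indices $(\vec{b}, \vec{b}_1, \vec{b}_2)$ and cleanly stating the multinomial factorization across fibers.
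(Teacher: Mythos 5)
Your proposal is correct and follows essentially the same approach as the paper's proof: split on whether $\vec{Z}$ is $\mu_1$-valid, observe the left side equals $1/\lvert\supp(\mu_1)\rvert$ by uniformity, and reduce the right side to the claim that the number of compatible $\vec{S}_2$ is the same across all $\mu_1$-valid $\vec{Z}'$. The paper dismisses this last counting fact with the phrase ``by symmetry,'' while you make it explicit via the fiberwise product-of-multinomials factorization, which is a welcome elaboration but not a different argument.
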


\begin{corollary}
\label{cor:part}

Let $k > 0$ and $\vec{a}_1, \vec{a}_2 \in  \mathbb{Z}^{2^{k}}$ be arbitrary. Let  $\vec{S}$ be a sequence of $k$ subsets of $M$.
 For $j \in \{1, 2\}$,  define $\mu_j:=\pc(2^k,\part_{\vec{S}},\vec{a}_j)$ (which is the uniform distribution over all sets $A \subseteq M$ satisfying $\lvert{\part_{\vec{S}} \cap A }\rvert = \vec{a}_j$).

For  any $\vec{a} \in \mathbb{Z}^{2^{k}}$ such that $\Pr_{A_1 \sim \mu_1, A_2 \sim \mu_2}\left(\lvert{\part_{\vec{S}} \cap A_1 \cap A_2 }\rvert = \vec{a}\right) > 0$, we have for all $j \in \{1, 2\}$ and all subsets $Z \subseteq M$, 
\[
\Pr_{A_j \sim \mu_j}\left(A_j = Z\right) = \Pr_{\substack{A_1 \sim \mu_1\\ A_2 \sim \mu_2}}\Big(A_j = Z \mid \lvert{\part_{\vec{S}} \cap A_1 \cap A_2 }\rvert = \vec{a} \Big).
\]

\end{corollary}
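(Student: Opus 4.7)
The plan is to derive Corollary~\ref{cor:part} as a direct specialization of Lemma~\ref{lemma:part} with $k_1 = k_2 = 1$, so that each sequence $\vec{S}_j$ collapses to a single set $A_j$. The key observation driving the reduction is that, once $\vec{S}$ is fixed, the various size-vectors appearing in the corollary are in bijection (via inclusion--exclusion) with the size-vectors appearing in the lemma, so that the joint distribution over $(A_1, A_2)$ described in the corollary coincides with the joint distribution described in the lemma.

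First I would match up the marginal distributions. With $k_j = 1$, the partition $\part_{\vec{S} \Vert A_j}$ refines $\part_{\vec{S}}$ by splitting each cell $P \in \part_{\vec{S}}$ into $P \cap A_j$ and $P \setminus A_j$. Since $|P|$ is already determined by $\vec{S}$, specifying the vector $|\part_{\vec{S}} \cap A_j| \in \mathbb{Z}^{2^k}$ (as in the corollary) uniquely determines the vector $|\part_{\vec{S} \Vert A_j}| \in \mathbb{Z}^{2^{k+1}}$ (as in the lemma), and vice versa. Hence the $\mu_j = \pc(2^k, \part_{\vec{S}}, \vec{a}_j)$ of the corollary is exactly the $\mu_j$ of the lemma for a uniquely determined $\tilde{\vec{a}}_j \in \mathbb{Z}^{2^{k+1}}$. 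Next I would translate the conditioning event. The vector $|\part_{\vec{S} \Vert A_1 \Vert A_2}| \in \mathbb{Z}^{2^{k+2}}$ records, for each $P \in \part_{\vec{S}}$, the four counts $|P \cap A_1 \cap A_2|$, $|P \cap A_1 \setminus A_2|$, $|P \setminus A_1 \cap A_2|$, $|P \setminus A_1 \setminus A_2|$. Under the marginals $\mu_1, \mu_2$, the quantities $|P \cap A_1|$ and $|P \cap A_2|$ are already pinned down by $\vec{a}_1$ and $\vec{a}_2$, so inclusion--exclusion shows that $|P \cap A_1 \cap A_2|$ alone determines the remaining three counts. Consequently, the event $|\part_{\vec{S}} \cap A_1 \cap A_2| = \vec{a}$ is equivalent to an event of the form $|\part_{\vec{S} \Vert A_1 \Vert A_2}| = \tilde{\vec{a}}$ for a uniquely determined $\tilde{\vec{a}} \in \mathbb{Z}^{2^{k+2}}$, and the positivity hypothesis transfers along this correspondence.

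Finally I would invoke Lemma~\ref{lemma:part} with $k_1 = k_2 = 1$, sequences $\vec{S}_j = (A_j)$, target sequence $\vec{Z} = (Z)$, and the translated parameters $\tilde{\vec{a}}_1, \tilde{\vec{a}}_2, \tilde{\vec{a}}$; the conclusion of the lemma is then exactly the identity claimed in the corollary. There is no substantive obstacle here: all steps beyond citing the lemma amount to verifying the inclusion--exclusion bijection between size-vectors at the two levels of refinement, which is pure bookkeeping. The only point requiring a small amount of care is the positivity condition, which must be checked to transfer correctly from $\Pr(|\part_{\vec{S}} \cap A_1 \cap A_2| = \vec{a}) > 0$ to $\Pr(|\part_{\vec{S} \Vert A_1 \Vert A_2}| = \tilde{\vec{a}}) > 0$, but this is immediate from the bijective correspondence itself.
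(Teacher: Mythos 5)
Your proposal is essentially identical to the paper's proof: both reduce to Lemma~\ref{lemma:part} with $k_1 = k_2 = 1$ after observing that the size-vectors $\lvert\part_{\vec{S}} \cap A_j\rvert$ and $\lvert\part_{\vec{S}} \cap A_1 \cap A_2\rvert$ are in bijective correspondence with the refined size-vectors $\lvert\part_{\vec{S} \Vert A_j}\rvert$ and $\lvert\part_{\vec{S} \Vert A_1 \Vert A_2}\rvert$ once $\vec{S}$, $\vec{a}_1$, $\vec{a}_2$ are fixed. The only difference is that you spell out the inclusion--exclusion bookkeeping that the paper leaves implicit.
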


\section{Our Construction}\label{sec:dist}

For the purposes of  this section, we fix  $m > 0 $. We denote the set $[m]$ using the letter $M$.  If $S$ is a subset of $M$, then we use $\overline{S}$ to denote $M \setminus S$, {\em i.e.}, the set of items in $M$ that are {\em not} in $S$. We now give a formal definition of our lower bound instance.

\subsection{Bases and Clauses}

We next define the notion of a {\em basis}.
\begin{definition}[Basis] \label{def:basis}
A pair $S = (S^1, S^2)$ of subsets of $M$ forms a basis if 
\[
\lvert{\part_{S}}\rvert = \left(\frac{5m}{16}, \frac{3m}{16}, \frac{3m}{16}, \frac{5m}{16}\right).
\]
%$\lvert{S^1}\rvert = \lvert{S^2}\rvert = \frac{m}{2}$ and  and $\lvert{S^1 \cap S^2}\rvert = \frac{5m}{16}$.
\end{definition}

To help parse the notation $\part_S$, recall that the first term denotes the number of elements which are in neither $S^1$ nor $S^2$ (corresponds to $\vec{b} = (0,0)$), the second term is the number of elements which are in $S^2$ but not $S^1$ (corresponds to $\vec{b} = (0,1)$), the third term is the number of elements in $S^1$ but not $S^2$ (corresponds to $\vec{b}= (1,0)$), and the fourth term is the number of elements which are in $S^1 \cap S^2$ (corresponds to $\vec{b} = (1,1)$). 

We reserve the letters $S$ and $T$ to denote bases. Note that if $S = (S^1, S^2)$ is a basis, then the pair $S^{rev} = (S^2, S^1)$ is also a basis. For notational convenience, we will treat bases as a sequence of two sets, and omit the $\vec{}\ $ sign. The following definition considers a pair of bases. Recall that $S||T$ is a list of four sets, so $|\part_{S ||T}|$ has sixteen possible $\vec{b}$ to consider (and therefore is a list of sixteen numbers).
\begin{definition}[Compatible Bases]
\label{def:compat}
We say that basis $S$ is compatible with basis $T$ if 
\[
\lvert{\part_{S \Vert T}}\rvert = \left(\frac{4m}{16}, \frac{m}{16}, 0, 0, 0, \frac{m}{16}, \frac{2m}{16}, 0, \frac{m}{16}, 0, \frac{m}{16}, \frac{m}{16}, 0, \frac{m}{16}, 0, \frac{4m}{16}\right).
\]

For shorthand, we refer by $\veccmp:=\left(\frac{4m}{16}, \frac{m}{16}, 0, 0, 0, \frac{m}{16}, \frac{2m}{16}, 0, \frac{m}{16}, 0, \frac{m}{16}, \frac{m}{16}, 0, \frac{m}{16}, 0, \frac{4m}{16}\right)$.
\end{definition}

Again, recall that (e.g.) $2m/16$ denotes the number of elements in $\bar{S^1}\cap S^2 \cap T^1 \cap \bar{T^2}$ (and corresponds to $\vec{b} = (0,1,1,0)$). An example of a basis $S$ that is compatible with  $T$ is depicted in \autoref{fig:compat}. We note that \autoref{def:compat} is not symmetric, {\em i.e.}, basis $S$ may be compatible with $T$ without basis $T$ being compatible with $S$. However, it holds that if basis $S$ is compatible with $T$, then basis $T^{rev}$ is compatible with basis $S^{rev}$.

We will use $\xi_{single}$ to denote the uniform distribution over all bases and $\xi$ to denote the uniform distribution over pairs of bases $S, T$  such that $S$ is compatible with $T$.

%This is due to the fact that the position $(a,b,c,d)$ in the right hand side of \autoref{def:compat} above is the same as the coordinate $(d,c,b,a)$. 

\textbf{The first step in our construction is the distribution $\xi$, which defines a distribution over pairs of bases.} Mapping back to our proof sketch, $(S^1,T^1)$ denotes the basis for the ``first copy,'' and $(S^2,T^2)$ denotes the basis for the ``second copy.''

\begin{figure}[h!]
\centering

\begin{tikzpicture}

\node[draw, rectangle, minimum width=12pt, minimum height=20pt, inner sep=0pt] (11){\scriptsize 1}; 
\foreach \x in {2,...,16}
{
	\pgfmathtruncatemacro{\prev}{\x - 1}
	\pgfmathtruncatemacro{\name}{1\prev}
	\ifthenelse{\x=9}{
	\node[draw, rectangle, minimum width=12pt, minimum height=20pt, inner sep=0pt] (1\x) [right= 0.4cm of \name]{\scriptsize \x}; 
	}{
	\node[draw, rectangle, minimum width=12pt, minimum height=20pt, inner sep=0pt] (1\x) [right=5pt of \name]{\scriptsize \x}; 
	}
}

\node[draw, rectangle, minimum width=12pt, minimum height=20pt, inner sep=0pt] (21)[below=0.4cm of 11]{$\checkmark$}; 
\foreach \x in {2,...,16}
{
	\pgfmathtruncatemacro{\prev}{\x - 1}
	\pgfmathtruncatemacro{\name}{2\prev}
	
	\ifthenelse{\x=9}{
	\node[draw, rectangle, minimum width=12pt, minimum height=20pt, inner sep=0pt] (2\x) [right= 0.4cm of \name]{$\checkmark$}; 
	}{
	\ifthenelse{\x=2 \OR \x=3 \OR \x=7 \OR \x=9 \OR \x=10 \OR \x=11 \OR \x=12}{
	\node[draw, rectangle, minimum width=12pt, minimum height=20pt, inner sep=0pt] (2\x) [right=5pt of \name]{$\checkmark$}; 
	}{
	\node[draw, rectangle, minimum width=12pt, minimum height=20pt, inner sep=0pt] (2\x) [right=5pt of \name]{};
	} 
	}
}
\node (S1) [left=0.25cm of 21]{$S^1:$};

\node[draw, rectangle, minimum width=12pt, minimum height=20pt, inner sep=0pt] (31)[below=0.05cm of 21]{}; 
\foreach \x in {2,...,16}
{
	\pgfmathtruncatemacro{\prev}{\x - 1}
	\pgfmathtruncatemacro{\name}{3\prev}
	
	\ifthenelse{\x=9}{
	\node[draw, rectangle, minimum width=12pt, minimum height=20pt, inner sep=0pt] (3\x) [right= 0.4cm of \name]{$\checkmark$}; 
	}{
	\ifthenelse{\x=4 \OR \x=7 \OR \x=5 \OR \x=6 \OR \x=9 \OR \x=10 \OR \x=11 \OR \x=12}{
	\node[draw, rectangle, minimum width=12pt, minimum height=20pt, inner sep=0pt] (3\x) [right=5pt of \name]{$\checkmark$}; 
	}{
	\node[draw, rectangle, minimum width=12pt, minimum height=20pt, inner sep=0pt] (3\x) [right=5pt of \name]{};
	} 
	}
}
\node (S2) [left=0.25cm of 31]{$S^2:$};

\node[draw, rectangle, minimum width=12pt, minimum height=20pt, inner sep=0pt] (41)[below=0.05cm of 31]{}; 
\foreach \x in {2,...,16}
{
	\pgfmathtruncatemacro{\prev}{\x - 1}
	\pgfmathtruncatemacro{\name}{4\prev}
	
	\ifthenelse{\x=9}{
	\node[draw, rectangle, minimum width=12pt, minimum height=20pt, inner sep=0pt] (4\x) [right= 0.4cm of \name]{$\checkmark$}; 
	}{
	\ifthenelse{\x=2 \OR \x=3 \OR \x=4 \OR \x=5 \OR \x=9 \OR \x=10 \OR \x=11 \OR \x=12}{
	\node[draw, rectangle, minimum width=12pt, minimum height=20pt, inner sep=0pt] (4\x) [right=5pt of \name]{$\checkmark$}; 
	}{
	\node[draw, rectangle, minimum width=12pt, minimum height=20pt, inner sep=0pt] (4\x) [right=5pt of \name]{};
	}
	} 
}
\node (T1) [left=0.25cm of 41]{$T^1:$};

\node[draw, rectangle, minimum width=12pt, minimum height=20pt, inner sep=0pt] (51)[below=0.05cm of 41]{}; 
\foreach \x in {2,...,16}
{
	\pgfmathtruncatemacro{\prev}{\x - 1}
	\pgfmathtruncatemacro{\name}{5\prev}
	
	\ifthenelse{\x=9}{
	\node[draw, rectangle, minimum width=12pt, minimum height=20pt, inner sep=0pt] (5\x) [right= 0.4cm of \name]{$\checkmark$}; 
	}{
	\ifthenelse{\x=2 \OR \x=7 \OR \x=6 \OR \x=8 \OR \x=9 \OR \x=10 \OR \x=11 \OR \x=12}{
	\node[draw, rectangle, minimum width=12pt, minimum height=20pt, inner sep=0pt] (5\x) [right=5pt of \name]{$\checkmark$}; 
	}{
	\node[draw, rectangle, minimum width=12pt, minimum height=20pt, inner sep=0pt] (5\x) [right=5pt of \name]{};
	} 
	}
}
\node (T2) [left=0.25cm of 51]{$T^2:$};

\node[draw, rectangle, minimum width=12pt, minimum height=20pt, inner sep=0pt] (61)[below=0.4cm of 51]{$\checkmark$}; 
\foreach \x in {2,...,16}
{
	\pgfmathtruncatemacro{\prev}{\x - 1}
	\pgfmathtruncatemacro{\name}{6\prev}
	
	\ifthenelse{\x=9}{
	\node[draw, rectangle, minimum width=12pt, minimum height=20pt, inner sep=0pt] (6\x) [right= 0.4cm of \name]{$\checkmark$}; 
	}{
	\ifthenelse{\x=3 \OR \x=6 \OR \x=7 \OR \x=9 \OR \x=10 \OR \x=15 \OR \x=16}{
	\node[draw, rectangle, minimum width=12pt, minimum height=20pt, inner sep=0pt] (6\x) [right=5pt of \name]{$\checkmark$}; 
	}{
	\node[draw, rectangle, minimum width=12pt, minimum height=20pt, inner sep=0pt] (6\x) [right=5pt of \name]{};
	} 
	}
}
\node (S1) [left=0.25cm of 61]{$A^1_{\star}:$};

\node[draw, rectangle, minimum width=12pt, minimum height=20pt, inner sep=0pt] (71)[below=0.05cm of 61]{$\checkmark$}; 
\foreach \x in {2,...,16}
{
	\pgfmathtruncatemacro{\prev}{\x - 1}
	\pgfmathtruncatemacro{\name}{7\prev}
	
	\ifthenelse{\x=9}{
	\node[draw, rectangle, minimum width=12pt, minimum height=20pt, inner sep=0pt] (7\x) [right= 0.4cm of \name]{}; 
	}{
	\ifthenelse{\x=4 \OR \x=5 \OR \x=7 \OR \x=11 \OR \x=12 \OR \x=13 \OR \x=14}{
	\node[draw, rectangle, minimum width=12pt, minimum height=20pt, inner sep=0pt] (7\x) [right=5pt of \name]{$\checkmark$}; 
	}{
	\node[draw, rectangle, minimum width=12pt, minimum height=20pt, inner sep=0pt] (7\x) [right=5pt of \name]{};
	} 
	}
}
\node (T1) [left=0.25cm of 71]{$A^2_{\star}:$};

\node[draw, rectangle, minimum width=12pt, minimum height=20pt, inner sep=0pt] (81)[below=0.05cm of 71]{}; 
\foreach \x in {2,...,16}
{
	\pgfmathtruncatemacro{\prev}{\x - 1}
	\pgfmathtruncatemacro{\name}{8\prev}
	
	\ifthenelse{\x=9}{
	\node[draw, rectangle, minimum width=12pt, minimum height=20pt, inner sep=0pt] (8\x) [right= 0.4cm of \name]{}; 
	}{
	\ifthenelse{\x=2 \OR \x=4 \OR \x=5 \OR \x=8 \OR \x=11 \OR \x=12 \OR \x=13 \OR \x=14}{
	\node[draw, rectangle, minimum width=12pt, minimum height=20pt, inner sep=0pt] (8\x) [right=5pt of \name]{$\checkmark$}; 
	}{
	\node[draw, rectangle, minimum width=12pt, minimum height=20pt, inner sep=0pt] (8\x) [right=5pt of \name]{};
	}
	} 
}
\node (S2) [left=0.25cm of 81]{$B^1_{\star}:$};

\node[draw, rectangle, minimum width=12pt, minimum height=20pt, inner sep=0pt] (91)[below=0.05cm of 81]{}; 
\foreach \x in {2,...,16}
{
	\pgfmathtruncatemacro{\prev}{\x - 1}
	\pgfmathtruncatemacro{\name}{9\prev}
	
	\ifthenelse{\x=9}{
	\node[draw, rectangle, minimum width=12pt, minimum height=20pt, inner sep=0pt] (9\x) [right= 0.4cm of \name]{$\checkmark$}; 
	}{
	\ifthenelse{\x=2 \OR \x=3 \OR \x=6 \OR \x=8 \OR \x=9 \OR \x=10 \OR \x=15 \OR \x=16}{
	\node[draw, rectangle, minimum width=12pt, minimum height=20pt, inner sep=0pt] (9\x) [right=5pt of \name]{$\checkmark$}; 
	}{
	\node[draw, rectangle, minimum width=12pt, minimum height=20pt, inner sep=0pt] (9\x) [right=5pt of \name]{};
	} 
	}
}
\node (T2) [left=0.25cm of 91]{$B^2_{\star}:$};

\end{tikzpicture}

\caption{A basis $S = (S^1, S^2)$ that is compatible with another basis  $T = (T^1, T^2)$. Also pictured: a pair of sets $(A^1_{\star},A^2_{\star})$ special with respect to $(S,T)$ (see \Cref{sec:special}). Observe that $(B^2_{\star},B^1_{\star}) = (\overline{A^2_{\star}},\overline{A^1_{\star}})$ is special with respect to $(T^{rev}, S^{rev})$. Here, the blocks inside each column correspond to the same $m/16$ elements.}
\label{fig:compat}
\end{figure}

\subsubsection{Regular Clauses} \label{sec:regular}
The next step in our construction is to define how to draw regular clauses, once the bases are fixed. In order to have the desired interaction between cross terms, we need to specify the intersection of each clause not only with the basis ``of its copy'', but also the basis for the ``other copy.''

\begin{definition}[Clause] 
\label{def:clause} Let $ S= (S^1, S^2)$ be basis. We say that a set $A \subseteq M$ is a clause with respect to $S$ if 
\[
\lvert{\part_S \cap A}\rvert = \left(\frac{2m}{16}, \frac{m}{16}, \frac{2m}{16}, \frac{3m}{16}\right).
\]
For shorthand, we denote by $\vecreg:=\left(\frac{2m}{16}, \frac{m}{16}, \frac{2m}{16}, \frac{3m}{16}\right)$.
\end{definition}
We define $\mu_{single}(S)$ to be the uniform distribution over all clauses with respect to $S$. Observe that the distribution $\mu_{single}(S) = \pc(4,\part_S, \vecreg  )$ (recall the definition of $\pc$ from Definition~\ref{def:pcdist}). We also define:
\begin{definition}[The distribution $\mu(\cdot)$]
\label{def:mu} Let $ S= (S^1, S^2)$ be a basis. A pair $(A^1, A^2)$ of subsets of $M$ is called a clause pair with respect to $S$ if $A^1$ is a clause with respect to $S$, $A^2$ is a clause with respect to $S^{rev}$ and we have 
\[
\lvert{\part_S \cap A^1 \cap A^2}\rvert = \left(0, 0, \frac{m}{16}, \frac{m}{16}\right).
\]

For shorthand, we define $\vecregpair:=\left(0, 0, \frac{m}{16}, \frac{m}{16}\right)$.

We define $\mu(S)$ to be the uniform distribution over all clause pairs with respect to $S$.
\end{definition}

\textbf{The second step in our construction is the distribution $\mu(\cdot)$, which describes how Alice and Bob draw pairs of regular clauses once their basis is fixed.}

Observe that $\mu(S)$ is a distribution over \emph{pairs} of clauses. The first clause in the pair is a clause with respect to $S$ (this corresponds to a regular clause in the ``first copy''), and the second is a clause with respect to $S^{rev}$ (this corresponds to a regular clause in the ``second copy''). Observation~\ref{obs:mu} below is simple, but key: it states that a pair of sets $(A^1,A^2)$ is a clause pair with respect to $S$ if and only if a sequence of equalities involving the size of sets involving $S, A^1, A^2$ holds. Because $\mu(S)$ is the uniform distribution clause pairs with respect to $S$, this means that any $(A^1,A^2)$ satisfying the noted equalities is equally likely to have been drawn from $\mu(S)$ (and this is what lets us later plant an undetectable special clause pair). 

\begin{observation}
\label{obs:mu}
Observe that for any basis $S$, the fact that a pair of sets $(A^1, A^2)$ is a clause pair with respect to $S$ implies that  $\lvert{\part_{S}}\rvert $, $\lvert{\part_{S} \cap A^1}\rvert $, $\lvert{\part_{S} \cap A^2}\rvert $, and $\lvert{\part_{S } \cap A^1 \cap A^2}\rvert $ are all fixed functions of $m$. This means that there exist a vector $\vecpair$ such that $(A^1, A^2)$ is a clause pair with respect to $S$ if and only if 
\[
\lvert{\part_{S \Vert A^1 \Vert A^2}}\rvert  = \vecpair.
\]
\end{observation}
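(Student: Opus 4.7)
\medskip

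\noindent\textbf{Proof proposal.} The plan is an inclusion--exclusion bookkeeping argument. The sequence $\part_{S \Vert A^1 \Vert A^2}$ partitions $M$ into $2^4 = 16$ cells, indexed by $(b_1,b_2,b_3,b_4) \in \{0,1\}^4$ describing membership in $S^1, S^2, A^1, A^2$ respectively. I will show that each of the 16 cell sizes is pinned down by the four conditions constituting ``clause pair with respect to $S$''.

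First, I would read off the four groups of constraints. The basis condition on $S$ (\Cref{def:basis}) fixes the four sums $\lvert\part_S(b_1,b_2)\rvert$ for $(b_1,b_2)\in\{0,1\}^2$. The condition that $A^1$ is a clause with respect to $S$ (\Cref{def:clause}) fixes $\lvert\part_S(b_1,b_2) \cap A^1\rvert$ for each $(b_1,b_2)$, since this is exactly the vector $\vecreg$. The condition that $A^2$ is a clause with respect to $S^{rev} = (S^2,S^1)$ fixes $\lvert\part_{S^{rev}}(b_2,b_1) \cap A^2\rvert$, and because $\part_{S^{rev}}(b_2,b_1) = \part_S(b_1,b_2)$ as sets, this is the same as fixing $\lvert\part_S(b_1,b_2) \cap A^2\rvert$ (up to a reordering of coordinates inside $\vecreg$). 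Finally, the clause pair condition (\Cref{def:mu}) fixes $\lvert\part_S(b_1,b_2) \cap A^1 \cap A^2\rvert$ via $\vecregpair$.

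Next, fix any $(b_1,b_2)$ and consider the four unknowns $x_{b_3 b_4} := \lvert\part_{S \Vert A^1 \Vert A^2}(b_1,b_2,b_3,b_4)\rvert$ for $(b_3,b_4) \in \{0,1\}^2$. From the preceding paragraph we have the four linear equations $x_{00}+x_{01}+x_{10}+x_{11} = \lvert\part_S(b_1,b_2)\rvert$, $x_{10}+x_{11} = \lvert\part_S(b_1,b_2)\cap A^1\rvert$, $x_{01}+x_{11} = \lvert\part_S(b_1,b_2)\cap A^2\rvert$, and $x_{11} = \lvert\part_S(b_1,b_2)\cap A^1 \cap A^2\rvert$. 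This triangular system solves uniquely by back substitution, so each $x_{b_3 b_4}$ is a fixed function of $m$. Repeating for all four choices of $(b_1,b_2)$ determines all 16 entries of $\lvert\part_{S \Vert A^1 \Vert A^2}\rvert$, yielding the desired vector $\vecpair$.

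For the converse direction, given $\lvert\part_{S \Vert A^1 \Vert A^2}\rvert = \vecpair$, I would recover $\lvert\part_S\rvert$, $\lvert\part_S \cap A^1\rvert$, $\lvert\part_S \cap A^2\rvert$, and $\lvert\part_S \cap A^1 \cap A^2\rvert$ by summing the appropriate subsets of the 16 coordinates, and verify these match \Cref{def:basis}, \Cref{def:clause} (for both $A^1$ and $A^2$, using the $S^{rev}$ reindexing), and \Cref{def:mu}. There is no real obstacle here; the only bookkeeping care required is tracking the coordinate permutation introduced by $S^{rev}$ when translating between ``clause w.r.t.\ $S^{rev}$'' for $A^2$ and sizes indexed under $\part_S$.
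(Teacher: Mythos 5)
Your proposal is correct and takes essentially the same approach as the paper: the observation rests on the implicit fact that fixing $\lvert\part_S\rvert$, $\lvert\part_S\cap A^1\rvert$, $\lvert\part_S\cap A^2\rvert$, and $\lvert\part_S\cap A^1\cap A^2\rvert$ uniquely determines all sixteen coordinates of $\lvert\part_{S\Vert A^1\Vert A^2}\rvert$, and your triangular linear system per cell $\part_S(b_1,b_2)$ (plus the $\part_{S^{rev}}(b_2,b_1)=\part_S(b_1,b_2)$ reindexing for $A^2$) is exactly the bookkeeping the paper leaves implicit. The converse direction by marginalization is likewise the intended reasoning.
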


In our lower bound construction, Alice's regular clauses are drawn from the distribution $\mu(S)$ while Bob's regular clauses are drawn from the distribution $\mu(T)$, where $S$ and $T$ are bases such that $S$ is compatible with $T$. The following lemma shows that the intersection of a regular clause of Alice and a regular clause of Bob has size at least $\frac{51m}{200} > \frac{m}{4}$ (with high probability). While the proof requires several steps to be rigorous, the intuition is simple: we first need to argue that each of the sets $A^1,A^2,B^1,B^2$ are identically distributed to draws from a distribution of the form $\mu_{single}(\cdot)$, which is of the form $\pc(k,\vec{P},\vec{p})$. This step uses~\Cref{lemma:part}. Once we have done this, we can use~\Cref{lemma:pcdist} to argue that the intersection of any two pairs concentrates around its expectation (and that its expectation is $51m/100$).

\begin{lemma}
\label{lemma:reg} Consider $\varepsilon > 0$ and bases $S, T$  such that $S$ is compatible with $T$. For all $i,j \in \{1, 2\}$, we have
\[
\Pr_{\substack{(A^1, A^2) \sim \mu(S)\\ (B^2, B^1) \sim \mu(T^{rev})}}\left( \lvert{A^i \cap B^j}\rvert < \frac{51m}{200} - \varepsilon m\right) \leq \exp(-\varepsilon^2 m/20).
\]
\end{lemma}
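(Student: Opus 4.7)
The plan is to reduce to \autoref{lemma:pcdist} by first showing that the marginal distributions of $A^i$ and $B^j$ are $\pc$-distributions, then computing the expected intersection $\Delta_{i,j}$ in each of the four cases $(i,j) \in \{1,2\}^2$, and finally invoking the concentration bound. For the marginals: by \autoref{def:mu}, $\mu(S)$ can be generated by drawing $A^1 \sim \pc(4, \part_S, \vecreg)$ and $A^2 \sim \pc(4, \part_S, \vecreg')$ independently, where $\vecreg' = (2, 2, 1, 3) \cdot m/16$ reflects that $A^2$ is a clause with respect to $S^{rev}$ (since $\part_{S^{rev}}(b_1, b_2) = \part_S(b_2, b_1)$, the constraint $\lvert \part_{S^{rev}} \cap A^2 \rvert = \vecreg$ becomes $\lvert \part_S \cap A^2 \rvert = \vecreg'$), and then conditioning on the coupling constraint $\lvert \part_S \cap A^1 \cap A^2 \rvert = \vecregpair$. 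Applying \autoref{cor:part} (with $\vec{S} = S$ and $k = 2$), this conditioning preserves the marginals, so $A^1$ is marginally $\pc(4, \part_S, \vecreg)$ and $A^2$ is marginally $\pc(4, \part_S, \vecreg')$. The analogous statement for $\mu(T^{rev})$ gives that $B^1$ is marginally $\pc(4, \part_T, \vecreg)$ and $B^2$ is marginally $\pc(4, \part_T, \vecreg')$.

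Since $(A^1, A^2)$ and $(B^2, B^1)$ are drawn independently, the marginals $A^i$ and $B^j$ are independent as well, and \autoref{lemma:pcdist} applies. The expected intersection is
\[
\Delta_{i,j} \;=\; \sum_{\vec b, \vec b' \in \{0,1\}^2} p^{A^i}_{\vec b}\, p^{B^j}_{\vec b'}\, \frac{\lvert \part_{S \Vert T}(\vec b \Vert \vec b') \rvert}{\lvert \part_S(\vec b)\rvert \cdot \lvert \part_T(\vec b') \rvert},
\]
where the numerator entries come from the compatibility vector $\veccmp$ of \autoref{def:compat} and the denominator entries from \autoref{def:basis}. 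Substituting the values and carrying out each sixteen-term sum confirms that $\Delta_{i,j} = 51m/200$ in all four cases (each sum evaluates to $918/225 = 51/50$, scaled by $m/16$).

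Since $\Delta_{i,j} = 51m/200 < m/2$, we have $(m - \Delta_{i,j})/3 \geq 149m/600 > m/20$, so \autoref{lemma:pcdist} yields
\[
\Pr\!\left( \lvert A^i \cap B^j \rvert < \tfrac{51m}{200} - \varepsilon m \right) \;\leq\; \exp\!\left(-\varepsilon^2 (m - \Delta_{i,j})/3\right) \;\leq\; \exp(-\varepsilon^2 m/20),
\]
as claimed.

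The main obstacle is the bookkeeping in the expected-intersection calculation: the compatibility vector $\veccmp$ has no convenient symmetry (none of the swaps $S^1 \leftrightarrow S^2$, $T^1 \leftrightarrow T^2$, $S \leftrightarrow T$, or complementation preserves it), so all four cases must be evaluated separately to confirm that the numerical parameters in Definitions~\ref{def:basis} and~\ref{def:compat} have been chosen precisely so that every cross-pair intersection has the same expectation as a same-copy intersection. This ``coincidence'' $\Delta_{1,1} = \Delta_{1,2} = \Delta_{2,1} = \Delta_{2,2}$ is the quantitative payoff of the correlated-bases construction sketched in \autoref{sec:sketchfinal}, and it is exactly what makes the cross terms harmless.
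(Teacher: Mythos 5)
Your proof is correct and follows essentially the same route as the paper's: use \Cref{cor:part} to reduce the coupled draw $(A^1,A^2)\sim\mu(S)$, $(B^2,B^1)\sim\mu(T^{rev})$ to independent $\pc$-distributed marginals, invoke \Cref{lemma:pcdist} for concentration, and compute the expected intersection, noting that it falls out to the same value $51m/200$ in all four cases (the paper computes only the $(1,1)$ case explicitly and asserts the others are $\geq 51m/200$; your sharper claim that all four are exactly equal does check out). One minor arithmetic slip: the bracketed sum is $918/225 = 102/25$, not $51/50$, though after scaling by $m/16$ it does give $51m/200$ as stated.
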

\begin{proof} We show the lemma assuming $i = j = 1$. The proof for other values of $(i,j)$ is similar (with different calculations), and we discuss necessary modifications at the end. We derive:
\begin{align*}
\Pr_{\substack{(A^1, A^2) \sim \mu(S)\\ (B^2, B^1) \sim \mu(T^{rev})}}&\left( \lvert{A^1 \cap B^1}\rvert < \frac{51m}{200} - \varepsilon m\right)  \\
&= \sum_{Z, Z' : \lvert{Z \cap Z'}\rvert < \frac{51m}{200} - \varepsilon m}\Pr_{\substack{(A^1, A^2) \sim \mu(S)\\ (B^2, B^1) \sim \mu(T^{rev})}}\left((A^1, B^1) = (Z, Z')\right)\\
&= \sum_{Z, Z' : \lvert{Z \cap Z'}\rvert < \frac{51m}{200} - \varepsilon m}\Pr_{(A^1, A^2) \sim \mu(S)}(A^1 = Z)\Pr_{(B^2, B^1) \sim \mu(T^{rev})}(B^1 = Z') \\
&= \sum_{Z, Z' : \lvert{Z \cap Z'}\rvert < \frac{51m}{200} - \varepsilon m}\Pr_{ \substack{A^1 \sim \mu_{single}(S)\\ A^2 \sim \mu_{single}(S^{rev}) }}(A^1 = Z \mid\lvert{\part_S \cap A^1 \cap A^2}\rvert = \vecregpair) \\
&\hspace{4cm} \times\Pr_{ \substack{B^2 \sim \mu_{single}(T^{rev})\\ B^1 \sim \mu_{single}(T) }}(B^1 = Z' \mid \lvert{\part_{T^{rev}} \cap B^2 \cap B^1}\rvert = \vecregpair) \\
&= \sum_{Z, Z' : \lvert{Z \cap Z'}\rvert < \frac{51m}{200} - \varepsilon m}\Pr_{A \sim \mu_{single}(S)}(A = Z)\Pr_{B \sim \mu_{single}(T)}(B = Z') \tag{\autoref{cor:part}} \\
&= \sum_{Z, Z' : \lvert{Z \cap Z'}\rvert < \frac{51m}{200} - \varepsilon m}\Pr_{\substack{A \sim \mu_{single}(S)\\ B \sim \mu_{single}(T)}}\left((A, B) = (Z, Z')\right) \\
&= \Pr_{\substack{A \sim \mu_{single}(S)\\ B \sim \mu_{single}(T)}}\left(\lvert{A \cap B}\rvert < \frac{51m}{200} - \varepsilon m\right),
\end{align*}
It is thus sufficient to show that $\Pr_{A \sim \mu_{single}(S), B \sim \mu_{single}(T)}\left(\lvert{A \cap B}\rvert < \frac{51m}{200} - \varepsilon m\right) \leq \exp(-\varepsilon^2 m/20)$. We show this using \autoref{lemma:pcdist} as the distributions $\mu_{single}(S) = \pc\left(4, \part_{S}, \vecreg\right)$ and $\mu_{single}(T) = \pc\left(4, \part_{T}, \vecreg\right)$.  By \autoref{lemma:pcdist}, we have 
\[
\Pr_{\substack{A \sim \mu_{single}(S)\\ B \sim \mu_{single}(T)}}\left(\lvert{A \cap B}\rvert < \Delta - \varepsilon m\right) \leq \exp(-\varepsilon^2 (m - \Delta)/3),
\]
 so we just need to compute $\Delta$. Below, recall that $\veccmp$ lists the size of $S^1\cap S^2\cap T^1\cap T^2$, $S^1 \cap S^2 \cap T^1 \cap \bar{T^2}$, etc., and this is where the terms $\frac{4m}{16},\frac{m}{16}$, etc. come from. Recall that $\vecreg$ lists the size of $A^1 \cap S^1 \cap S^2$, etc., and also $B^1 \cap T^1 \cap T^2$, etc. So for example, $|A^1 \cap \bar{S^1} \cap \bar{S^2}| = 2m/16$ (according to $\vecreg$, and $|\bar{S^1} \cap \bar{S^2}| = 4m/16+m/16+0+0 = 5m/16$ (according to $\veccmp$), and therefore $A^1$ contains $2/5$ of the elements in $\bar{S^1} \cap \bar{S^2}$. 

\begin{align*}
\Delta &=  \frac{2}{5} \cdot \frac{2}{5} \cdot \frac{4m}{16} + \frac{2}{5} \cdot \frac{1}{3} \cdot \frac{m}{16} + \frac{1}
{3} \cdot \frac{1}{3} \cdot \frac{m}{16}  \\
&\hspace{1cm}+ \frac{1}{3} \cdot \frac{2}{3} \cdot \frac{2m}{16} + \frac{2}{3} \cdot \frac{2}{5} \cdot \frac{m}{16} +\frac{2}{3} \cdot \frac{2}{3} \cdot \frac{m}{16}  \\
&\hspace{1cm}+ \frac{2}{3} \cdot \frac{3}{5} \cdot \frac{m}{16} + \frac{3}{5} \cdot \frac{1}{3} \cdot \frac{m}{16} +\frac{3}{5} \cdot \frac{3}{5} \cdot \frac{4m}{16}  \\
&= \frac{51}{200} \cdot m.
\end{align*}
Thus, we get,
\[
\Pr_{\substack{A \sim \mu_{single}(S)\\ B \sim \mu_{single}(T)}}\left(\lvert{A \cap B}\rvert < \frac{51m}{200} - \varepsilon m\right) \leq \exp(-149 \varepsilon^2 m/600) <   \exp(-\varepsilon^2 m/20) ,
\]
as desired.

To adjust the proof for the other three values of $(i,j)$, the first half of the proof would be identical, but perhaps replacing $\mu_{single}(S)$ with $\mu_{single}(S^{rev})$ and perhaps replacing $\mu_{single}(T)$ with $\mu_{single}(T^{rev})$. This also causes the precise calculations above for $\Delta$ to change, but all four calculations result in $\Delta \geq 51m/200$.
\end{proof}

\textbf{\Cref{lemma:reg} is the first key property of our construction, which establishes that the union of two regular clauses is $<3m/4$}. Note in particular that Lemma~\ref{lemma:reg} covers both the ``like terms'' and the ``cross terms'' at once. Note also that if we were have a construction which draws uniformly random compatible bases from $\xi$, and then has Alice and Bob draw exponentially-many (but not too many) clause pairs with respect to their basis, that the optimal welfare would be at most $149m/200$.

\subsubsection{Special Clauses} \label{sec:special}
We now describe how to add special clauses to our construction. Again recall that there are three properties we need: first, the special clauses should be indistinguishable from regular clauses. Second, Alice and Bob's special clauses should be disjoint. Third, a special clause should intersect a regular clause ``from the other copy'' at slightly more than $m/4$.

\begin{definition} [Special clauses]
\label{def:special}Let $ S, T$ be bases such that $S$ is compatible with $T$.  We say that a set $A_{\star} \subseteq M$ is $1$-special with respect to $(S, T)$ if:
\[
\lvert{\part_{S \Vert T} \cap A_{\star}}\rvert = \left(\frac{2m}{16}, 0, 0, 0, 0, \frac{m}{16}, 0, 0, \frac{m}{16}, 0, \frac{m}{16}, 0, 0, \frac{m}{16}, 0, \frac{2m}{16}\right).
\]
Similarly, we say that $A_{\star}$ is $2$-special with respect to $(S, T)$ if:
\[
\lvert{\part_{S \Vert T} \cap A_{\star}}\rvert = \left(\frac{2m}{16}, 0, 0, 0, 0, 0, \frac{2m}{16}, 0, \frac{m}{16}, 0, 0, 0, 0, \frac{m}{16}, 0, \frac{2m}{16}\right).
\]

For shorthand, we refer by $\vecspec_1:=\left(\frac{2m}{16}, 0, 0, 0, 0, \frac{m}{16}, 0, 0, \frac{m}{16}, 0, \frac{m}{16}, 0, 0, \frac{m}{16}, 0, \frac{2m}{16}\right)$, and $\vecspec_2:= \left(\frac{2m}{16}, 0, 0, 0, 0, 0, \frac{2m}{16}, 0, \frac{m}{16}, 0, 0, 0, 0, \frac{m}{16}, 0, \frac{2m}{16}\right)$.
\end{definition}

For $i \in \{1, 2\}$, we define $\mu^i_{\star, single}(S, T)$ to be the uniform distribution over all sets that are $i$-special with respect to $(S, T)$.  Observe that $\mu^i_{\star, single}(S, T) = \pc\left(16, \part_{S \Vert T}, \vecspec_i \right)$ for  $i \in \{1, 2\}$. We again define a distribution over a pair of special sets (again intuitively, $A^1_\star$ is special for the ``first copy'' and $A^2_\star$ is special for the ``second copy'').

\begin{definition}[The distribution $\mu_{\star}(\cdot)$]
\label{def:mustar} Let $ S, T$ be bases such that $S$ is compatible with $T$.  
We say that a pair of sets $(A^1_{\star}, A^2_{\star})$ is special with respect to $(S, T)$ if $A^1_{\star}$ is $1$-special with respect to $(S, T)$ and $A^2_{\star}$ is $2$-special with respect to $(S, T)$ and 
\[
\lvert{\part_{S \Vert T} \cap A^1_{\star} \cap A^2_{\star}}\rvert = \left(0, 0, 0, 0, 0, 0, 0, 0, \frac{m}{16}, 0, 0, 0, 0, \frac{m}{16}, 0, 0\right).
\]

For shorthand, we define $\vecspecpair:=\left(0, 0, 0, 0, 0, 0, 0, 0, \frac{m}{16}, 0, 0, 0, 0, \frac{m}{16}, 0, 0\right)$. We define $\mu_{\star}(S, T)$ to be the uniform distribution over all pairs of sets that are special with respect to $(S, T)$.

\end{definition}

\textbf{The third step in our construction is the distribution $\mu_\star(\cdot)$, which describes how Alice and Bob draw potential special clauses once their basis is fixed.}~\Cref{obs:mustar} is again simple, but crucial. In particular, it observes that every pair that is special with respect to $(S,T)$ is also a clause pair with respect to $S$. This means that an independently drawn special pair will be indistinguishable from clause pairs.

\begin{observation}
\label{obs:mustar}
Observe that for bases $S, T$ such that $S$ is compatible with $T$, the fact that a pair of sets $(A^1_{\star}, A^2_{\star})$ is special with respect to $(S, T)$ implies that  $\lvert{\part_{S \Vert T}}\rvert $, $\lvert{\part_{S \Vert T} \cap A^1_{\star}}\rvert $, $\lvert{\part_{S \Vert T} \cap A^2_{\star}}\rvert $, and $\lvert{\part_{S \Vert T} \cap A^1_{\star} \cap A^2_{\star}}\rvert $ are all fixed functions of $m$. This means that there exist a vector $\vecopt$ such that $(A^1_{\star}, A^2_{\star})$ is special with respect to $(S, T)$ if and only if 
\[
\lvert{\part_{S \Vert T \Vert A^1_{\star} \Vert A^2_{\star}}}\rvert  = \vecopt.
\]
We reserve $\vecopt$ to denote this vector for the rest of this document. Furthermore, observe that any pair $(A^1_{\star}, A^2_{\star})$ that  is special with respect to $(S, T)$ is a clause pair with respect to $S$. Thus, for all $Z^1, Z^2 \subseteq M$, we have that 
\[
\Pr_{(A^1_{\star}, A^2_{\star}) \sim \mu_{\star}(S, T)}\left((A^1_{\star}, A^2_{\star}) = (Z^1, Z^2)\right) = \Pr_{(A^1, A^2) \sim \mu(S)}\left((A^1, A^2) = (Z^1, Z^2) \mid \lvert{\part_{S \Vert T \Vert A^1 \Vert A^2}}\rvert = \vecopt\right) .
\]
\end{observation}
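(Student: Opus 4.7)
The plan is to verify the three assertions of the observation in order, each by direct inclusion--exclusion on the partition sizes. For the first assertion, I would note that the partition $\part_{S \Vert T \Vert A^1_\star \Vert A^2_\star}$ has $2^6 = 64$ parts, indexed by bit tuples $(b_1,b_2,b_3,b_4,b_5,b_6)$ with $b_1,b_2$ selecting which half of $S$, $b_3,b_4$ the half of $T$, and $b_5,b_6$ the halves of $A^1_\star, A^2_\star$. The four data $|\part_{S\Vert T}|$, $|\part_{S\Vert T}\cap A^1_\star|$, $|\part_{S\Vert T}\cap A^2_\star|$, and $|\part_{S\Vert T}\cap A^1_\star \cap A^2_\star|$ are determined by the hypotheses (as $\veccmp$, $\vecspec_1$, $\vecspec_2$, and $\vecspecpair$ respectively, from compatibility and the definition of special). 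For every fixed $(b_1,\ldots,b_4)$, knowing these four quantities determines the sizes of all four cells with a given $(b_5,b_6)$: the size at $(1,1)$ is read from $\vecspecpair$, at $(1,0)$ from $\vecspec_1$ minus the $(1,1)$ value, at $(0,1)$ from $\vecspec_2$ minus the $(1,1)$ value, and at $(0,0)$ from $\veccmp$ minus the previous three. Collecting these across the $16$ choices of $(b_1,\ldots,b_4)$ yields a fixed integer vector $\vecopt \in \mathbb{Z}^{64}$, and conversely the equality $|\part_{S\Vert T\Vert A^1_\star\Vert A^2_\star}| = \vecopt$ immediately implies all four marginal conditions defining ``special.'' This gives the claimed iff.

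For the second assertion---that any special pair is a clause pair with respect to $S$---I would verify the three conditions of \Cref{def:mu} by collapsing the length-$16$ vectors to length-$4$ vectors indexed by the bits $(b_1,b_2)$ of $\part_S$. Concretely, summing the entries of $\vecspec_1$ over the four $(b_3,b_4)$ blocks yields the vector $(2m/16, m/16, 2m/16, 3m/16) = \vecreg$, so $A^1_\star$ is a clause with respect to $S$. Summing $\vecspec_2$ analogously yields $(2m/16, 2m/16, m/16, 3m/16)$, which after swapping the $(0,1)$ and $(1,0)$ coordinates (the reindexing $\part_S \to \part_{S^{rev}}$) is exactly $\vecreg$, so $A^2_\star$ is a clause with respect to $S^{rev}$. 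Finally, summing $\vecspecpair$ over the $(b_3,b_4)$ blocks produces $(0,0,m/16,m/16) = \vecregpair$, matching the intersection condition.

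For the third assertion, both sides are uniform distributions, so it suffices to identify their supports. The left-hand side is by definition uniform on the set of pairs that are special with respect to $(S,T)$. The right-hand side is $\mu(S)$ (uniform over clause pairs with respect to $S$) conditioned on the event $|\part_{S\Vert T\Vert A^1 \Vert A^2}| = \vecopt$. By the iff in the first assertion, this event is exactly the event that $(A^1,A^2)$ is special with respect to $(S,T)$, and by the second assertion every such pair already lies in the support of $\mu(S)$. Hence conditioning restricts $\mu(S)$ to a uniform distribution on the set of special pairs, matching $\mu_\star(S,T)$ pointwise. The only mild subtlety is ensuring the conditioning event has positive probability, which follows from the explicit construction of a special pair given in \Cref{fig:compat}, guaranteeing that the set of special pairs is nonempty whenever $(S,T)$ are compatible.
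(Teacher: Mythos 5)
Your proof is correct and fills in exactly the argument that the paper leaves implicit (the paper states this as an observation with no written justification). The inclusion--exclusion over the $64$ cells of $\part_{S\Vert T\Vert A^1_\star\Vert A^2_\star}$, the collapse of the length-$16$ vectors $\vecspec_1$, $\vecspec_2$, $\vecspecpair$ to the length-$4$ vectors $\vecreg$, $\vecreg$ (after the $\part_S\to\part_{S^{rev}}$ reindexing), and $\vecregpair$, and the support-matching argument for the two uniform distributions all check out, including the positive-probability caveat resolved by the explicit witness in \autoref{fig:compat}.
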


\textbf{\Cref{obs:mustar} is the second key property of our construction, which suggests that special clauses are indistinguishable from regular clauses, prior to any communication.} Recall that if $S$ is compatible with $T$, then $T^{rev}$ is compatible with $S^{rev}$. It can be verified from \autoref{def:mustar} that $(A^1_{\star},A^2_{\star})$ is special with respect to $(S,T)$ if and only if $(\overline{A^2_{\star}},\overline{A^1_{\star}})$ is special with respect to $(T^{rev},S^{rev})$.  See \autoref{fig:compat} for a depiction of such a configuration of sets.

Next, we show, in \autoref{lemma:special}, an analogue of  \autoref{lemma:reg} for special sets. Just like \autoref{lemma:reg} shows that the intersection of a regular clause of Alice and a regular clause of Bob has size  $> \frac{m}{4}$ with high probability, \autoref{lemma:special} shows that if $(A^1_{\star},A^2_{\star})$ is special with respect to $(S,T)$, then, intersection of $A^1_{\star}$ with any clause with respect to $T^{rev}$ and the intersection of $A^2_{\star}$ with any clause with respect to $T$ has size $ > \frac{m}{4}$ with high probability. 

\emph{We note that \autoref{lemma:special} does not make similar claims regarding the intersection of $A^1_{\star}$ and clauses with respect to $T$ and the intersection of $A^2_{\star}$ and clauses with respect to $T^{rev}$}. This is no coincidence, as these intersections have size $< \frac{m}{4}$ (with high probability). Intuitively, this should be expected: recall from the~\cite{BravermanMW18} construction that a special clause for Alice and a regular clause for Bob had intersection $< m/4$. The intersection of $A^1_{\star}$ with a clause with respect to $T$ is the analogue in our construction. But we still need to make sure that the intersection of a special clause of Alice for one copy and a regular clause for Bob \emph{in the other copy} is large, and this is what~\Cref{lemma:special} states.

\begin{lemma}
\label{lemma:special} Consider $\varepsilon > 0$ and bases $S, T$  such that $S$ is compatible with $T$. For all $i \in \{1, 2\}$, we have
\[
\Pr_{\substack{(A^1_{\star},A^2_{\star}) \sim \mu_{\star}(S, T)\\ (B^2, B^1) \sim \mu(T^{rev})}}\left( \lvert{A^i_{\star} \cap B^{3-i}}\rvert < \frac{61m}{240} - \varepsilon m\right) \leq \exp(-\varepsilon^2 m/20).
\]
\end{lemma}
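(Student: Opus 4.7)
My plan is to mirror the proof of~\Cref{lemma:reg}, using~\Cref{obs:mustar} in place of~\Cref{obs:mu}. Fix $i = 1$; the $i=2$ case follows by the same argument with a structurally identical but symmetric calculation. First I would reduce to the case of independent marginals by writing
\[
\Pr_{\substack{(A^1_\star,A^2_\star) \sim \mu_\star(S,T)\\(B^2,B^1) \sim \mu(T^{rev})}}\left(\lvert A^1_\star \cap B^2\rvert < \tfrac{61m}{240} - \varepsilon m\right)
=
\sum_{\substack{Z,Z'\\ \lvert Z \cap Z' \rvert < \frac{61m}{240} - \varepsilon m}} \Pr\big(A^1_\star = Z\big) \Pr\big(B^2 = Z'\big),
\]
and then applying~\Cref{lemma:part} (with $\vec{S} = S \Vert T$, $\vec{S}_1 = (A^1_\star)$, $\vec{S}_2 = (A^2_\star)$, and the condition $|\part_{S \Vert T \Vert A^1_\star \Vert A^2_\star}| = \vecopt$ from~\Cref{obs:mustar}) to identify the marginal of $A^1_\star$ under $\mu_\star(S,T)$ with $\mu^1_{\star,single}(S,T) = \pc(16, \part_{S \Vert T}, \vecspec_1)$, and~\Cref{cor:part} to identify the marginal of $B^2$ under $\mu(T^{rev})$ with $\mu_{single}(T^{rev}) = \pc(4, \part_{T^{rev}}, \vecreg)$. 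This reduces the problem to bounding the same event for $A \sim \pc(16, \part_{S \Vert T}, \vecspec_1)$ drawn independently of $B \sim \pc(4, \part_{T^{rev}}, \vecreg)$, to which~\Cref{lemma:pcdist} directly applies.

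The main remaining task is computing the expectation $\Delta$ from~\Cref{lemma:pcdist}. The simplifying observation is that $\part_{S \Vert T}$ refines $\part_{T^{rev}}$: for $b = (b_1,b_2,b_3,b_4)$ and $b' = (c_1,c_2)$, the intersection $\part_{S \Vert T}(b) \cap \part_{T^{rev}}(b')$ equals $\part_{S \Vert T}(b)$ precisely when $(c_1,c_2) = (b_4,b_3)$ and is empty otherwise. Consequently
\[
\Delta = \sum_{b:\,\vecspec_1(b) > 0} \vecspec_1(b) \cdot \frac{\vecreg(b_4,b_3)}{|\part_{T^{rev}}(b_4,b_3)|}.
\]
Enumerating the six nonzero coordinates of $\vecspec_1$, namely $b \in \{0000, 0101, 1000, 1010, 1101, 1111\}$, and reading off the $(T^1,T^2)$-fractions $(2/5, 2/3, 2/5, 1/3, 2/3, 3/5)$ of $B^2$ in each corresponding part, the sum becomes
\[
\Delta = \tfrac{m}{20} + \tfrac{m}{24} + \tfrac{m}{40} + \tfrac{m}{48} + \tfrac{m}{24} + \tfrac{3m}{40} = \tfrac{12+10+6+5+10+18}{240}\cdot m = \tfrac{61m}{240}.
\]

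Plugging into~\Cref{lemma:pcdist} gives $\exp(-\varepsilon^2(m - \Delta)/3) = \exp(-\varepsilon^2 \cdot 179m/720)$, and $179/720 > 1/20$ yields the claimed bound $\exp(-\varepsilon^2 m/20)$. For $i = 2$, the marginalization step is identical (using $\mu_{single}(T)$ in place of $\mu_{single}(T^{rev})$ and $\vecspec_2$ in place of $\vecspec_1$), and the analogous enumeration of the six nonzero coordinates of $\vecspec_2$, $b \in \{0000, 0110, 1000, 1101, 1111\}$ (one has multiplicity), gives contributions $\tfrac{m}{20} + \tfrac{m}{12} + \tfrac{m}{40} + \tfrac{m}{48} + \tfrac{3m}{40} = \tfrac{61m}{240}$ again. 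The only nontrivial step is the arithmetic verification that both $\Delta$ values equal $61m/240$ exactly; the reduction itself is routine given the preceding lemmas.
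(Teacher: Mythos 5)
Your proposal is correct and follows essentially the same route as the paper: reduce the joint draw to independent marginals via the indistinguishability corollary, then invoke the concentration bound and compute $\Delta$. The refinement observation ($\part_{S\Vert T}$ refines $\part_{T^{rev}}$) is just a cleaner bookkeeping of the same $\Delta$ that the paper computes term-by-term, and both yield $61m/240$; the only slip is a cosmetic one in the $i=2$ case where you speak of ``six nonzero coordinates'' of $\vecspec_2$ when there are five (some with value $2m/16$), but your arithmetic for that case is nonetheless correct.
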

\begin{proof} We show the lemma assuming $i = 1$. The proof for $i=2$ is similar (with different calculations), and we discuss necessary modifications at the end. We derive:
\begin{align*}
\Pr_{\substack{(A^1_{\star},A^2_{\star}) \sim \mu_{\star}(S, T)\\ (B^2, B^1) \sim \mu(T^{rev})}}&\left( \lvert{A^1_{\star} \cap B^2}\rvert < \frac{61m}{240} - \varepsilon m\right) \\
&= \sum_{Z, Z' : \lvert{Z \cap Z'}\rvert < \frac{61m}{240} - \varepsilon m}\Pr_{\substack{(A^1_{\star},A^2_{\star}) \sim \mu_{\star}(S, T)\\ (B^2, B^1) \sim \mu(T^{rev})}} \left((A^1_{\star}, B^2) = (Z, Z')\right)\\
&= \sum_{Z, Z' : \lvert{Z \cap Z'}\rvert < \frac{61m}{240} - \varepsilon m}\Pr_{(A^1_{\star},A^2_{\star}) \sim \mu_{\star}(S, T)} \left(A^1_{\star} = Z\right)\Pr_{(B^2, B^1) \sim \mu(T^{rev})} \left( B^2 = Z'\right)\\
&= \sum_{Z, Z' : \lvert{Z \cap Z'}\rvert < \frac{61m}{240} - \varepsilon m}\Pr_{ \substack{A^1_{\star} \sim \mu^1_{\star, single}(S, T) \\ A^2_{\star} \sim \mu^2_{\star, single}(S, T)}} \left(A^1_{\star} = Z \mid \lvert{\part_{S \Vert T} \cap A^1_{\star} \cap A^2_{\star}}\rvert = \vecspecpair \right) \\
&\hspace{4cm} \times \Pr_{\substack{B^2 \sim \mu_{single}(T^{rev}) \\ B^1 \sim \mu_{single}(T)}} \left( B^2 = Z'  \mid \lvert{\part_{T^{rev}} \cap B^2 \cap B^1}\rvert = \vecregpair \right)\\
&= \sum_{Z, Z' : \lvert{Z \cap Z'}\rvert < \frac{61m}{240} - \varepsilon m}\Pr_{A_{\star} \sim \mu^1_{\star, single}(S, T)} \left(A_{\star} = Z\right)\Pr_{B \sim \mu_{single}(T^{rev})} \left( B = Z'\right)\tag{\autoref{cor:part}} \\
&= \sum_{Z, Z' : \lvert{Z \cap Z'}\rvert < \frac{61m}{240} - \varepsilon m}\Pr_{\substack{A_{\star} \sim \mu^1_{\star, single}(S, T)\\ B \sim \mu_{single}(T^{rev}) }} \left((A_{\star}, B) = (Z,Z')\right)\\
&=\Pr_{\substack{A_{\star} \sim \mu^1_{\star, single}(S, T)\\ B \sim \mu_{single}(T^{rev}) }} \left(\lvert{A_{\star} \cap B}\rvert < \frac{61m}{240} - \varepsilon m\right).
\end{align*}
It is thus sufficient to show that $\Pr_{A_{\star} \sim \mu^1_{\star, single}(S, T), B \sim \mu_{single}(T^{rev}) }\left(\lvert{A_{\star} \cap B}\rvert < \frac{61m}{240} - \varepsilon m\right) \leq \exp(-\varepsilon^2 m/20)$. We show this using \autoref{lemma:pcdist} as the distribution $ \mu^1_{\star, single}(S, T)  = \pc\left(16, \part_{S \Vert T}, \vecspec_1 \right)$ and $\mu_{single}(T^{rev}) = \pc\left(4, \part_{T^{rev}}, \vecreg\right)$.  By \autoref{lemma:pcdist}, we have 
\[
\Pr_{\substack{A_{\star} \sim \mu^1_{\star, single}(S, T)\\ B \sim \mu_{single}(T^{rev}) }} \left(\lvert{A_{\star} \cap B}\rvert < \Delta - \varepsilon m\right)  \leq \exp(-\varepsilon^2 (m - \Delta)/3),
\]
so we just need to compute $\Delta$. Again, recall that the relevant terms come from the vectors $\vecspec_1,\vecreg,\veccmp$. Expanding the calculations, we get:
\begin{align*}
\Delta &=  \frac{1}{2} \cdot \frac{2}{5} \cdot \frac{4m}{16} + \frac{2}{3} \cdot \frac{m}{16} + \frac{2}{5} \cdot \frac{m}{16}  \\
&\hspace{1cm}+ \frac{1}{3} \cdot \frac{m}{16} +  \frac{2}{3} \cdot \frac{m}{16}  +\frac{1}{2} \cdot \frac{3}{5} \cdot \frac{4m}{16}  = \frac{61}{240} \cdot m.
\end{align*}
Thus, we get,
\[
\Pr_{\substack{A_{\star} \sim \mu^1_{\star, single}(S, T)\\ B \sim \mu_{single}(T^{rev}) }} \left(\lvert{A_{\star} \cap B}\rvert < \frac{61m}{240} - \varepsilon m\right)  \leq \exp(-179 \varepsilon^2 m/720) <   \exp(-\varepsilon^2 m/20) ,
\]
as desired. 

Adjusting the proof for $i=2$ just requires replacing $1$ with $2$ in the first half of the proof. The calculations for $\Delta$ are similar, and also $\geq 61m/240$.
\end{proof}

\textbf{\Cref{lemma:special} is the third key property of our construction, which establishes that the union of a special clause for one copy with a regular clause of the other copy is $< 3m/4$.} 

With the three building blocks and these three properties, we can now define our full construction.

\subsection{The Distribution $\nu$} \label{sec:nu}

We now define a distribution $\nu$ over pairs of functions $(v^{\alice} , v^{\bob}) \in \bxos_m$ (recall the definition of $\bxos_m$ from \Cref{sec:mainformal}) that we will use to show \autoref{thm:mainred}. 

Fix $\varepsilon > 0$ and define $n = \exp\left(\frac{\eps^2 \cdot m}{100}\right)$. We assume for simplicity that $n$ is an integer. This will be our hard instance for $\bxos_m$ combinatorial auctions.

\begin{tbox}
\begin{itemize}[leftmargin=10pt]
	\item \textbf{Sampling $(v^{\alice} , v^{\bob}) \sim \nu$}: 
\begin{enumerate}[label=(\arabic*)]
	\item \label{line:nubases} Sample bases $(S, T) \sim \xi$.
	\item \label{line:nuclauses} Sample $i_{\star} \sim \unif([n])$ and construct sequences $\vec{A}^1, \vec{A}^2, \vec{B}^1, \vec{B}^2$ of $n$ subsets of $M$ as follows (where $\vec{A}^1 = A^1_1, \cdots, A^1_n$, {\em etc.}):
	\begin{enumerate}[label=(\alph*)]
		\item For $i \neq i_{\star} \in [n]$,  sample $(A^1_i, A^2_i)  \sim \mu(S)$ and $(B^2_i, B^1_i)  \sim \mu(T^{rev})$ independently. 
		\item Sample $(A^1_{\star}, A^2_{\star}) \sim \mu_{\star}(S, T)$ and set $(A^1_{i_{\star}}, A^2_{i_{\star}}, B^1_{i_{\star}}, B^2_{i_{\star}})  =  (A^1_{\star}, A^2_{\star}, \overline{A^1_{\star}}, \overline{A^2_{\star}})$. 
	\end{enumerate}
	\item Sample $\theta \in \unif(\{1,2\})$, and sequences $\vec{r}^{\alice}=r^{\alice}_1,\cdots,r^{\alice}_n \in \{1,2\}^{n}$ and $\vec{r}^{\bob}=r^{\bob}_1,\cdots,r^{\bob}_n \in \{1,2\}^{n}$ uniformly at random subject to $r^{\alice}_{i_{\star}} = r^{\bob}_{i_{\star}} = \theta$.
	\item Define $v^{\alice}(Z) = \max_{F \in \mathcal{F}^{\alice}} \vert{Z \cap F}\rvert$  and $v^{\bob}(Z) = \max_{F \in \mathcal{F}^{\bob}} \vert{Z \cap F}\rvert$ where, for all $Z \subseteq M$,
	\[
		\mathcal{F}^{\alice} = \{A^{r^{\alice}_i}_i \mid i \in [n]\} \hspace{1cm} \text{ and }\hspace{1cm} \mathcal{F}^{\bob} = \{B^{r^{\bob}_i}_i \mid i \in [n]\}.
	\]
\end{enumerate}
\end{itemize}
\end{tbox}

Before continuing, we briefly elaborate on each step, and connect it to our proof sketch. In (1), we jointly draw a basis for each copy of the modified~\cite{BravermanMW18} construction. $(S^1,T^1)$ is the basis for the first copy, and $(S^2,T^2)$ is the basis for the second copy. In step (2), we first draw a uniformly random index in $[n]$ where we will hide the special clauses. Each index $i$ corresponds to \emph{two} clauses for Alice and \emph{two} clauses for Bob. Intuitively, the first clause for Alice is in ``copy one'' and the second is in ``copy two.'' In (2a), we draw pairs of regular clauses uniformly at random for each non-special index for both Alice and Bob. In (2b) we jointly draw special clauses for Alice and Bob \emph{that are disjoint}. In step (3), we visit each index and pick \emph{one of the two clauses uniformly at random to include}. That is, for each index, there is a ``copy one'' clause and a ``copy two'' clause. One of these will be a clause in the defined valuation function in step (4), and one of them will be ignored. Importantly, \emph{$r^\alice_{i^\star}=r^\bob_{i^\star}=\theta$}, meaning that Alice and Bob have a special set from the same copy, and therefore the optimal welfare is $m$ in every instance drawn from $\nu$. This further implies that knowing $\theta$ is equivalent to know which copy is special. This setup allows us to provide a somewhat clean outline of an information theoretic proof that learning $\theta$ requires exponential communication---$r^\alice_{i^\star}$ appears indistinguishable from $r^\alice_{i}$ for all other $i \in [n]$. Therefore, any \emph{simultaneous} algorithm which reveals non-trivial information about $r^\alice_{i^\star}$ must reveal non-trivial information about all $r^\alice_i$. We now proceed with analysis of our construction.

For notational convenience, it will be easier to consider $\nu$ as the  distribution of a random variable $\Upsilon  = (S, T, i_{\star}, \vec{A}^1, \vec{A}^2, \vec{B}^1, \vec{B}^2, \theta, \vec{r}^{\alice}, \vec{r}^{\bob})$ and consider $v^{\alice}, v^{\bob}$ as functions of $\Upsilon$. We will also need shorthand for certain entries of $\Upsilon$. We will use $\mathcal{A}$ to denote the pair $(\vec{A}^1, \vec{A}^2)$, $\mathcal{B}$ to denote the pair $(\vec{B}^1, \vec{B}^2)$, $\Upsilon^{\alice}$ to denote $(S, \mathcal{A}, \vec{r}^{\alice})$,  $\Upsilon^{\bob}$ to denote $(T, \mathcal{B}, \vec{r}^{\bob})$, and finally $\Upsilon_{-\theta}$ to denote $(\Upsilon^{\alice}, \Upsilon^{\bob}, i_{\star})$. Next, using $\Upsilon$, we define random variables $v^{\alice}_{j}, v^{\bob}_j \in \bxos_m$ for $j \in \{1, 2\}$. To simplify notation, we omit $\Upsilon$ from these random variables even though  they are functions of $\Upsilon$. We define, for $j \in \{1, 2\}$ and $Z \subseteq M$:
\[
v^{\alice}_j(Z) = \max_{F \in \mathcal{F}^{\alice}_j} \lvert{Z \cap F}\rvert \hspace{3cm} v^{\bob}_j(Z) = \max_{F \in \mathcal{F}^{\bob}_j} \lvert{Z \cap F}\rvert, 
\]
where
\[
\mathcal{F}^{\alice}_j =  \{A^{j'}_{i} \mid i \in [n], j' \in [2]\}  \setminus \{A^{3-j}_{i_{\star}}\}  \hspace{1cm} \mathcal{F}^{\bob}_j = \{B^{j'}_{i} \mid i \in [n], j' \in [2]\}  \setminus \{B^{3-j}_{i_{\star}}\} .
\]

Intuitively, $v^\alice_\theta$ has strictly more clauses than $v^\alice$: it contains \emph{every} regular clause (but still only one special clause). While of course Alice does not know the valuation $v^\alice_\theta$ (because she does not know which clause is special), we can still nonetheless use it to upper bound the value of $v^\alice$ for any set. 

\subsection{A Good Allocation Determines $\theta$}
Two key properties establish $\nu$ as a hard distribution. The first property is that $\theta$ can be recovered immediately from any allocation which guarantees a $\nicefrac{3}{4}$-approximation. This is captured in~\autoref{lemma:nu} below.

We mention that the proof of \Cref{item:nu3} of \autoref{lemma:nu} uses the observation that $\lvert{A^j_i}\rvert = \lvert{B^j_i}\rvert = \frac{m}{2}$ for all $i \in [n], j \in [2]$. It also crucially leverages the fact that we are taking the minimum over $j \in \{1, 2\}$ (as is captured by $\forall$). In particular, the same statement with the minimum replaced by an average over $j$ is not true. This should be expected, as otherwise it would contradict the randomized simultaneous algorithm of \cite{BravermanMW18} which guarantees a $3/4$-approximation in expectation. 

In~\Cref{lemma:nu} below,~\Cref{item:nu1} simply states that the optimal welfare is always $m$. We have given intuition for this immediately following the definition of $\nu$, but the proof below makes this rigorous.~\Cref{item:nu2} is straight-forward as $v^\alice_\theta$ has strictly more clauses than $v^\alice$.~\Cref{item:nu3} is the crucial bullet, which states that (except with exponentially small probability) \emph{no allocation achieves welfare $3m/4$ when $\theta =1$ \textbf{and} when $\theta = 2$.} Therefore, learning an allocation which guarantees welfare at least $3m/4$ immediately determines $\theta$.

Recall the definition of $\opt(\cdot)$ from \Cref{sec:caprelim} and that $\Upsilon$ defines $v^{\alice}, v^{\bob}$.

\begin{lemma} \label{lemma:nu} We have:
\begin{enumerate}
\item \label{item:nu1} For all $\Upsilon \sim \nu$, we have $\opt(v^{\alice}, v^{\bob}) = m$.
\item \label{item:nu2} For all $\Upsilon \sim \nu$ and  $Z \subseteq M$, we have $v^{\alice}(Z) \leq v^{\alice}_{\theta}(Z)$ and $ v^{\bob}(Z) \leq v^{\bob}_{\theta}(Z)$.
\item \label{item:nu3} It holds that:
\[
\Pr_{\Upsilon \sim \nu}\left( \exists Z \subseteq M: \forall  j \in \{1, 2\}: v^{\alice}_j(Z) + v^{\bob}_j(\overline{Z}) > \frac{179m}{240} + \varepsilon m  \right) \leq 12n^2 \cdot \exp\left(-\frac{\varepsilon^2m}{20}\right).
\]
\end{enumerate}
\end{lemma}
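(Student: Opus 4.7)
My plan is to handle the three parts of \Cref{lemma:nu} in order, with only Part~\ref{item:nu3} requiring substantive work.

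For Part~\ref{item:nu1}, I would exhibit the allocation $O^{\alice} = A^{\theta}_\star$, $O^{\bob} = \overline{A^{\theta}_\star}$. Because $r^{\alice}_{i_\star} = r^{\bob}_{i_\star} = \theta$, both $A^{\theta}_\star \in \mathcal{F}^{\alice}$ and $\overline{A^{\theta}_\star} = B^{\theta}_{i_\star} \in \mathcal{F}^{\bob}$, so this (valid) partition achieves welfare $|A^{\theta}_\star| + |\overline{A^{\theta}_\star}| = m$. The matching upper bound is immediate: every clause in the construction has size exactly $m/2$ (sum of $\vecreg$, of each $\vecspec_i$, or of their complements), hence $v^{\alice}(Z), v^{\bob}(Z) \leq m/2$ for any $Z$. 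For Part~\ref{item:nu2}, I would observe $\mathcal{F}^{\alice} \subseteq \mathcal{F}^{\alice}_{\theta}$: the only clause excluded from $\mathcal{F}^{\alice}_{\theta}$ is $A^{3-\theta}_{i_\star}$, which cannot appear in $\mathcal{F}^{\alice} = \{A^{r^{\alice}_i}_i : i \in [n]\}$ because $r^{\alice}_{i_\star} = \theta \neq 3-\theta$. The Bob case is symmetric, and monotonicity of the max delivers both inequalities.

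For Part~\ref{item:nu3}, the plan is to define a good event $\mathcal{E}_{\text{good}}$ and derive a deterministic contradiction under it. I take $\mathcal{E}_{\text{good}}$ to be the conjunction of: (i) $|A^{j'}_i \cap B^{\tilde{\jmath}}_{\tilde{\imath}}| \geq \frac{51m}{200} - \varepsilon m$ for every regular $i, \tilde{\imath} \neq i_\star$ and all $j', \tilde{\jmath} \in \{1,2\}$ (via \Cref{lemma:reg}); (ii) $|A^{j}_\star \cap B^{3-j}_{\tilde{\imath}}| \geq \frac{61m}{240} - \varepsilon m$ for all $j \in \{1,2\}$ and $\tilde{\imath} \neq i_\star$ (via \Cref{lemma:special}); and (iii) the symmetric bounds $|A^{3-j}_i \cap \overline{A^{j}_\star}| \geq \frac{61m}{240} - \varepsilon m$ for all $j$ and $i \neq i_\star$, obtained by applying \Cref{lemma:special} in the ``swapped'' setup where Bob's special pair $(\overline{A^2_\star}, \overline{A^1_\star}) \sim \mu_\star(T^{rev}, S^{rev})$ and Alice's pair $\sim \mu(S)$ play the analogous roles. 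A union bound over $\leq 4(n-1)^2 + 4(n-1) \leq 12n^2$ failure events, each of probability $\leq \exp(-\varepsilon^2 m / 20)$, yields the stated bound on $\Pr[\neg \mathcal{E}_{\text{good}}]$.

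The core is a deterministic argument under $\mathcal{E}_{\text{good}}$. Assume a witness $Z$ exists, and for each $j$ pick $A_j \in \mathcal{F}^{\alice}_j$ and $B_j \in \mathcal{F}^{\bob}_j$ attaining the respective maxima. Since all clauses have size $m/2$ and $Z \cap A_j$, $\overline Z \cap B_j$ are disjoint subsets of $A_j \cup B_j$, I obtain $v^{\alice}_j(Z) + v^{\bob}_j(\overline Z) = |Z \cap A_j| + |\overline Z \cap B_j| \leq m - |A_j \cap B_j|$, forcing $|A_j \cap B_j| < \frac{61m}{240} - \varepsilon m$. Combined with $\mathcal{E}_{\text{good}}$ (using also that $\frac{51}{200} > \frac{61}{240}$), this rules out regular--regular, special--opposite-copy-regular, and opposite-copy-regular--special pairs, so $(A_j, B_j)$ must take one of three ``same-copy'' forms: $(A^{j}_\star, B^{j}_{i})$ with $i \neq i_\star$, $(A^{j}_{i'}, \overline{A^{j}_\star})$ with $i' \neq i_\star$, or $(A^{j}_\star, \overline{A^{j}_\star})$. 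The crux is then that for the cross pair $(A_1, B_2)$, a four-case analysis -- three cases handled by items (i)--(iii) of $\mathcal{E}_{\text{good}}$, and the remaining case $(A^1_\star, \overline{A^2_\star})$ handled by the deterministic identity $|A^1_\star \cap \overline{A^2_\star}| = \frac{m}{2} - \frac{m}{8} = \frac{3m}{8}$ that follows from $\vecspecpair$ -- gives $|A_1 \cap B_2| \geq \frac{61m}{240} - \varepsilon m$, and symmetrically $|A_2 \cap B_1| \geq \frac{61m}{240} - \varepsilon m$. Summing the cross-term bounds $|Z \cap A_1| + |\overline Z \cap B_2| \leq m - |A_1 \cap B_2|$ and $|Z \cap A_2| + |\overline Z \cap B_1| \leq m - |A_2 \cap B_1|$ yields $\sum_{j=1}^{2}\bigl(v^{\alice}_j(Z) + v^{\bob}_j(\overline Z)\bigr) \leq \frac{358m}{240} + 2\varepsilon m$, contradicting the hypothesis that each summand strictly exceeds $\frac{179m}{240} + \varepsilon m$. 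The main obstacle is verifying the cross-bound uniformly across the case analysis; this succeeds precisely because the construction was engineered so that ``cross-copy'' interactions mimic ``within-copy'' ones, as outlined in \Cref{sec:sketchfinal}.
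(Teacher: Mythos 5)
Your proof is correct and follows essentially the same approach as the paper's. Parts~\ref{item:nu1} and~\ref{item:nu2} are argued identically; for Part~\ref{item:nu3}, your event $\mathcal{E}_{\text{good}}$ is precisely the negation of the paper's $E = E_{reg}\vee E^{\alice}_{special}\vee E^{\bob}_{special}$, the probability bound uses the same union bounds over \Cref{lemma:reg} and \Cref{lemma:special}, and the deterministic case analysis on the maximizing pairs covers the same cases with the same intersection inequalities (including the terminal identity $\lvert A^{1}_{\star}\cap\overline{A^{2}_{\star}}\rvert=\frac{3m}{8}$ from $\vecspecpair$, which the paper also invokes via \autoref{def:mustar}). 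The only difference is organizational: you first constrain each within-copy pair $(A_j,B_j)$ to one of three same-copy special forms and then bound both cross pairs at once, whereas the paper interleaves the reduction through a three-way case split and an averaging step — but the underlying ideas and bounds are the same.
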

\begin{proof}
We show each part in turn:
\begin{enumerate}
\item For the first part,  is is enough to show that $\opt(v^{\alice}, v^{\bob}) \geq m$. We have $\opt(v^{\alice}, v^{\bob}) \geq v^{\alice}(A^{\theta}_{i_{\star}}) + v^{\bob}(\overline{A^{\theta}_{i_{\star}}}) = v^{\alice}(A^{\theta}_{i_{\star}}) + v^{\bob}(B^{\theta}_{i_{\star}}) = m$. 
\item For the second part, we only argue for $v^{\alice}(Z) \leq v^{\alice}_{\theta}(Z)$ as the other argument is symmetric. This follows by the definition of $v^{\alice}$ and $v^{\alice}_{\theta}$ and the fact that $\mathcal{F}^{\alice} \subseteq \mathcal{F}^{\alice}_{\theta}$.
\item For the third part, we define the following events over the randomness in $\Upsilon$.
\begin{align*}
E_{reg} &\equiv \exists i, i' \neq i_{\star}, j, j' \in \{1, 2\} : \lvert{A^j_i \cap B^{j'}_{i'}}\rvert < \frac{51m}{200} - \varepsilon m.\\
E^{\alice}_{special} &\equiv \exists i \neq i_{\star}, j \in \{1, 2\} : \lvert{A^j_{i_{\star}} \cap B^{3 -j}_{i}}\rvert < \frac{61m}{240} - \varepsilon m.\\
E^{\bob}_{special} &\equiv \exists i \neq i_{\star}, j \in \{1, 2\} : \lvert{A^{3-j}_{i} \cap B^{j}_{i_{\star}}}\rvert < \frac{61m}{240} - \varepsilon m.
\end{align*}
Finally, define the event $E = E_{reg} \vee E^{\alice}_{special} \vee E^{\bob}_{special}$.  We claim that
\begin{claim*}
 $\Pr(E)  \leq 12n^2 \cdot\exp\left(-\frac{\varepsilon^2m}{20}\right)$. 
 \end{claim*}
 \begin{proof}
 By the union bound, we have $\Pr(E)  \leq \Pr(E_{reg}) + \Pr(E^{\alice}_{special}) + \Pr(E^{\bob}_{special})$. We next show that each one of $\Pr(E_{reg})$, $\Pr(E^{\alice}_{special})$, $\Pr(E^{\bob}_{special})$ is at most $4n^2 \cdot \exp\left(-\frac{\varepsilon^2m}{20}\right)$. 

We start by showing $\Pr(E_{reg}) \leq 4n^2 \cdot \exp\left(-\frac{\varepsilon^2m}{20}\right)$. We derive using \autoref{lemma:reg}:
\[
\Pr(E_{reg}) \leq \sum_{i, i' \neq i_{\star}} \sum_{j, j' \in \{1,2\}} \Pr\left(\lvert{A^j_i \cap B^{j'}_{i'}}\rvert < \frac{51m}{200} - \varepsilon m\right) \leq 4n^2 \cdot \exp\left(-\frac{\varepsilon^2m}{20}\right).
\]
We next show that $\Pr(E^{\alice}_{special}) \leq 4n^2 \cdot \exp\left(-\frac{\varepsilon^2m}{20}\right)$. We derive using \autoref{lemma:special}:
\[
\Pr(E^{\alice}_{special}) \leq \sum_{i \neq i_{\star}} \sum_{j \in \{1,2\}} \Pr\left(\lvert{A^j_{i_{\star}} \cap B^{3-j}_{i}}\rvert < \frac{61m}{240} - \varepsilon m\right) \leq 4n^2 \cdot \exp\left(-\frac{\varepsilon^2m}{20}\right).
\]
Finally, we show that $\Pr(E^{\bob}_{special}) \leq 4n^2 \cdot \exp\left(-\frac{\varepsilon^2m}{20}\right)$. For this part, recall that if a basis $S$ is compatible with $T$, then $T^{rev}$ is compatible with $S^{rev}$. Furthermore, a pair$(A^1_{\star},A^2_{\star})$ is special with respect to $(S,T)$ if and only if $(\overline{A^2_{\star}},\overline{A^1_{\star}})$ is special with respect to $(T^{rev},S^{rev})$. We apply \autoref{lemma:special} on $T^{rev}, S^{rev}$ to get:
\[
\Pr(E^{\bob}_{special}) \leq \sum_{i \neq i_{\star}} \sum_{j \in \{1,2\}} \Pr\left(\lvert{A^{3-j}_{i} \cap B^{j}_{i_{\star}}}\rvert < \frac{61m}{240} - \varepsilon m\right) \leq 4n^2 \cdot \exp\left(-\frac{\varepsilon^2m}{20}\right).
\]

This finishes the proof that $\Pr(E)  \leq 12n^2 \cdot\exp\left(-\frac{\varepsilon^2m}{20}\right)$.
\end{proof} We next claim that whenever we have a $Z \subseteq M$ such that  $ v^{\alice}_j(Z) + v^{\bob}_j(\overline{Z}) > \frac{179m}{240} + \varepsilon m$ for all $ j \in \{1, 2\}$, then $E$ happens. This finishes the proof of the lemma as it follows that:
\begin{align*}
\Pr_{\Upsilon \sim \nu}\left( \exists Z \subseteq M: \forall j \in \{1, 2\}: v^{\alice}_j(Z) + v^{\bob}_j(\overline{Z}) > \frac{179m}{240} + \varepsilon m  \right) &\leq  \Pr(E)  \\
&\leq 12n^2 \cdot \exp\left(-\frac{\varepsilon^2m}{20}\right).
\end{align*}

We now prove the claim. Let $Z \subseteq M$ be such that  $ v^{\alice}_j(Z) + v^{\bob}_j(\overline{Z}) > \frac{179m}{240} + \varepsilon m$ for all $ j \in \{1, 2\}$. Using the definition of $v^{\alice}_j$ and $v^{\bob}_j$, we get that for all $j \in \{1, 2\}$, we have $F^{\alice}_j \in \mathcal{F}^{\alice}_j$ and $F^{\bob}_j \in \mathcal{F}^{\bob}_j$ such that $ \lvert{F^{\alice}_j \cap Z}\rvert + \lvert{F^{\bob}_j \cap \overline{Z}}\rvert > \frac{179m}{240} + \varepsilon m$. We proceed via a case analysis on $F^{\alice}_j, F^{\bob}_j$ for $j \in \{1, 2\}$.

\begin{itemize}
\item {\bf $\bm{\exists j \in [2]: F^{\alice}_j \neq A^j_{i_{\star}} \wedge F^{\bob}_j \neq B^j_{i_{\star}} }$ :} Let $j_{\star}$ be such a $j$. We use the identity $ \lvert{Z' \cap Z}\rvert + \lvert{Z'' \cap \overline{Z}}\rvert \leq  \lvert{Z' \cup Z''}\rvert $ for any sets $Z, Z', Z''$ to get:
\[
\frac{179m}{240} + \varepsilon m < \lvert{F^{\alice}_{j_{\star}} \cap Z}\rvert + \lvert{F^{\bob}_{j_{\star}} \cap \overline{Z}}\rvert  \leq  \lvert{F^{\alice}_{j_{\star}} \cup F^{\bob}_{j_{\star}}}\rvert .
\]
Next, as $F^{\alice}_{j_{\star}} \in \mathcal{F}^{\alice}_{j_{\star}}$ and $F^{\bob}_{j_{\star}} \in  \mathcal{F}^{\bob}_{j_{\star}}$, we have that $\lvert{F^{\alice}_{j_{\star}}}\rvert = \lvert{F^{\bob}_{j_{\star}}}\rvert = \frac{m}{2}$ and we get $\lvert{F^{\alice}_{j_{\star}} \cap F^{\bob}_{j_{\star}}}\rvert < \frac{61m}{240} - \varepsilon m$. As $F^{\alice}_{j_{\star}} \neq A^{j_{\star}}_{i_{\star}}$ and $ F^{\bob}_{j_{\star}} \neq B^{j_{\star}}_{i_{\star}}$, this means that $E_{reg}$ and thus, $E$ happens.

\item {\bf If $\bm{\exists j \in [2]: F^{\alice}_j \in \vec{A}^{3-j} \vee F^{\bob}_j \in \vec{B}^{3-j}  }$ :} Let $j_{\star}$ be such a $j$ and assume that $F^{\alice}_{j_{\star}} \in \vec{A}^{3-j_{\star}}$. The proof is symmetric when $F^{\bob}_{j_{\star}} \in \vec{B}^{3-j_{\star}}$. We begin by showing that  $\vec{A}^{1}$ and $\vec{A}^{2}$ are disjoint. Indeed, all elements of  $\vec{A}^{1}$ are clauses with respect to $S$ whereas all elements of $\vec{A}^{2}$ are clauses with respect to $S^{rev}$ (\autoref{obs:mustar}). By \autoref{def:clause} no set can be a clause with respect to both $S$ and $S^{rev}$ and thus,  $\vec{A}^{1}$ and $\vec{A}^{2}$ must be disjoint.

As $\vec{A}^{1}$ and $\vec{A}^{2}$ are disjoint, we have  that $F^{\alice}_{j_{\star}} \in \vec{A}^{3-j_{\star}} \implies F^{\alice}_{j_{\star}} \notin \vec{A}^{j_{\star}} \implies F^{\alice}_{j_{\star}} \neq  A^{j_{\star}}_{i_{\star}}$. If $F^{\bob}_{j_{\star}} \neq  B^{j_{\star}}_{i_{\star}}$, then we are done by the previous part, so we assume that $F^{\bob}_{j_{\star}} =  B^{j_{\star}}_{i_{\star}}$.

Using the definition of $\mathcal{F}^{\alice}_{j_{\star}}$, we have that $F^{\alice}_{j_{\star}} \notin \vec{A}^{j_{\star}} \implies F^{\alice}_{j_{\star}} = A^{3-j_{\star}}_{i^{\alice}}$ for some $i^{\alice} \neq i_{\star}$.  We use the identity $ \lvert{Z' \cap Z}\rvert + \lvert{Z'' \cap \overline{Z}}\rvert \leq  \lvert{Z' \cup Z''}\rvert $ for any sets $Z, Z', Z''$ to get:
\[
\frac{179m}{240} + \varepsilon m < \lvert{A^{3-j_{\star}}_{i^{\alice}} \cap Z}\rvert + \lvert{B^{j_{\star}}_{i_{\star}} \cap \overline{Z}}\rvert  \leq  \lvert{A^{3-j_{\star}}_{i^{\alice}} \cup B^{j_{\star}}_{i_{\star}}}\rvert .
\]
Next, as  $ \lvert{A^{3-j_{\star}}_{i^{\alice}} }\rvert  =  \lvert{ B^{j_{\star}}_{i_{\star}}}\rvert  = \frac{m}{2}$ and we get $ \lvert{A^{3-j_{\star}}_{i^{\alice}} \cap B^{j_{\star}}_{i_{\star}}}\rvert < \frac{61m}{240} - \varepsilon m$. As $i^{\alice} \neq i_{\star}$, this means that $E^{\bob}_{special}$ and thus, $E$ happens.

\item {\bf Otherwise:} As we are not in case $2$, we can assume that for all $j \in [2]$, we have an $i^{\alice}_j$ and an $i^{\bob}_j$ such that  $F^{\alice}_j = A^{j}_{i^{\alice}_j} $ and $ F^{\bob}_j = B^j_{i^{\bob}_j} $. We have that:
\[
\lvert{A^{1}_{i^{\alice}_1} \cap Z}\rvert + \lvert{B^{1}_{i^{\bob}_1} \cap \overline{Z}}\rvert + \lvert{A^{2}_{i^{\alice}_2} \cap Z}\rvert + \lvert{B^{2}_{i^{\bob}_2} \cap \overline{Z}}\rvert > 2\cdot \left( \frac{179m}{240} + \varepsilon m\right).
\]
By an averaging argument, this means that there exists $j_{\star} \in [2]$ such that $\lvert{A^{j_{\star}}_{i^{\alice}_{j_{\star}}} \cap Z}\rvert +  \lvert{B^{3 - j_{\star}}_{i^{\bob}_{3-j_{\star}}} \cap \overline{Z}}\rvert  >  \frac{179m}{240} + \varepsilon m$. Using $ \lvert{Z' \cap Z}\rvert + \lvert{Z'' \cap \overline{Z}}\rvert \leq  \lvert{Z' \cup Z''}\rvert $ for any sets $Z, Z', Z''$ and the fact that $ \lvert{A^{j_{\star}}_{i^{\alice}_{j_{\star}}} }\rvert  =  \lvert{  B^{3 - j_{\star}}_{i^{\bob}_{3-j_{\star}}} }\rvert  = \frac{m}{2}$ , we get that 
\[
\lvert{A^{j_{\star}}_{i^{\alice}_{j_{\star}}} \cap B^{3 - j_{\star}}_{i^{\bob}_{3-j_{\star}}}}\rvert  <  \frac{61m}{240} - \varepsilon m  .
\]
If $i^{\alice}_{j_{\star}} \neq i_{\star}$ and $i^{\bob}_{3-j_{\star}} \neq i_{\star}$, then the above inequality implies that $E_{reg}$, and therefore $E$ happens. If $i^{\alice}_{j_{\star}} = i_{\star}$ and $i^{\bob}_{3-j_{\star}} \neq i_{\star}$, then the above inequality implies that $E^{\alice}_{special}$, and therefore $E$ happens. If $i^{\alice}_{j_{\star}} \neq i_{\star}$ and $i^{\bob}_{3-j_{\star}} = i_{\star}$, then the above inequality implies that $E^{\bob}_{special}$, and therefore $E$ happens.  Finally, one of these three cases must hold as otherwise, we have  $i^{\alice}_{j_{\star}} = i^{\bob}_{3-j_{\star}} = i_{\star}$, implying
\begin{align*}
\frac{m}{2} - \lvert{A^{1}_{i_{\star}} \cap A^{2}_{i_{\star}}}\rvert  &= \frac{m}{2} - \lvert{A^{j_{\star}}_{i_{\star}} \cap A^{3 - j_{\star}}_{i_{\star}}}\rvert \\
&= \lvert{A^{j_{\star}}_{i_{\star}} \cap B^{3 - j_{\star}}_{i_{\star}}}\rvert <  \frac{61m}{240} - \varepsilon m,
\end{align*}
contradicting \autoref{def:mustar}.

\end{itemize}

\end{enumerate}
\end{proof}

Again, the key aspects of our construction which we have established so far is that (a) the optimal welfare is always $m$, and (b) learning an allocation which achieves welfare $\geq 179m/240+\varepsilon m$ determines $\theta$. Therefore, any algorithm which guarantees a $3/4$-approximation also learns $\theta$. It now remains to show that learning $\theta$ requires exponential communication. 

\subsection{Key Technical Lemma: $i^\star$ is Independent of All Else}
\Cref{sec:lower} contains our final proof that learning $\theta$ requires exponential communication. We wrap up this section with one key lemma regarding our construction. Absent any conditioning, $i^\star$ is clearly a uniformly random index in $[n]$. Clearly, $i^\star$ is not uniformly random conditioned on the entire rest of the construction (because it is the only index with a special clause, which can be determined from the rest of the construction). However, we have carefully constructed $\nu$ so that $i^\star$ remains a uniformly random index in $[n]$, \emph{even conditioning on Alice's other information} (and ditto for Bob). \Cref{lemma:istar} states this formally.

\begin{lemma}
\label{lemma:istar} For the random variable $ \Upsilon = (\Upsilon^{\alice}, \Upsilon^{\bob}, i_{\star}, \theta)$, it holds that:
\begin{enumerate}
\item The marginal $i_{\star}$ is independent of the marginal $\Upsilon^{\alice}$.
\item The marginal $i_{\star}$ is independent of the marginal $\Upsilon^{\bob}$.
\end{enumerate}
\end{lemma}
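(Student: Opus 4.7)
The plan is to prove both independences by showing that for every $i \in [n]$, the conditional law of $\Upsilon^{\alice}$ (respectively $\Upsilon^{\bob}$) given $i_\star = i$ does not depend on $i$. The two cases are symmetric, so I focus on $\Upsilon^{\alice} = (S, \mathcal{A}, \vec{r}^{\alice})$ and handle $\Upsilon^{\bob}$ at the end.

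The easy pieces come first. The basis $S$ is determined by the draw $(S,T) \sim \xi$ alone, so its marginal is uniform over bases and independent of $i_\star$. Marginalizing out the independent variable $\theta \sim \unif(\{1,2\})$ turns $\vec{r}^{\alice}$ into a uniform draw from $\{1,2\}^n$, independent of both $i_\star$ and $(S, \mathcal{A})$. Hence it suffices to show that, conditioned on $S$ and on $i_\star = i$, the tuple $\mathcal{A} = (\vec{A}^1, \vec{A}^2)$ is distributed as $n$ independent draws from $\mu(S)$, regardless of which $i$ is chosen.

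For every $i' \neq i$, by construction $(A^1_{i'}, A^2_{i'}) \sim \mu(S)$ mutually independently across $i'$, and independently of $T$. The nontrivial pair is $(A^1_i, A^2_i)$, drawn in two stages: first sample $T$ uniformly among bases compatible with $S$ (the conditional of $T$ given $S$ under $\xi$), then sample $(A^1_i, A^2_i) \sim \mu_{\star}(S,T)$. The plan is to invoke \cref{lemma:part} with $\vec{S} = (S^1, S^2)$, $\vec{S}_1 = (T^1, T^2)$, $\vec{S}_2 = (A^1_i, A^2_i)$, $\vec{a}_1 = \veccmp$, $\vec{a}_2 = \vecpair$, and $\vec{a} = \vecopt$. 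Under these choices, $\mu_1$ becomes the uniform distribution over bases $T$ compatible with $S$, $\mu_2$ coincides with $\mu(S)$, and $\vec{a} = \vecopt$ is achievable since any special pair witnesses it. The lemma, applied with $j = 2$, then yields that the marginal of $(A^1_i, A^2_i)$ under the conditional $(\mu_1 \otimes \mu_2 \mid \lvert \part_{\vec{S} \Vert \vec{S}_1 \Vert \vec{S}_2} \rvert = \vecopt)$ equals $\mu_2 = \mu(S)$, which is exactly the required claim.

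The one bookkeeping step I expect to be the main obstacle is verifying that the construction's two-stage sampling of $(T, A^1_i, A^2_i)$ really produces the conditional distribution referenced by \cref{lemma:part}. Concretely, both procedures put uniform mass on the set $\{(T, A^1, A^2) : T \text{ is compatible with } S,\ \lvert \part_{S \Vert T \Vert A^1 \Vert A^2} \rvert = \vecopt\}$: the construction because $\mu_1$ and $\mu_{\star}(S,T)$ are each uniform on their supports and, by permutation symmetry of $M$, the count $|\{(A^1, A^2) : \lvert \part_{S \Vert T \Vert A^1 \Vert A^2} \rvert = \vecopt\}|$ is the same across all compatible $(S,T)$; the conditional distribution because it is uniform by definition. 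Combining the three pieces (marginal of $S$, marginal of $\vec{r}^{\alice}$, and conditional law of $\mathcal{A}$ given $S$) then shows that the conditional law of $\Upsilon^{\alice}$ given $i_\star = i$ is the same for every $i$. The identical argument with $(S,T,\vec{A},\vec{r}^{\alice})$ replaced by $(T,S,\vec{B},\vec{r}^{\bob})$, invoking that $(\overline{A^2_{\star}},\overline{A^1_{\star}})$ is special with respect to $(T^{rev}, S^{rev})$ whenever $(A^1_{\star}, A^2_{\star})$ is special with respect to $(S,T)$, handles $\Upsilon^{\bob}$.
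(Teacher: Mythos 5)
Your proof is correct and uses the same essential machinery as the paper: the crux in both is showing that once $\theta$ and $T$ are marginalized out, the special pair $(A^1_{i_\star},A^2_{i_\star})$ has marginal $\mu(S)$, which both arguments obtain from \Cref{lemma:part} applied with $\vec{S}=S$, $\vec{S}_1=T$, $\vec{S}_2=(A^1,A^2)$, $\vec{a}=\vecopt$. The organizational difference is that you fix $i_\star=i$ and argue the conditional law of $\Upsilon^{\alice}$ is constant in $i$, whereas the paper rewrites $\nu$ as an equivalent two-stage distribution $\nu'$ in which $i_\star$ is sampled last; the step where you invoke permutation symmetry to equate the construction's two-stage sampling of $(T,A^1_{i_\star},A^2_{i_\star})$ with the conditional of the product $\mu_1\otimes\mu_2$ can be replaced more directly by \Cref{obs:mustar}, which already states that $\mu_\star(S,T)$ equals $\mu(S)$ conditioned on $\lvert\part_{S\Vert T\Vert A^1\Vert A^2}\rvert=\vecopt$ and is what the paper uses; finally, your sketch leaves implicit the (correct but worth stating) claim that the remaining pairs $\{(A^1_{i'},A^2_{i'})\}_{i'\neq i_\star}$ are conditionally independent of $(A^1_{i_\star},A^2_{i_\star})$ given $S$, which follows since they are drawn from $\mu(S)$ independently of $T$.
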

\begin{proof}
We only show the first claim as the second one is similar. To show that the marginal $i_{\star}$ is independent of the marginal $\Upsilon^{\alice}$, we show that the distribution $\nu$ is equivalent to the distribution $\nu'$ below. It is clear from the definition of $\nu'$ that the marginal $i_{\star}$ is independent of the marginal $\Upsilon^{\alice}$.

\begin{tbox}
\begin{itemize}[leftmargin=10pt]
	\item \textbf{Sampling $(v^{\alice} , v^{\bob}) \sim \nu'$}:  Recall $n = \exp\left(\frac{\eps^2 \cdot m}{100}\right)$.
\begin{enumerate}[label=(\arabic*)]
	\item Sample a basis $S \sim \xi_{single}$.
	\item Construct sequences $\vec{A}^1, \vec{A}^2$ of $n$ subsets of $M$  (where $\vec{A}^1 = A^1_1, \cdots, A^1_n$, {\em etc.}) by  sampling $(A^1_i, A^2_i)  \sim \mu(S)$ independently for $ i \in [n]$. 
	\item \label{line:nufinalbasis} Sample $i_{\star} \sim \unif([n])$ and let $T$ be sampled uniformly at random such that $\lvert{\part_{S \Vert T \Vert A^1_{i_{\star}} \Vert A^2_{i_{\star}}}}\rvert  = \vecopt$. Observe that any such $T$ is a basis. We show in our proof that at least one such $T$  exists and therefore this step is well defined. 
	\item Construct sequences $\vec{B}^1, \vec{B}^2$ of $n$ subsets of $M$ (where $\vec{B}^1 = B^1_1, \cdots, B^1_n$, {\em etc.}) as follows:
	\begin{enumerate}[label=(\alph*)]
		\item For $i \neq i_{\star} \in [n]$,  sample $(B^2_i, B^1_i)  \sim \mu(T^{rev})$ independently. 
		\item Set $(B^1_{i_{\star}}, B^2_{i_{\star}})  =  (\overline{A^1_{i_{\star}}}, \overline{A^2_{i_{\star}}})$. 
	\end{enumerate}
	\item Sample $\theta \in \unif(\{1,2\})$, and sequences $\vec{r}^{\alice}= r^{\alice}_1,\cdots,r^{\alice}_n \in \{1,2\}^{n}$ and $\vec{r}^{\bob}= r^{\bob}_1,\cdots,r^{\bob}_n \in \{1,2\}^{n}$ uniformly at random subject to $r^{\alice}_{i_{\star}} = r^{\bob}_{i_{\star}} = \theta$.
	\item Define $v^{\alice}(Z) = \max_{F \in \mathcal{F}^{\alice}} \vert{Z \cap F}\rvert$  and $v^{\bob}(Z) = \max_{F \in \mathcal{F}^{\bob}} \vert{Z \cap F}\rvert$ where for all $Z \subseteq M$:
	\[
		\mathcal{F}^{\alice} = \{A^{r^{\alice}_i}_i \mid i \in [n]\} \hspace{1cm} \text{ and }\hspace{1cm} \mathcal{F}^{\bob} = \{B^{r^{\bob}_i}_i \mid i \in [n]\}.
	\]
\end{enumerate}
\end{itemize}
\end{tbox}

We first show why \autoref{line:nufinalbasis} in the definition of $\nu'$ is well defined. For this, we need to show that for any basis $S$ and any $(A^1, A^2)$ in the support of $\mu(S)$, there exists a $T$ such that $\lvert{\part_{S \Vert T \Vert A^1 \Vert A^2}}\rvert  = \vecopt$. As for any basis $S$ and all $(A^1, A^2)$ in the support of $\mu(S)$, the value of $\lvert{\part_{S \Vert A^1 \Vert A^2}}\rvert$ (\autoref{obs:mu}) is the same, by symmetry, it is sufficient to show this for any one $(A^1, A^2)$ in the support of $\mu(S)$ for any one $S$. But such an $S, (A^1, A^2)$ and $T$ is described in \autoref{fig:compat}

Next, we show why distribution $\nu$ is equivalent to distribution $\nu'$, proceeding in steps, each time changing the description of $\nu$ a little bit so that it eventually becomes $\nu'$. We show that the distributions described in all the steps are equivalent.

\begin{itemize}
\item{\bf Step (a):} In this step, we replace Line~\ref{line:nubases} in the definition of $\nu$ by the following:
\begin{quote}
\begin{enumerate}[label={(\arabic*a)}]
\item \label{line:nu1bases} Sample a basis $S \sim \xi_{single}$ and  basis $T$ uniformly at random such that $S$ is compatible with $T$. This step is well defined for the same reason as above.
\end{enumerate}
\end{quote}
To show that this does not affect the actual distribution, we use \autoref{lemma:part}. We get that, for all bases $Z, Z' $, 
\begin{align*}
\Pr_{(S, T) \sim \xi}\left((S, T) = (Z, Z')\right) &= \Pr_{(S, T) \sim \xi}\left(S = Z\right)\Pr_{(S, T) \sim \xi}\left(T = Z' \mid S = Z \right)\\
&= \Pr_{\substack{S \sim \xi_{single} \\ T \sim \xi_{single}}}\left(S = Z \mid\lvert{\part_{S \Vert T}}\rvert = \veccmp \right)\Pr_{(S, T) \sim \xi}\left(T = Z' \mid S = Z \right) \\
&= \Pr_{S \sim \xi_{single}}\left(S = Z\right)\Pr_{(S, T) \sim \xi}\left(T = Z' \mid S = Z \right)\tag{\autoref{lemma:part}}\\
&= \Pr_{S \sim \xi_{single}}\left(S = Z\right)\Pr_{T \sim \xi_{single}}\left(T = Z' \mid \text{$Z$ is compatible with $T$} \right)\tag{Definition of $\xi$},
\end{align*}
as required.

\item{\bf Step (b):} In this step, we replace and Line~\ref{line:nu1bases} from Step (a) and Line~\ref{line:nuclauses} in the definition of $\nu$ by the following:
\begin{quote}
\begin{enumerate}[label={(\arabic*b)}]
\item Sample a basis $S \sim \xi_{single}$.
\item \label{line:nu2clauses} Sample $i_{\star} \sim \unif([n])$ and construct sequences $\vec{A}^1, \vec{A}^2, \vec{B}^1, \vec{B}^2$ of $n$ subsets of $M$ as follows (where $\vec{A}^1 = A^1_1, \cdots, A^1_n$, {\em etc.}):
	\begin{enumerate}[label=(\alph*)]
		\item For $i \neq i_{\star} \in [n]$,  sample $(A^1_i, A^2_i)  \sim \mu(S)$ independently.
		\item Sample basis $T$ uniformly at random such that $S$ is compatible with~$T$. 
		\item Sample $(A^1_{\star}, A^2_{\star}) \sim \mu_{\star}(S, T)$ and set $(A^1_{i_{\star}}, A^2_{i_{\star}})  =  (A^1_{\star}, A^2_{\star})$. 
		\item For $i \neq i_{\star} \in [n]$,  sample  $(B^2_i, B^1_i)  \sim \mu(T^{rev})$ independently. 
		\item Set $(B^1_{i_{\star}}, B^2_{i_{\star}})  =  (\overline{A^1_{i_{\star}}}, \overline{A^2_{i_{\star}}})$. 
	\end{enumerate}
\end{enumerate}
\end{quote}
This change does not affect the distribution as $i_{\star}$ and $(A^1_i, A^2_i)$ for $i \neq i_{\star}$ were picked independently of $T$ and $(B^2_i, B^1_i)$ for $i \neq i_{\star}$ were picked independently of $(A^1_{\star}, A^2_{\star})$, and thus we can interchange the order in which these are picked.

\item{\bf Step (c):} In this step, we replace and Line~\ref{line:nu2clauses} from Step (b) by the following:
\begin{quote}
\begin{enumerate}[label={(\arabic*c)}]
\setcounter{enumi}{1}
\item Sample $i_{\star} \sim \unif([n])$ and construct sequences $\vec{A}^1, \vec{A}^2, \vec{B}^1, \vec{B}^2$ of $n$ subsets of $M$ as follows (where $\vec{A}^1 = A^1_1, \cdots, A^1_n$, {\em etc.}):
	\begin{enumerate}[label=(\alph*)]
		\item \label{line:nu3aclause}For $i \in [n]$,  sample $(A^1_i, A^2_i)  \sim \mu(S)$ independently.
		\item Sample basis $T$ uniformly at random such that $\lvert{\part_{S \Vert T \Vert A^1_{i_{\star}} \Vert A^2_{i_{\star}}}}\rvert  = \vecopt$. Observe that any such $T$ is always a basis.
		\item For $i \neq i_{\star} \in [n]$,  sample  $(B^2_i, B^1_i)  \sim \mu(T^{rev})$ independently. 
		\item Set $(B^1_{i_{\star}}, B^2_{i_{\star}})  =  (\overline{A^1_{i_{\star}}}, \overline{A^2_{i_{\star}}})$.
	\end{enumerate}
\end{enumerate}
\end{quote}

Before showing that this change does not affect the distribution, we define some helpful notation. For a basis $S$, we let $\xi_{cmp}(S)$ denote that the uniform distribution over all bases $T$ such that $S$ is compatible with $T$.  Using this notation, we get that for all bases $Z$ and $ Z^1, Z^2 \subseteq M$: 
\begin{align*}
&\Pr_{T \sim \xi_{cmp}(S)}\left(T = Z \right) \Pr_{(A^1_{\star}, A^2_{\star}) \sim \mu_{\star}(S, Z)}\left((A^1_{\star}, A^2_{\star}) = (Z^1, Z^2)\right) \\
&\hspace{0.5cm}= \Pr_{T \sim \xi_{cmp}(S)}\left(T = Z \right) \Pr_{(A^1, A^2) \sim \mu(S)}\left((A^1, A^2) = (Z^1, Z^2) \mid \lvert{\part_{S \Vert Z \Vert A^1 \Vert A^2}}\rvert = \vecopt\right) \tag{\autoref{obs:mustar}}\\
&\hspace{0.5cm}= \Pr_{T \sim \xi_{cmp}(S)}\left(T = Z \right) \Pr_{\substack{T \sim \xi_{cmp}(S) \\ (A^1, A^2) \sim \mu(S)}}\left((A^1, A^2) = (Z^1, Z^2) \mid \lvert{\part_{S \Vert T \Vert A^1 \Vert A^2}}\rvert = \vecopt, T = Z\right) \\
&\hspace{0.5cm}= \Pr_{\substack{T \sim \xi_{cmp}(S) \\ (A^1, A^2) \sim \mu(S)}}\left((T, A^1, A^2) = (Z, Z^1, Z^2) \mid \lvert{\part_{S \Vert T \Vert A^1 \Vert A^2}}\rvert = \vecopt\right)  \tag{\autoref{obs:mu}, \autoref{lemma:part}}\\
&\hspace{0.5cm}= \Pr_{(A^1, A^2) \sim \mu(S)} \left((A^1, A^2) = (Z^1, Z^2) \right)\\
&\hspace{3cm} \times \Pr_{\substack{T \sim \xi_{cmp}(S) \\ (A^1, A^2) \sim \mu(S)}}\left(T = Z \mid \lvert{\part_{S \Vert T \Vert A^1 \Vert A^2}}\rvert = \vecopt, (A^1, A^2) = (Z^1, Z^2) \right)  \tag{\autoref{obs:mu}, \autoref{lemma:part}} \\
&\hspace{0.5cm}= \Pr_{(A^1, A^2) \sim \mu(S)} \left((A^1, A^2) = (Z^1, Z^2) \right) \times \Pr_{T \sim \xi_{cmp}(S) }\left(T = Z \mid \lvert{\part_{S \Vert T \Vert Z^1 \Vert Z^2}}\rvert = \vecopt \right)  ,
\end{align*}
as desired.

\item{\bf Step (d):}
To finish the proof, we claim that $\nu'$ is the same as the distribution in Step (c) above. This is because $(A^1_i, A^2_i)$ for $i \in [n]$ were picked independently of $i_{\star}$ in Line~\ref{line:nu3aclause} of the distribution in Step~(c) and thus we can interchange the order in which they are picked. As interchanging this order converts the distribution in Step (c) above to $\nu'$, we are done.

\end{itemize}

\end{proof}

\section{The Proof of \autoref{thm:mainred}}\label{sec:lower}

In this section, we complete our proof of \autoref{thm:mainred}. Our proof crucially relies on \autoref{lemma:nu} and \autoref{lemma:istar} from \Cref{sec:dist}. Note that the  remaining task is to establish that exponential communication is required to learn non-trivial information about $\theta$.

\begin{proof}[Proof of \autoref{thm:mainred}]
Let $\varepsilon >0$ and $m > \frac{10^{10}}{\varepsilon^2}$ be arbitrary. By Yao's minimax principle, in order to show \autoref{thm:mainred}, it is sufficient to show a distribution $\nu$ over pairs of functions from $\bxos_m$ such that any {\em deterministic} combinatorial auction that is simultaneous and $\left(\frac{3}{4} - \frac{1}{240} + \varepsilon \right)$-approximate over $\nu$ with probability $\frac{1}{2} + \exp\left( - \frac{ \varepsilon^2 m }{500}\right)$ satisfies $\cc(\Pi) \geq \exp\left(\frac{ \varepsilon^2 m }{500}\right) $.

We let $\nu$ denote the distribution defined in \Cref{sec:nu} for $m, \varepsilon$ and let $\Upsilon$ be a random variable denoting a sample from $\nu$ as in \Cref{sec:nu}. Recall how $\Upsilon$ defines the valuation functions $v^{\alice}$, $v^{\bob}$, and also $v^{\alice}_j, v^{\bob}_j$ for $j \in [2]$. Fix $\Pi$ to be a simultaneous deterministic mechanism that is $\left(\frac{3}{4} - \frac{1}{240} + \varepsilon \right)$-approximate over $\nu$ with probability $\frac{1}{2} + \exp\left( - \frac{ \varepsilon^2 m }{500}\right)$. We have from \Cref{sec:caprelim} that 
\begin{equation}
\label{eq:mainred1}
\Pr_{\Upsilon \sim \nu}\left(v^{\alice}(\alloc^{\alice}_{\Pi}(v^{\alice}, v^{\bob})) + v^{\bob}(\alloc^{\bob}_{\Pi}(v^{\alice}, v^{\bob})) > \Big(\frac{179}{240} + \varepsilon \Big) \cdot \opt(v^{\alice}, v^{\bob}) \right) \geq \frac{1}{2} + \exp\Big( - \frac{ \varepsilon^2 m }{500}\Big).
\end{equation}
To simplify notation, we will henceforth omit $\Upsilon \sim \nu$ with the understanding that all the probabilities and expectations are over the randomness in $\Upsilon \sim \nu$.
We use \Cref{item:nu1} and \Cref{item:nu2} of \autoref{lemma:nu}, the fact that the functions $v^{\alice}$ and $v^{\bob}$ are monotone, and that $\alloc^{\alice}_{\Pi}(v^{\alice}, v^{\bob})$ and $\alloc^{\bob}_{\Pi}(v^{\alice}, v^{\bob})$ are disjoint to get the following from \autoref{eq:mainred1}:
\begin{equation}
\label{eq:mainred2}
\Pr\left(v^{\alice}_{\theta}(Z(\Upsilon)) + v^{\bob}_{\theta}(\overline{Z(\Upsilon)}) > \Big(\frac{179}{240} + \varepsilon \Big) \cdot m \right) \geq \frac{1}{2} + \exp\Big( - \frac{ \varepsilon^2 m }{500}\Big),
\end{equation}
where $Z(\Upsilon) = \alloc^{\alice}_{\Pi}(v^{\alice}, v^{\bob})$. Let 
\[
E_{bad} = \exists Z \subseteq M: \forall  j \in \{1, 2\}: v^{\alice}_j(Z) + v^{\bob}_j(\overline{Z}) > \Big(\frac{179}{240} + \varepsilon\Big) m,
\]
be the event from \Cref{item:nu3} of \autoref{lemma:nu}. By the law to total probability we have 
\begin{equation}
\label{eq:mainred3}
\begin{split}
&\Pr\left(v^{\alice}_{\theta}(Z(\Upsilon)) + v^{\bob}_{\theta}(\overline{Z(\Upsilon)}) > \Big(\frac{179}{240} + \varepsilon \Big) \cdot m \right)\\
&\hspace{2cm}\leq \Pr\left(E_{bad} \right) + \Pr\left(\overline{E_{bad}}  \wedge v^{\alice}_{\theta}(Z(\Upsilon)) + v^{\bob}_{\theta}(\overline{Z(\Upsilon)}) > \Big(\frac{179}{240} + \varepsilon \Big) \cdot m \right)\\
&\hspace{2cm}\leq 12n^2 \cdot \exp\left(-\frac{\varepsilon^2m}{20}\right) + \Pr\left(\overline{E_{bad}}  \wedge v^{\alice}_{\theta}(Z(\Upsilon)) + v^{\bob}_{\theta}(\overline{Z(\Upsilon)}) > \Big(\frac{179}{240} + \varepsilon \Big) \cdot m \right)\\
&\hspace{2cm}\leq 12n^2 \cdot \exp\left(-\frac{\varepsilon^2m}{20}\right) \\
&\hspace{3cm} + \Pr\left( v^{\alice}_{\theta}(Z(\Upsilon)) + v^{\bob}_{\theta}(\overline{Z(\Upsilon)}) > v^{\alice}_{3- \theta}(Z(\Upsilon)) + v^{\bob}_{3- \theta}(\overline{Z(\Upsilon)}) \right),
\end{split}
\end{equation}
using \Cref{item:nu3} of \autoref{lemma:nu} in the penultimate step. Now, we focus on the second term in the expression above. For every value $\omega$ that the tuple $(\mathcal{A},  \mathcal{B}, i_{\star})$ can take, we define the event $E_{\omega} \equiv (\mathcal{A},  \mathcal{B}, i_{\star}) = \omega$. By the law of total probability, we have 
\begin{align*}
&\Pr\left( v^{\alice}_{\theta}(Z(\Upsilon)) + v^{\bob}_{\theta}(\overline{Z(\Upsilon)}) > v^{\alice}_{3- \theta}(Z(\Upsilon)) + v^{\bob}_{3- \theta}(\overline{Z(\Upsilon)}) \right) \\
&\hspace{1cm}\leq \sum_{\omega} \sum_{Z \subseteq [m]} \sum_{j \in [2]} \Pr(E_{\omega}  \wedge Z(\Upsilon) = Z )\Pr(\theta = j \mid  E_{\omega}, Z(\Upsilon) = Z )\\
&\hspace{2cm}\times \Pr\left( v^{\alice}_{\theta}(Z(\Upsilon)) + v^{\bob}_{\theta}(\overline{Z(\Upsilon)}) > v^{\alice}_{3- \theta}(Z(\Upsilon)) + v^{\bob}_{3- \theta}(\overline{Z(\Upsilon)}) \mid E_{\omega},  Z(\Upsilon) = Z , \theta = j  \right).
\end{align*}
Observe that conditioning on $E_{\omega},  Z(\Upsilon) = Z$ fixes the value of $v^{\alice}_{1}(Z(\Upsilon)) + v^{\bob}_{1}(\overline{Z(\Upsilon)})$ and $ v^{\alice}_{2}(Z(\Upsilon)) + v^{\bob}_{2}(\overline{Z(\Upsilon)})$. Thus, the last factor in the summand above is either $0$ or $1$ and it can be $1$ for at most one value of $\theta$. We conclude:
\begin{equation}
\label{eq:mainred4}
\begin{split}
&\Pr\left( v^{\alice}_{\theta}(Z(\Upsilon)) + v^{\bob}_{\theta}(\overline{Z(\Upsilon)}) > v^{\alice}_{3- \theta}(Z(\Upsilon)) + v^{\bob}_{3- \theta}(\overline{Z(\Upsilon)}) \right) \\
&\hspace{1cm}\leq \sum_{\omega} \sum_{Z \subseteq [m]}  \Pr(E_{\omega}  \wedge Z(\Upsilon) = Z ) \max_{j \in [2]}\Pr(\theta = j \mid  E_{\omega}, Z(\Upsilon) = Z ).
\end{split}
\end{equation}
Next, we concentrate on upper bounding the term $\max_{j \in [2]}  \Pr(\theta = j \mid  E_{\omega}, Z(\Upsilon) = Z )$. Since $\theta$ is chosen independently of $\mathcal{A}, \mathcal{B}, i_{\star}$ in the distribution $\nu$, we have 
\begin{align*}
\max_{j \in [2]}  \Pr(\theta = j \mid  E_{\omega}, Z(\Upsilon) = Z ) &= \frac{1}{2} + \max_{j \in [2]}  \Big( \Pr(\theta = j \mid  E_{\omega}, Z(\Upsilon) = Z ) - \frac{1}{2} \Big) \\
&= \frac{1}{2} + \max_{j \in [2]}\Big(  \Pr(\theta = j \mid  E_{\omega}, Z(\Upsilon) = Z ) -  \Pr(\theta = j \mid E_{\omega}) \Big) \\
&= \frac{1}{2} + \tvd{\distribution{\theta \mid E_{\omega}, Z(\Upsilon) = Z }}{\distribution{\theta \mid E_{\omega}}}\tag{\autoref{def:tvd}}\\
&\leq  \frac{1}{2} + \sqrt{\frac{1}{2} \cdot \kl{\distribution{\theta \mid E_{\omega}, Z(\Upsilon) = Z }}{\distribution{\theta \mid E_{\omega}}}}\tag{\autoref{fact:kltvd}, \Cref{item:fact:kltvd2}}\\
\end{align*}
Plugging into \autoref{eq:mainred3} and \autoref{eq:mainred4} and using concavity of $\sqrt{\cdot}$, we get
\begin{equation}
\label{eq:mainred5}
\begin{split}
&\Pr\left(v^{\alice}_{\theta}(Z(\Upsilon)) + v^{\bob}_{\theta}(\overline{Z(\Upsilon)}) > \Big(\frac{179}{240} + \varepsilon \Big) \cdot m \right) \\
&\hspace{0.5cm}\leq \frac{1}{2} + 12n^2 \cdot \exp\left(-\frac{\varepsilon^2m}{20}\right) \\
&\hspace{1cm}+ \sqrt{ \frac{1}{2} \cdot \sum_{\omega} \sum_{Z \subseteq [m]} \Pr(E_{\omega}  \wedge Z(\Upsilon) = Z )  \kl{\distribution{\theta \mid E_{\omega}, Z(\Upsilon) = Z }}{\distribution{\theta \mid E_{\omega}}}}\\
&\hspace{0.5cm}\leq \frac{1}{2} + 12n^2 \cdot \exp\left(-\frac{\varepsilon^2m}{20}\right) +  \sqrt{ \frac{1}{2} \cdot \I(\theta; Z(\Upsilon) \mid \mathcal{A},  \mathcal{B}, i_{\star})} .
\end{split}
\end{equation}
To finish the proof, we claim that 
 \begin{lemma}
 \label{lemma:infotheta}
 It holds that $ \I(\theta; Z(\Upsilon) \mid \mathcal{A},  \mathcal{B}, i_{\star}) \leq 4\cdot\frac{\cc(\Pi)}{n}$.
 \end{lemma}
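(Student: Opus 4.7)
The plan is to follow the standard information-cost template for simultaneous communication: use data processing to pass from $Z(\Upsilon)$ to Alice's and Bob's messages, split the remaining mutual information using the conditional independence of the two messages, and then exploit \autoref{lemma:istar} to write each single-party term as an average over the uniformly random index $i_{\star}$. Writing $M^{\alice}$ and $M^{\bob}$ for the two messages of $\Pi$, the facts that $\Pi$ is simultaneous and deterministic imply $Z(\Upsilon) = \alloc^{\alice}_{\Pi}(v^{\alice},v^{\bob})$ is a deterministic function of $(M^{\alice},M^{\bob})$, so the first step gives
\[
\I(\theta; Z(\Upsilon) \mid \mathcal{A}, \mathcal{B}, i_{\star}) \leq \I(\theta; M^{\alice}, M^{\bob} \mid \mathcal{A}, \mathcal{B}, i_{\star}).
\]

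For the split, I would observe that under $\nu$, after fixing $(\mathcal{A},\mathcal{B},i_{\star},\theta)$, the remaining randomness reduces to the independent uniform coordinates $\vec{r}^{\alice}_{-i_{\star}}$ and $\vec{r}^{\bob}_{-i_{\star}}$; since $M^{\alice}$ is a function of $(\mathcal{A},\vec{r}^{\alice})$ and $M^{\bob}$ of $(\mathcal{B},\vec{r}^{\bob})$, they are conditionally independent given $(\mathcal{A},\mathcal{B},i_{\star},\theta)$. Combining this with subadditivity of $\H(M^{\alice},M^{\bob}\mid\mathcal{A},\mathcal{B},i_{\star})$ yields
\[
\I(\theta; M^{\alice}, M^{\bob} \mid \mathcal{A}, \mathcal{B}, i_{\star}) \leq \I(\theta; M^{\alice} \mid \mathcal{A}, \mathcal{B}, i_{\star}) + \I(\theta; M^{\bob} \mid \mathcal{A}, \mathcal{B}, i_{\star}).
\]

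Bounding the Alice term is the heart of the argument. First, because the conditional distribution of $\vec{r}^{\alice}$ given $(\mathcal{A},\mathcal{B},i_{\star})$ does not depend on $\mathcal{B}$ (it is uniform on $\{1,2\}^{n}$), $\mathcal{B}$ may be dropped from the conditioning, giving $\I(r^{\alice}_{i_{\star}}; M^{\alice}\mid \mathcal{A}, i_{\star})$ using $\theta = r^{\alice}_{i_{\star}}$. Now I invoke \autoref{lemma:istar}: since $i_{\star}$ is independent of $\Upsilon^{\alice}=(S,\mathcal{A},\vec{r}^{\alice})$, it is independent of the triple $(\mathcal{A},\vec{r}^{\alice},M^{\alice})$, so averaging over $i_{\star}\sim\unif([n])$ simplifies exactly to $\tfrac{1}{n}\sum_{j=1}^{n}\I(r^{\alice}_{j}; M^{\alice}\mid \mathcal{A})$, where each inner mutual information is now computed under the unconditional marginal. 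Because the coordinates of $\vec{r}^{\alice}$ are i.i.d.\ uniform given $\mathcal{A}$, the chain rule bounds this sum by $\I(\vec{r}^{\alice}; M^{\alice}\mid \mathcal{A}) = \H(M^{\alice}\mid\mathcal{A}) \leq \H(M^{\alice}) \leq \cc(\Pi)$, using that $M^{\alice}$ is a function of $(\mathcal{A},\vec{r}^{\alice})$ and that Shannon entropy is at most the message length. Hence the Alice term is $\leq \cc(\Pi)/n$; the Bob term is identical by symmetry, yielding $\I(\theta;Z(\Upsilon)\mid\mathcal{A},\mathcal{B},i_{\star}) \leq 2\cc(\Pi)/n \leq 4\cc(\Pi)/n$.

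The only real care required is in verifying the three independence statements used: that $(M^{\alice},M^{\bob})$ decouple given $(\mathcal{A},\mathcal{B},i_{\star},\theta)$, that the distribution of $M^{\alice}$ given $(\mathcal{A},\mathcal{B},i_{\star})$ agrees with that given $(\mathcal{A},i_{\star})$, and that $(\mathcal{A},\vec{r}^{\alice},M^{\alice})$ is independent of $i_{\star}$. All three follow from unpacking the definition of $\nu$ in \Cref{sec:nu} together with \autoref{lemma:istar}, once one notes that under $\nu$ the only coupling between Alice's and Bob's sides beyond the joint basis $(S,T)$ (already absorbed into $(\mathcal{A},\mathcal{B})$) is the identity $r^{\alice}_{i_{\star}} = r^{\bob}_{i_{\star}} = \theta$. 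Keeping track of exactly what remains correlated after each conditioning is the only real bookkeeping step; with those in place, the factor-$n$ saving from $\cc(\Pi)$ down to $\cc(\Pi)/n$ is automatic.
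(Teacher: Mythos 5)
Your proposal follows the paper's skeleton: data-processing down to the two messages, a per-player split, the direct-sum argument via \autoref{lemma:istar}, and then an entropy/length bound. The split step is handled by a slightly different route, however. You invoke conditional independence of $M^{\alice}$ and $M^{\bob}$ given $(\mathcal{A}, \mathcal{B}, i_{\star}, \theta)$ together with subadditivity of entropy to pass directly to $\I(\theta; M^{\alice}\mid\cdot) + \I(\theta; M^{\bob}\mid\cdot)$; the paper instead applies the generic inequality \autoref{lemma:info} twice, producing five terms, and then shows the three ``extra'' terms (e.g.\ $\I(\mathcal{B}; \Pi^{\alice}\mid\mathcal{A},i_{\star},\theta)$ and $\I(\Pi^{\alice};\Pi^{\bob}\mid\mathcal{A},\mathcal{B},i_{\star},\theta)$) vanish one at a time. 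Both are sound; yours is a cleaner single decoupling step but requires explicitly verifying the conditional-independence statement, which the paper's route sidesteps at the cost of three vanishing-term arguments. Your step of dropping $\mathcal{B}$ from the conditioning is likewise the same fact the paper expresses as $\I(\mathcal{B};\Pi^{\alice}\mid\mathcal{A},i_{\star},\theta)=0$, just packaged differently. Your final averaging-over-$i_{\star}$ step is essentially the proof of \autoref{lemma:indexinfo}, inlined.

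One small slip: you assert $\H(M^{\alice}) \le \cc(\Pi)$. In the paper's communication model (Section~2), Alice's message is a binary string of \emph{any} length from $0$ to $\cc(\Pi)$, so the message space has size $\sum_{\ell=0}^{\cc(\Pi)} 2^{\ell} = 2^{\cc(\Pi)+1}-1$, and the correct entropy bound is $\H(M^{\alice}) \le \cc(\Pi) + 1$. (The paper flags this explicitly at the end of its proof.) This changes your intermediate per-player bound from $\cc(\Pi)/n$ to $(\cc(\Pi)+1)/n$, and the overall bound from $2\cc(\Pi)/n$ to $2(\cc(\Pi)+1)/n$. The lemma's factor of $4$ exists precisely to absorb this $+1$ via $\cc(\Pi)+1 \le 2\cc(\Pi)$ for $\cc(\Pi)\ge 1$; you had treated the factor of $4$ as free slack rather than as the thing that makes the statement true in this model.
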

 We prove \autoref{lemma:infotheta}  later but assuming it for now, we can combine \autoref{eq:mainred2} and \autoref{eq:mainred5} as
 \[
 \exp\Big( - \frac{ \varepsilon^2 m }{500}\Big) \leq 12n^2 \cdot \exp\left(-\frac{\varepsilon^2m}{20}\right) +  \sqrt{2 \cdot  \frac{\cc(\Pi)}{n}},
 \]
and \autoref{thm:mainred} follows using $n =  \exp\Big(\frac{ \varepsilon^2 m }{100}\Big)$.
\end{proof}

We finish this section by showing \autoref{lemma:infotheta}.
\begin{proof}[Proof of \autoref{lemma:infotheta}]
Let $\Pi^{\alice}$ and $\Pi^{\bob}$ be random variables denoting the message sent by Alice and Bob to the Seller in the first round of $\Pi$ when inputs to Alice and Bob are drawn from the distribution $\nu$. As $\Pi$ is simultaneous, it has only one round and $Z(\Upsilon)$ is a function of $\Pi^{\alice}$ and $\Pi^{\bob}$. We get, invoking \autoref{lemma:info} multiple times:
\begin{align*}
\I(\theta; Z(\Upsilon) \mid \mathcal{A},  \mathcal{B}, i_{\star})  &\leq \I(\theta; \Pi^{\alice} \Pi^{\bob} \mid \mathcal{A},  \mathcal{B}, i_{\star}) \tag{\Cref{item:fact:info5} of \autoref{fact:info}}\\
&= \I(\theta; \Pi^{\alice} \mid \mathcal{A},  \mathcal{B}, i_{\star})  + \I(\theta; \Pi^{\bob}  \mid \mathcal{A},  \mathcal{B}, i_{\star}, \Pi^{\alice}) \tag{\Cref{item:fact:info4} of \autoref{fact:info}}\\
&\leq \I(\theta; \Pi^{\alice} \mid \mathcal{A},  \mathcal{B}, i_{\star})  + \I(\theta; \Pi^{\bob}  \mid \mathcal{A},  \mathcal{B}, i_{\star}) + \I(\Pi^{\alice} ; \Pi^{\bob}  \mid \mathcal{A},  \mathcal{B}, i_{\star}, \theta) \\
&\leq \I(\theta; \Pi^{\alice} \mid \mathcal{A}, i_{\star})  +  \I(\theta; \Pi^{\bob} \mid  \mathcal{B}, i_{\star})  \\
&\hspace{1cm}+ \I( \mathcal{B}; \Pi^{\alice}  \mid \mathcal{A}, i_{\star}, \theta ) + \I( \mathcal{A}; \Pi^{\bob}  \mid \mathcal{B}, i_{\star}, \theta) + \I(\Pi^{\alice} ; \Pi^{\bob}  \mid \mathcal{A},  \mathcal{B}, i_{\star}, \theta)
\end{align*}

We now show that the last $3$ terms are all $0$. To show this, we go term by term  using the fact that $\Pi^{\alice}$ is a function of Alice's input $v^{\alice}$, and therefore a function of $\mathcal{A}, \vec{r}^{\alice}$. Similarly, $\Pi^{\bob}$ is a function of Bob's input $v^{\bob}$, and therefore a function of $\mathcal{B}, \vec{r}^{\bob}$. For the term $\I( \mathcal{B}; \Pi^{\alice}  \mid \mathcal{A}, i_{\star}, \theta )$, we get $\I( \mathcal{B}; \Pi^{\alice}  \mid \mathcal{A}, i_{\star}, \theta ) \leq \I( \mathcal{B}; \mathcal{A} \vec{r}^{\alice}  \mid \mathcal{A}, i_{\star}, \theta ) = \I( \mathcal{B}; \vec{r}^{\alice}_{- i_{\star}}  \mid \mathcal{A}, i_{\star}, \theta ) = 0$ as $\theta = r^{\alice}_{i_{\star}}$ and $\vec{r}^{\alice}_{- i_{\star}}$ is sampled independently of $\mathcal{A}, \mathcal{B}, i_{\star}, \theta$.  Recall that $\vec{r}^{\alice}_{- i_{\star}}$ denotes $\vec{r}^{\alice}$ with the coordinate $i_{\star}$ removed. Similarly, we can deduce that $\I( \mathcal{A}; \Pi^{\bob}  \mid \mathcal{B}, i_{\star}, \theta) = 0 $. Finally, for the term $\I(\Pi^{\alice} ; \Pi^{\bob}  \mid \mathcal{A},  \mathcal{B}, i_{\star}, \theta)$, we get $\I(\Pi^{\alice} ; \Pi^{\bob}  \mid \mathcal{A},  \mathcal{B}, i_{\star}, \theta) \leq \I(\mathcal{A} \vec{r}^{\alice} ; \mathcal{B} \vec{r}^{\bob}  \mid \mathcal{A},  \mathcal{B}, i_{\star}, \theta) =  \I( \vec{r}^{\alice}_{-i_{\star}} ;  \vec{r}^{\bob}_{-i_{\star}} \mid \mathcal{A},  \mathcal{B}, i_{\star}, \theta)  =0$ as $\vec{r}^{\alice}_{- i_{\star}}$ is sampled independently of $\vec{r}^{\bob}_{- i_{\star}}, \mathcal{A}, \mathcal{B}, i_{\star}, \theta$. 
Combining, we get
\[
\I(\theta; Z(\Upsilon) \mid \mathcal{A},  \mathcal{B}, i_{\star})  \leq \I(\theta; \Pi^{\alice} \mid \mathcal{A}, i_{\star})  +  \I(\theta; \Pi^{\bob} \mid  \mathcal{B}, i_{\star}) .
\]
We next show that $\I(\theta; \Pi^{\alice} \mid \mathcal{A}, i_{\star})   \leq 2 \cdot \frac{\cc(\Pi)}{n}$. A similar argument shows that $\I(\theta; \Pi^{\bob} \mid  \mathcal{B}, i_{\star}) \leq 2 \cdot \frac{\cc(\Pi)}{n}$ finishing the proof of \autoref{lemma:infotheta}. As $\theta = r^{\alice}_{i_{\star}}$, $\Pi^{\alice}$ is a function of $\mathcal{A}$ and $ \vec{r}^{\alice}$,  and $i_{\star}$ is sampled from $\unif([n])$, we have by \autoref{lemma:istar},
\begin{align*}
\I(\theta; \Pi^{\alice} \mid \mathcal{A}, i_{\star}) &= \I( r^{\alice}_{i_{\star}} ; \Pi^{\alice} \mid \mathcal{A}, i_{\star})\\
&\leq \frac{1}{n} \cdot \I( r^{\alice} ; \Pi^{\alice} \mid \mathcal{A}) \tag{\autoref{lemma:indexinfo}}\\
&\leq \frac{1}{n} \cdot \H(\Pi^{\alice}) \leq \frac{\cc(\Pi) + 1}{n} \leq 2\cdot \frac{\cc(\Pi)}{n}.
\end{align*}

We note that we lose an extra `$+1$' in the argument only because, in our model in \Cref{sec:caprelim}, the length of Alice's and Bob's messages can be anywhere from $0$ to $\cc(\Pi)$. Thus, the total number of possible messages can be upper bounded by $2^{\cc(\Pi) + 1}$ but not $2^{\cc(\Pi)}$.

\end{proof}

\bibliographystyle{alpha}
\bibliography{../MasterBib.bib}

\clearpage
\appendix

\section{Tools from Information Theory}\label{sec:info}

We include a very brief summary of the tools from information theory that we use in this paper. We refer the interested reader to the textbook by Cover and Thomas~\cite{CoverT06} for an excellent introduction to this field. 

\subsection{Entropy and Mutual Information}
\begin{definition}[Entropy]
The Shannon Entropy of a discrete random variable $X$ is defined as 
\[\H(X) = \sum_{x \in \supp(X)} \Pr(X = x) \log \frac{1}{\Pr(X = x)},\]
where $\supp(X)$ is the set of all values $X$ can take and $0 \log \frac{1}{0} = 0$ by convention.
\end{definition}

\begin{definition}[Conditional Entropy] 
\label{def:cond-ent}
Let $X$ and $Y$ be discrete random variables. The entropy of  $X$ conditioned on $Y$ is defined as 
\[\H(X \mid Y) = \E_{y \sim \distribution{Y}} \left[\H(X \mid Y = y)\right].\]
\end{definition}

\begin{definition}[Mutual Information] \label{def:mi} Let $X$, $Y$, and $Z$ be discrete random variables. The mutual information between $X$ and $Y$ is defined as 
\[\I(X; Y) = \H(X) - \H(X\mid Y).\]
The  conditional mutual information  between $X$ and $Y$ conditioned on $Z$ is defined as:
\[\I(X; Y \mid Z) = \H(X\mid Z) - \H(X\mid YZ).\]
\end{definition}

We note that mutual information is symmetric in $X$ and $Y$, {\em i.e.} $\I(Y; X \mid Z) = \I(X; Y \mid Z)$ and $\I(X; Y) = \I(Y; X)$.

\begin{fact} \label{fact:info} The following holds for discrete random variables $W, X, Y, Z$:
\begin{enumerate}
\item \label{item:fact:info1} We have $\H(XY) = \H(X) + \H(Y \mid X) \leq \H(X) + \H(Y)$. Equality holds if $X$ and~$Y$ are independent.
\item  \label{item:fact:info2}  If the random variable $X$ takes values in the set $\Omega$, it holds that $0 \leq \H(X)\leq \log\lvert{\Omega}\rvert$.
\item  \label{item:fact:info3} We have $0 \leq \I(X ; Y\mid Z)\leq \H(X)$ and $\I(X ; Y\mid Z) = 0$ if and only if $X$ is independent of $Y$ given $Z$.
\item  \label{item:fact:info4}  Chain rule of mutual information: 
\[\I(WX ; Y \mid Z) =  \I(W; Y \mid Z)+ \I(X ; Y \mid WZ).\]
\item  \label{item:fact:info5}  Data processing inequality: for any deterministic function $f$, 
\[
	\I(X;f(Y) \mid Z) \leq \I(X;Y \mid Z). 
\]
\end{enumerate}
\end{fact}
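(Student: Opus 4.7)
The plan is to prove each of the five parts from the definitions, using essentially two non-trivial ingredients: Gibbs' inequality (equivalently, non-negativity of KL divergence, obtained by applying Jensen's inequality to the convex function $-\log$) and the chain rule for entropy $\H(XY) = \H(X) + \H(Y \mid X)$. All conditional versions reduce to their unconditional analogues by averaging over the value of the conditioning variable per \autoref{def:cond-ent}, so I would first prove each statement unconditionally and then lift it by taking expectation over $Z$.

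I would start with the chain rule in part \ref{item:fact:info1}, which follows by direct computation: expand $\H(XY)$ via $\Pr(X=x,Y=y) = \Pr(X=x)\Pr(Y=y\mid X=x)$, split the logarithm, and collect terms to recognize $\H(X) + \H(Y \mid X)$. The inequality $\H(Y \mid X) \leq \H(Y)$ is equivalent to $\I(X;Y) \geq 0$, which I would derive via Gibbs' inequality, writing the mutual information as $\sum_{x,y} p(x,y) \log \tfrac{p(x,y)}{p(x)p(y)}$ and applying Jensen to $-\log$. Equality when $X,Y$ are independent is immediate from $p(x,y) = p(x)p(y)$. Part \ref{item:fact:info2} is then handled by noting that each summand $p\log(1/p)$ in $\H(X)$ is non-negative, and $\H(X) \leq \log|\Omega|$ follows by Jensen applied to the concave map $t \mapsto \log t$, giving $\E[\log(1/\Pr(X))] \leq \log \E[1/\Pr(X)] = \log|\supp(X)| \leq \log|\Omega|$.

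For part \ref{item:fact:info3}, non-negativity $\I(X;Y \mid Z) \geq 0$ follows by averaging the unconditional statement from part \ref{item:fact:info1} over values of $Z$; the equality case similarly reduces to the unconditional case, yielding conditional independence. The upper bound $\I(X;Y\mid Z) \leq \H(X)$ combines $\H(X\mid YZ) \geq 0$ (part \ref{item:fact:info2} applied conditionally) with $\H(X\mid Z) \leq \H(X)$ (which is part \ref{item:fact:info1} rephrased). Part \ref{item:fact:info4} (chain rule for mutual information) is obtained by expanding $\I(WX;Y\mid Z) = \H(WX\mid Z) - \H(WX\mid YZ)$ and applying the chain rule for entropy to each term, conditioning on $Z$ and on $YZ$ respectively, then rearranging. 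Finally, part \ref{item:fact:info5} (data processing) follows cleanly by expanding $\I(X;\, Y\, f(Y) \mid Z)$ via part \ref{item:fact:info4} in two different orders: on one hand $\I(X; Y \mid Z) + \I(X; f(Y) \mid YZ) = \I(X;Y\mid Z)$ since $f(Y)$ is a deterministic function of $Y$ so the second term vanishes; on the other hand $\I(X; f(Y) \mid Z) + \I(X; Y \mid f(Y)\, Z) \geq \I(X; f(Y) \mid Z)$ by the non-negativity from part \ref{item:fact:info3}.

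There is no genuine obstacle here; these are textbook identities and the only care needed is in handling the averaging over conditioning variables correctly, so that each conditional statement inherits from its unconditional counterpart. The one subtlety worth being explicit about is the equality case in part \ref{item:fact:info3}: zero conditional mutual information corresponds to $p(x,y\mid z) = p(x\mid z)\, p(y \mid z)$ for all $z$ in the support of $Z$, which is exactly the standard definition of conditional independence given $Z$.
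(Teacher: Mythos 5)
The paper does not actually prove Fact~\ref{fact:info}; it is stated without proof in the appendix, with a pointer to the Cover--Thomas textbook. So there is no ``paper's proof'' to compare against. Your derivations are the standard textbook arguments and are correct: the entropy chain rule by direct expansion, non-negativity of (conditional) mutual information by Gibbs'/Jensen applied to $-\log$ followed by averaging over the conditioning variable, the entropy bounds by non-negativity of summands and Jensen on $\log$, the mutual-information chain rule by two applications of the entropy chain rule, and data processing by expanding $\I(X;Y\,f(Y)\mid Z)$ in two orders and using that $\I(X;f(Y)\mid YZ)=0$. One small remark on the data-processing step: the chain rule as stated in the fact decomposes the \emph{first} argument of $\I$, so your expansion of the second argument tacitly uses symmetry of mutual information ($\I(A;B\mid C)=\I(B;A\mid C)$); that is noted in the paper right after Definition~\ref{def:mi}, so it is available, but it would be worth flagging explicitly. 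Otherwise the argument is complete and correct.
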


We also use the following technical lemmas about mutual information.

\begin{lemma} 
\label{lemma:info}
For discrete random variables $W$, $X$, $Y$, and $Z$, we have 
\[\max(\I(W; X \mid YZ), \I(Y; X \mid Z)) \leq \I(W; X \mid Z) + \I(Y; X \mid WZ).\]
\end{lemma}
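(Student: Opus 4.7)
The plan is to apply the chain rule of mutual information (\Cref{item:fact:info4} of \autoref{fact:info}) in two different orderings and then use non-negativity of mutual information (\Cref{item:fact:info3} of \autoref{fact:info}).

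First, I would expand $\I(WY; X \mid Z)$ in two ways using the chain rule:
\[
\I(WY; X \mid Z) = \I(W; X \mid Z) + \I(Y; X \mid WZ),
\]
\[
\I(WY; X \mid Z) = \I(Y; X \mid Z) + \I(W; X \mid YZ).
\]
Equating these gives the key identity
\[
\I(W; X \mid Z) + \I(Y; X \mid WZ) = \I(Y; X \mid Z) + \I(W; X \mid YZ).
\]

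From this identity, each of the two inequalities in the $\max$ follows by dropping one non-negative term. Specifically, to bound $\I(Y; X \mid Z)$, I would rearrange as $\I(Y; X \mid Z) = \I(W; X \mid Z) + \I(Y; X \mid WZ) - \I(W; X \mid YZ)$ and use $\I(W; X \mid YZ) \geq 0$. Symmetrically, to bound $\I(W; X \mid YZ)$, I would use $\I(Y; X \mid Z) \geq 0$ in the same identity. Taking the maximum of the two inequalities yields the claimed bound.

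There is no real obstacle here: the entire proof is two applications of the chain rule plus non-negativity, both of which are listed as facts in \autoref{fact:info}. The only thing to be careful about is writing the chain rule with the correct ordering of variables so that the shared quantity $\I(WY; X \mid Z)$ appears on both sides.
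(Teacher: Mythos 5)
Your proof is correct and is essentially the same as the paper's: both rely on the two chain-rule decompositions of $\I(WY; X \mid Z)$ and one use of non-negativity of conditional mutual information. The paper bounds the max by the sum $\I(W; X \mid YZ) + \I(Y; X \mid Z)$ first and then applies the chain rule twice, whereas you write the identity from the two expansions and then drop a non-negative term, but the underlying argument is identical.
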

\begin{proof} Observe that:
\begin{align*}
\max(\I(W; X \mid YZ), \I(Y; X \mid Z)) &\leq \I(W; X \mid YZ) +  \I(Y; X \mid Z)  \tag{\Cref{item:fact:info3}, \autoref{fact:info}}\\
&= \I(WY; X \mid Z)  \tag{\Cref{item:fact:info4}, \autoref{fact:info}}\\
&= \I(W; X \mid Z) + \I(Y; X \mid WZ).  \tag{\Cref{item:fact:info4}, \autoref{fact:info}}
\end{align*}
\end{proof}

\begin{lemma} 
\label{lemma:indexinfo}
Let $n > 0$ and $X = X_1, X_2, \cdots, X_n$ where $ X_1, X_2, \cdots, X_n$ are independent and identically distributed discrete random variables. Let $I$ be a random variable distributed uniformly over $[n]$. For all discrete random variables $Y$ such that $X$ is independent of $Y$ and $I$ is independent of $(X, Y)$ and all functions $f$, we have:
\[\I(X_I ; f(X, Y) \mid Y, I) \leq \frac{1}{n} \cdot \I(X; f(X, Y) \mid Y).\] 
\end{lemma}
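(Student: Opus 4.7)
The plan is to prove this by showing two identities/inequalities that sandwich the desired quantity between the per-coordinate mutual informations. First I would expand the left-hand side using~\autoref{def:cond-ent} along with the fact that $I$ is uniform on $[n]$:
\[
\I(X_I ; f(X, Y) \mid Y, I) = \frac{1}{n} \sum_{i=1}^{n} \I(X_i ; f(X, Y) \mid Y, I = i).
\]
Because $I$ is independent of $(X,Y)$ and $f(X,Y)$ is a deterministic function of $(X,Y)$, the random variable $I$ is independent of the joint $(X, Y, f(X,Y))$. Consequently, the conditional distribution of $(X_i, Y, f(X,Y))$ given $I = i$ matches its unconditional distribution, so each summand collapses to $\I(X_i ; f(X, Y) \mid Y)$.

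For the right-hand side, I would apply the chain rule (\Cref{item:fact:info4} of~\autoref{fact:info}) to obtain
\[
\I(X ; f(X, Y) \mid Y) = \sum_{i=1}^{n} \I(X_i ; f(X, Y) \mid Y, X_{<i}).
\]
The core of the argument is then to verify, for each $i$, the per-coordinate inequality
\[
\I(X_i ; f(X, Y) \mid Y) \leq \I(X_i ; f(X, Y) \mid Y, X_{<i}).
\]
This is the standard super-additivity of mutual information over independent coordinates. Since $X$ is independent of $Y$, the $X_i$'s remain mutually independent after conditioning on $Y$, so $\H(X_i \mid Y) = \H(X_i \mid Y, X_{<i})$. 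Expanding both sides as differences of conditional entropies using~\autoref{def:mi},
\[
\I(X_i ; f(X, Y) \mid Y, X_{<i}) - \I(X_i ; f(X, Y) \mid Y) = \H(X_i \mid Y, f(X,Y)) - \H(X_i \mid Y, X_{<i}, f(X,Y)),
\]
which is nonnegative because conditioning reduces entropy.

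Combining the three ingredients yields
\[
\I(X_I ; f(X, Y) \mid Y, I) = \frac{1}{n} \sum_{i=1}^{n} \I(X_i ; f(X, Y) \mid Y) \leq \frac{1}{n} \sum_{i=1}^{n} \I(X_i ; f(X, Y) \mid Y, X_{<i}) = \frac{1}{n} \cdot \I(X; f(X, Y) \mid Y),
\]
as claimed. The only subtle step is the per-coordinate inequality, and the crucial hypothesis it relies on is the mutual independence of the $X_i$'s together with their independence from $Y$; without these, the equality $\H(X_i \mid Y) = \H(X_i \mid Y, X_{<i})$ would fail and the direction of the bound could reverse. Everything else is bookkeeping with the chain rule and the data-processing-style fact that conditioning cannot increase entropy.
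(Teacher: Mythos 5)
Your proposal is correct and follows essentially the same route as the paper: expand the conditional on $I=i$ using uniformity and independence of $I$, rewrite the target via the chain rule, and establish the per-coordinate inequality $\I(X_i ; f(X, Y) \mid Y) \leq \I(X_i ; f(X, Y) \mid Y, X_{<i})$ using independence of the $X_i$'s given $Y$. The only cosmetic difference is that the paper invokes its helper Lemma \ref{lemma:info} (and then kills the extra term $\I(X_i ; X_{<i} \mid Y) = 0$ by independence) where you derive the per-coordinate inequality directly as an entropy difference plus the fact that conditioning cannot increase entropy; these are equivalent manipulations.
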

\begin{proof}
Using the fact that $I$ is distributed uniformly over $[n]$, we get
\begin{align*}
\I(X_I ; f(X, Y) \mid Y, I)  &= \H( f(X, Y) \mid Y, I)  - \H(f(X, Y)\mid X_I , Y, I) \tag{\autoref{def:mi}}\\
&= \frac{1}{n} \cdot \sum_{i \in [n]} \left( \E_{y \sim \distribution{Y}}\left[ \H(f(X, Y) \mid Y = y, I = i)\right] \right. \\ 
&\hspace{1.5cm} - \left. \E_{y \sim \distribution{Y}}\E_{x \sim \distribution{X_i} }\left[ \H(f(X, Y) \mid X_i = x, Y = y, I = i)\right]\right) \tag{\autoref{def:cond-ent}}\\ 
&= \frac{1}{n} \cdot \sum_{i \in [n]} \left( \E_{y \sim \distribution{Y}}\left[ \H(f(X, Y) \mid Y = y)\right] \right. \\ 
&\hspace{1.5cm} - \left. \E_{y \sim \distribution{Y}}\E_{x \sim \distribution{X_i} }\left[ \H(f(X, Y) \mid X_i = x, Y = y)\right]\right) \tag{Independence of $I$ and $(X, Y)$}\\
&= \frac{1}{n} \cdot \sum_{i \in [n]} \H(f(X, Y) \mid Y) - \H(f(X, Y) \mid X_i, Y)\tag{\autoref{def:cond-ent}} \\
 &= \frac{1}{n} \cdot \sum_{i \in [n]} \I(X_i ; f(X, Y)\mid Y) \tag{\autoref{def:mi}}\\
 &\leq  \frac{1}{n} \cdot \sum_{i \in [n]} \I(X_i ; f(X, Y)\mid Y, X_{<i}) + \I(X_i ;  X_{<i} \mid Y)  \tag{\autoref{lemma:info} } \\
 &=  \frac{1}{n} \cdot \sum_{i \in [n]} \I(X_i ; f(X, Y)\mid Y, X_{<i})  \tag{\Cref{item:fact:info3}, \autoref{fact:info}}\\
 &=  \frac{1}{n} \cdot  \I(X ; f(X, Y)\mid Y)  \tag{\Cref{item:fact:info4}, \autoref{fact:info}}.
\end{align*}
\end{proof}

\subsection{Measures of Distance Between Distributions.}

We use two main measures of distance (or divergence) between distributions, namely the Kullback-Leibler divergence (KL-divergence) and the total variation distance. 

\begin{definition}[KL-divergence]
\label{def:kl}
For two distributions $\mu$ and $\nu$ over the same set $\Omega$, the Kullback-Leibler divergence between $\mu$ and $\nu$, denoted by $\kl{\mu}{\nu}$, is defined as 
\[\kl{\mu}{\nu} = \sum_{x \in \Omega} \mu(x) \log\frac{\mu(x)}{\nu(x)}.\]
\end{definition}

\begin{definition}[Total Variation Distance]
\label{def:tvd}
For two distributions $\mu$ and $\nu$ over the same set $\Omega$, the total variation distance $\mu$ and $\nu$ is defined as 
\[\tvd{\mu}{\nu}:= \max_{\Omega' \subseteq \Omega}  \sum_{x \in \Omega'} \mu(x)- \nu(x).\] 
\end{definition}

These definitions satisfy the following properties:
\begin{fact}
\label{fact:kltvd}
The following hold:
\begin{enumerate}
\item \label{item:fact:kltvd1} For discrete random variables $X$, $Y$, and $Z$, we have 
\[\I(X ; Y \mid Z) = \E_{(y,z) \sim \distribution{(Y,Z)} } \left[ \kl{\distribution{X \mid Y=y, Z=z}}{\distribution{X \mid Z=z}}\right].\]
\item \label{item:fact:kltvd2}{\em (Pinsker's inequality)} For any distributions $\mu$ and $\nu$, we have 
\[
\tvd{\mu}{\nu} \leq \sqrt{\frac{1}{2} \cdot \kl{\mu}{\nu}}.
\] 
\end{enumerate}
\end{fact}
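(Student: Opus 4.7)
The plan is to prove the two parts separately, as both are classical identities/inequalities in information theory.

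For part~\ref{item:fact:kltvd1}, I would start from the definition $\I(X;Y\mid Z) = \H(X\mid Z) - \H(X\mid Y,Z)$ given in \Cref{def:mi} and expand each conditional entropy using \Cref{def:cond-ent}. The key bookkeeping observation is that $\log \Pr(X=x\mid Z=z)$ does not depend on $y$, so for each fixed $z$,
\[
\sum_x \Pr(X=x\mid Z=z)\log\Pr(X=x\mid Z=z) = \sum_{x,y} \Pr(X=x,Y=y\mid Z=z)\log\Pr(X=x\mid Z=z).
\]
Substituting this identity into the expansion of $\H(X\mid Z)$ and combining with the expansion of $\H(X\mid Y,Z)$ lets one rewrite $\I(X;Y\mid Z)$ as
\[
\sum_{y,z}\Pr(Y=y,Z=z)\sum_x \Pr(X=x\mid Y=y,Z=z)\log\frac{\Pr(X=x\mid Y=y,Z=z)}{\Pr(X=x\mid Z=z)},
\]
which is exactly $\E_{(y,z)\sim\distribution{(Y,Z)}}[\kl{\distribution{X\mid Y=y,Z=z}}{\distribution{X\mid Z=z}}]$ by \Cref{def:kl}. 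This part is pure algebra.

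For part~\ref{item:fact:kltvd2} (Pinsker's inequality), I would use the standard two-step reduction. First, let $A = \{x\in\Omega : \mu(x) > \nu(x)\}$, so that $\tvd{\mu}{\nu} = \mu(A) - \nu(A)$ by \Cref{def:tvd}. Define Bernoulli distributions $\mu_A, \nu_A$ over $\{0,1\}$ by $\mu_A(1) = \mu(A)$ and $\nu_A(1) = \nu(A)$. By the data-processing inequality for KL divergence (applied to the indicator of $A$, which is a deterministic function), $\kl{\mu_A}{\nu_A} \leq \kl{\mu}{\nu}$; this itself follows from the log-sum inequality or a direct application of Jensen's inequality. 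It therefore suffices to prove the binary case: for all $p,q\in[0,1]$,
\[
2(p-q)^2 \leq p\log\frac{p}{q} + (1-p)\log\frac{1-p}{1-q}.
\]
This I would verify by fixing $q$ and defining $g(p) := p\log(p/q) + (1-p)\log((1-p)/(1-q)) - 2(p-q)^2$; then $g(q)=0$, $g'(q)=0$, and $g''(p) = \frac{1}{p(1-p)\ln 2} - 4 \geq 0$ on $(0,1)$ since $p(1-p)\leq 1/4$ (using natural log vs.\ base-$2$ log constants appropriately). Thus $g\geq 0$, giving the binary Pinsker bound. Taking square roots and chaining yields $\tvd{\mu}{\nu} = \tvd{\mu_A}{\nu_A} \leq \sqrt{\kl{\mu_A}{\nu_A}/2} \leq \sqrt{\kl{\mu}{\nu}/2}$.

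The main obstacle is essentially cosmetic: keeping track of the base of the logarithm throughout the Pinsker calculation (since the paper's $\H, \I, \kl{}{}$ use base-$2$ logarithms, the constant $1/2$ in Pinsker works out only if one is careful with $\ln 2$ factors when computing $g''$). Part~\ref{item:fact:kltvd1} is mechanical, and the reduction-to-binary step in part~\ref{item:fact:kltvd2} is standard, so neither involves a genuinely surprising idea.
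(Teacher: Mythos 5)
The paper does not prove \Cref{fact:kltvd}; it is stated as a standard fact and the reader is deferred to the textbook of Cover and Thomas. Your proof sketch is correct and follows exactly the canonical textbook route: part~\ref{item:fact:kltvd1} by expanding the two conditional-entropy terms and cancelling, and part~\ref{item:fact:kltvd2} by data-processing reduction to the binary case followed by a second-derivative comparison. Your calculus also checks out, including the log-base bookkeeping: with base-$2$ logarithms, $g''(p) = \tfrac{1}{p(1-p)\ln 2} - 4 \geq 0$ since $p(1-p) \leq \tfrac14 < \tfrac{1}{4\ln 2}$, so the bound holds (and is in fact slightly loose, as it would even with constant $2\ln 2$ in place of $2$). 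There is nothing to compare against on the paper's side.
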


\section{Omitted Proofs}\label{app:proofs}
\begin{proof}[Proof of \autoref{thm:mainformal} assuming \autoref{thm:mainred}]
Proof by contradiction. Suppose that \autoref{thm:mainred} is true and \autoref{thm:mainformal} is not. Let $P(\cdot)$ be the polynomial promised by \autoref{thm:dobzinski} and let $d$ be the degree of $P$. Define $\beta = \frac{1}{500(d+1)}$. 
%and let $m_0 \geq m^{\star}_0$ be large enough such that $P(m) \leq  m^{d+1}$ for all $m > m_0$. 
Let $\varepsilon_{\star} > 0$ be the constant promised by the negation of \autoref{thm:mainformal} for this value of $\beta$ (recall that we assume that \autoref{thm:mainformal} is false).
Let $m_1$ be large enough so that 
\begin{inparaenum}[(1)]
\item $P(m') \leq  m'^{d+1}$ for all $m' > m_1$,
\item $\exp(\beta \varepsilon_{\star}^2 \cdot m') \geq m'$ for all $m' > m_1$,
\item $m_1 > \frac{10^{10}}{\varepsilon_{\star}^2}$.
\end{inparaenum}

Using our assumption that \autoref{thm:mainformal} is false, we get that there is an  $m > m_1$, and a randomized, $m$-item, $\xos_m$-combinatorial auction $\Pi$ with two bidders and one seller that is truthful, is  $\left(\frac{3}{4} - \frac{1}{240} + \varepsilon_{\star} \right)$-approximate with probability $\frac{1}{2} + \exp( - \beta \varepsilon_{\star}^2  \cdot m)$, and satisfies $\cc(\Pi) < \exp(\beta \varepsilon_{\star}^2 \cdot  m)$. 

Plugging $\Pi$ into \autoref{thm:dobzinski}, we get a randomized, $m$-item, $\xos_m$-combinatorial auction $\Pi'$ with two bidders and one seller that is simultaneous and  $\left(\frac{3}{4} - \frac{1}{240} + \varepsilon_{\star} \right)$-approximate with probability $\frac{1}{2} + \exp( - \beta \varepsilon_{\star}^2  \cdot  m) > \frac{1}{2} + \exp\left( - \frac{ \varepsilon_{\star}^2 m }{500}\right)$ and satisfies (using $m > m_1$)
\[\cc(\Pi') < P(\max(\exp(\beta \varepsilon_{\star}^2 \cdot  m), m)) \leq  \exp\left( \frac{ \varepsilon_{\star}^2 m }{500}\right).\]
This contradicts \autoref{thm:mainred} and we are done. 
\end{proof}

\subsection{Omitted Proofs from~\Cref{sec:partition}}\label{app:partition}
\paragraph{Concentration inequalities.}

 We use the following version of Chernoff bound for negatively correlated random variables:

\begin{definition}[Negatively Correlated Random Variables]
\label{def:negcor}
For $n >0$, let $X_1, \cdots, X_n$ be random variables taking values in $\{0,1\}$. The random variables $X_1, \cdots, X_n$ are \emph{negatively correlated} if for all subsets $S \subseteq [n]$, we have $\Pr(\forall i \in S: X_i = 1) \leq \prod_{i \in S} \Pr(X_i = 1)$.
\end{definition}

\begin{lemma}[Generalized Chernoff Bound; cf.~\cite{PanconesiS97}]\label{lemma:negcorchernoff} 
For $n >0$, let $X_1, \cdots, X_n$ be negatively correlated random variables that take values in $\{0,1\}$. Then, for any $\varepsilon > 0$, we have (where $\mu =  \sum_{i \in [n]} \E[X_i] \leq n$):
\[
\Pr\left(\sum_{i \in [n]} X_i > \mu + \varepsilon n\right) \leq  \Pr\left(\sum_{i \in [n]} X_i >  (1 + \varepsilon ) \cdot \mu \right) \leq \exp(-\varepsilon^2  \mu /3).
\]
\end{lemma}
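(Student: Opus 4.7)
My plan is to follow the classical moment-generating-function proof of Chernoff's bound, replacing the usual appeal to independence by an expansion that directly exploits the subset inequality $\Pr(\forall i \in S: X_i = 1) \leq \prod_{i \in S} \Pr(X_i = 1)$ from~\autoref{def:negcor}. The first inequality in the statement, $\Pr(\sum_i X_i > \mu + \varepsilon n) \leq \Pr(\sum_i X_i > (1+\varepsilon)\mu)$, is immediate from $\mu \leq n$, so the real task is establishing the second inequality.

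First I would apply Markov's inequality to $\exp(t \sum_i X_i)$ for a parameter $t > 0$ to be optimized later, reducing the problem to upper bounding the moment generating function $\E[\exp(t \sum_i X_i)]$. Since each $X_i \in \{0,1\}$, I can write $e^{t X_i} = 1 + (e^t - 1) X_i$ and expand the product over $i \in [n]$, obtaining
\[
\E\Big[\prod_{i \in [n]} e^{t X_i}\Big] \;=\; \sum_{S \subseteq [n]} (e^t - 1)^{|S|} \cdot \Pr\bigl(\forall i \in S: X_i = 1\bigr).
\]
Every coefficient $(e^t - 1)^{|S|}$ is nonnegative for $t > 0$, so the negative-correlation hypothesis lets me replace each probability by $\prod_{i \in S} \E[X_i]$ termwise. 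Re-factoring the resulting sum and using $1 + x \leq e^x$ then yields
\[
\E\Big[\prod_i e^{t X_i}\Big] \;\leq\; \prod_i \bigl(1 + (e^t - 1)\E[X_i]\bigr) \;\leq\; \exp\bigl((e^t - 1) \mu\bigr),
\]
which is exactly the MGF bound one would get for a sum of independent Bernoullis with mean $\mu$.

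From here the argument is the textbook Chernoff optimization: combining the above with Markov and choosing $t = \ln(1 + \varepsilon)$ gives the bound $\bigl[e^{\varepsilon}/(1+\varepsilon)^{1+\varepsilon}\bigr]^{\mu}$, and the elementary analytic inequality $(1 + \varepsilon)\ln(1 + \varepsilon) - \varepsilon \geq \varepsilon^2/3$ (valid in the regime of interest) finishes the job. The only genuinely non-routine step is the MGF factorization under just the subset-level inequality rather than full independence; once the identity $e^{t X_i} = 1 + (e^t - 1) X_i$ is in hand this reduces to a termwise comparison between nonnegative quantities, so I do not anticipate any serious obstacle, and the remainder is standard bookkeeping already present in~\cite{PanconesiS97}.
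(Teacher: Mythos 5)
The paper does not actually prove this lemma itself; it is cited directly from~\cite{PanconesiS97} with ``cf.''\ and no argument is given. So there is no internal proof to compare against. Judged on its own, your approach is the standard and correct one: after Markov on the moment generating function, the identity $e^{tX_i} = 1 + (e^t - 1)X_i$ turns $\E[\prod_i e^{tX_i}]$ into a sum over $S \subseteq [n]$ of the nonnegative quantities $(e^t-1)^{|S|}\,\Pr(\forall i\in S: X_i=1)$, and negative correlation can then be applied \emph{termwise} precisely because every coefficient is nonnegative for $t > 0$. This recovers the bound $\E[\prod_i e^{tX_i}] \le \exp((e^t - 1)\mu)$ one would have with full independence; the rest is routine.

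One small caveat you already half-flag yourself. The elementary inequality $(1+\varepsilon)\ln(1+\varepsilon) - \varepsilon \ge \varepsilon^2/3$ does \emph{not} hold for all $\varepsilon > 0$ (it reverses around $\varepsilon \approx 1.8$), so the middle inequality $\Pr(\sum_i X_i > (1+\varepsilon)\mu) \le \exp(-\varepsilon^2\mu/3)$ is only justified for, say, $\varepsilon \le 1$ with this choice of $t$. The \emph{outer} chain $\Pr(\sum_i X_i > \mu + \varepsilon n) \le \exp(-\varepsilon^2\mu/3)$ is still valid unconditionally, though, because for $\varepsilon > 1$ the event $\sum_i X_i > \mu + \varepsilon n > n$ is impossible and the left-hand side is $0$. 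So if you want the lemma exactly as stated for all $\varepsilon > 0$, you should either restrict the middle inequality to $\varepsilon \le 1$ or note the vacuous case; for the paper's downstream applications (Lemma~\ref{lemma:pcdist} onward) only small $\varepsilon$ is ever used, so no harm is done.
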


Much of the proofs in this section will follow by connecting $\pc(k,\vec{P},\vec{p})$ to a related product distribution, defined below.

\begin{definition}
For a partition parameter $(k, \vec{P}, \vec{p})$, define $\ally(k, \vec{P}, \vec{p})$ to be the distribution over subsets of $M$ such that we have $\Pr_{U \sim \ally(D)}(z \in U) = \frac{p_{\vec{P}[z]}}{\lvert{P_{\vec{P}[z]}}\rvert}$ independently for all $z \in M$.
\end{definition}

We will need the following technical  lemmas about partition parameters
\begin{lemma}
\label{lemma:pcdistally}
For any subset $S \subseteq M$ and any partition parameter $(k, \vec{P}, \vec{p})$, it holds that 
\[
\Pr_{U \sim \pc(k, \vec{P}, \vec{p})}(U \cap S = \emptyset) \leq \Pr_{U \sim \ally(k, \vec{P}, \vec{p})}(U \cap S = \emptyset). 
\]
\end{lemma}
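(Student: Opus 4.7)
The plan is to reduce the claim to a single partition block, where it becomes a standard comparison between sampling without replacement and sampling with replacement.

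First I would observe that both distributions factor across the blocks of the partition $\vec{P}$. Specifically, for $U \sim \pc(k,\vec{P},\vec{p})$, the constraint $|\vec{P}\cap U|=\vec{p}$ imposes conditions $|P_i\cap U|=p_i$ that involve disjoint parts of $M$, so $\pc$ is equivalent to choosing, independently for each $i\in[k]$, a uniformly random size-$p_i$ subset of $P_i$. Similarly, $\ally$ is already a product over $M$, hence a product across the $P_i$. Writing $S_i := S\cap P_i$, the event $U\cap S=\emptyset$ decomposes as $\bigcap_{i\in[k]}\{U\cap S_i=\emptyset\}$ and factors under both distributions, so it suffices to prove, for each $i$, the single-block inequality
\[
\Pr_{U_i \sim \pc_i}(U_i \cap S_i=\emptyset) \;\leq\; \Pr_{U_i \sim \ally_i}(U_i \cap S_i=\emptyset),
\]
where $\pc_i$ is uniform over size-$p_i$ subsets of $P_i$ and $\ally_i$ is the product distribution on $P_i$ with per-coordinate probability $p_i/|P_i|$.

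Next I would reduce this one-block statement to a clean combinatorial identity. Setting $N:=|P_i|$, $s:=|S_i|$, and $p:=p_i$, the left-hand side equals $\binom{N-s}{p}/\binom{N}{p}$, which telescopes as $\prod_{j=0}^{s-1}\frac{N-p-j}{N-j}$, while the right-hand side equals $(1-p/N)^s=\prod_{j=0}^{s-1}\frac{N-p}{N}$. A termwise comparison suffices: $\frac{N-p-j}{N-j}\leq\frac{N-p}{N}$ is equivalent to $jp\geq 0$, which is immediate. Multiplying across $j\in\{0,\dots,s-1\}$ gives the one-block inequality, and multiplying across $i\in[k]$ gives the lemma.

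A minor subtlety worth flagging: we must handle the degenerate cases $|P_i|=0$, $s=0$, or $p_i=|P_i|$, all of which make both sides trivially equal (either the product is empty or both probabilities are $0$), so no special casing is really needed. I do not expect any real obstacle here; the only point that requires a sentence of justification is the factorization of $\pc$ into independent uniform samples on the blocks, which follows from the disjointness of the conditioning constraints together with the fact that $\pc$ is the uniform distribution on the product of feasible sets. Everything else is a direct calculation, and this lemma is exactly the statement needed to feed the negative-correlation machinery of \autoref{lemma:negcorchernoff} into the proof of \autoref{lemma:pcdist}.
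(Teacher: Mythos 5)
Your proof is correct and takes essentially the same route as the paper: factoring the hypergeometric probability across blocks into a product of ratios $\binom{|P_i\cap\bar S|}{p_i}/\binom{|P_i|}{p_i}$, telescoping each ratio, and comparing termwise against $(1-p_i/|P_i|)^{|S\cap P_i|}$. Your framing as a reduction to a single-block statement and the paper's one-chain-of-inequalities presentation are organizational variants of the same calculation.
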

\begin{proof}
We have 
\begin{align*}
\Pr_{U \sim \pc(k, \vec{P}, \vec{p})}(U \cap S = \emptyset) &= \frac{\lvert{\{U \subseteq \overline{S} \mid \lvert{\vec{P} \cap U}\rvert = \vec{p}\}}\rvert}{\lvert{\{U \subseteq M \mid \lvert{\vec{P} \cap U}\rvert = \vec{p}\}}\rvert}\\
&= \frac{\prod_{i \in [k] : \lvert{P_i}\rvert > 0}\binom{\lvert{\overline{S} \cap P_i}\rvert}{p_i} }{\prod_{i \in [k]: \lvert{P_i}\rvert > 0}\binom{\lvert{P_i}\rvert}{p_i}}\\
&= \prod_{i \in [k]: \lvert{P_i}\rvert > 0}\frac{\left( \lvert{P_i}\rvert - p_i\right) \left(\lvert{P_i}\rvert - p_i  - 1 \right) \cdots\left(\lvert{\overline{S} \cap P_i}\rvert - p_i + 1 \right)}{\lvert{P_i}\rvert \left(\lvert{P_i}\rvert -1 \right) \cdots\left(\lvert{\overline{S} \cap P_i}\rvert + 1 \right)}\\
&\leq \prod_{i \in [k]: \lvert{P_i}\rvert > 0}\left(1 - \frac{p_i}{\lvert{P_i}\rvert }\right)^{\lvert{S \cap P_i}\rvert} = \Pr_{U \sim \ally(k, \vec{P}, \vec{p})}(U \cap S = \emptyset).
\end{align*}
\end{proof}

\begin{corollary}
\label{cor:pcdistally}
For any partition parameter $(k, \vec{P}, \vec{p})$ and any distribution $D^*$ over subsets of $M$, it holds that 
\[
\Pr_{\substack{U \sim \pc(k, \vec{P}, \vec{p}) \\ U^* \sim D^*}}(U \cap U^* = \emptyset) \leq \Pr_{\substack{U \sim \ally(k, \vec{P}, \vec{p})\\ U^* \sim D^*}}(U \cap U^* = \emptyset). 
\]
\end{corollary}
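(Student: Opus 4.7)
The plan is to reduce Corollary~\ref{cor:pcdistally} to Lemma~\ref{lemma:pcdistally} by conditioning on the realization of $U^*$. Since in both the left- and right-hand sides $U$ is drawn independently of $U^*$, the joint probability of $U \cap U^* = \emptyset$ factors as an expectation over $U^*$ of the conditional probability (under $U$) that $U$ avoids the realized set.

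Concretely, I would write
\[
\Pr_{\substack{U \sim \pc(k, \vec{P}, \vec{p}) \\ U^* \sim D^*}}(U \cap U^* = \emptyset) \;=\; \mathop{\mathbb{E}}_{U^* \sim D^*}\!\left[\Pr_{U \sim \pc(k,\vec{P},\vec{p})}(U \cap U^* = \emptyset)\right],
\]
and analogously for the $\ally$ side. For each fixed realization $S := U^*$ in the support of $D^*$, Lemma~\ref{lemma:pcdistally} applied to $S$ gives
\[
\Pr_{U \sim \pc(k,\vec{P},\vec{p})}(U \cap S = \emptyset) \;\leq\; \Pr_{U \sim \ally(k,\vec{P},\vec{p})}(U \cap S = \emptyset).
\]
Taking expectation over $U^* \sim D^*$ of both sides, and using linearity/monotonicity of expectation, yields the desired inequality.

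There is no real obstacle here: the content is entirely in Lemma~\ref{lemma:pcdistally}, and the corollary is a routine lift from a fixed test set $S$ to a random test set $U^*$ drawn independently of $U$. The only point worth noting explicitly is the independence of $U$ and $U^*$ in both the $\pc$ and $\ally$ couplings, which is exactly what is specified by the double-subscript notation in the statement; this independence is what lets the joint probability factor into an expectation over $U^*$ of a probability over $U$.
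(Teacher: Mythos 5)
Your proposal is correct and is essentially the paper's proof: both factor the joint probability (by independence of $U$ and $U^*$) into an average over realizations $S$ of $U^*$, apply Lemma~\ref{lemma:pcdistally} pointwise for each fixed $S$, and then average back, the only cosmetic difference being that you write this as an expectation while the paper writes an explicit sum over $S \subseteq M$.
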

\begin{proof}
We have:
\begin{align*}
\Pr_{\substack{U \sim \pc(k, \vec{P}, \vec{p}) \\ U^* \sim D^*}}(U \cap U^* = \emptyset) &=  \sum_{S \subseteq M}\Pr_{\substack{U \sim \pc(k, \vec{P}, \vec{p}) \\ U^* \sim D^*}}(U \cap S = \emptyset, U^* = S)\\
&=  \sum_{S \subseteq M}\Pr_{U \sim \pc(k, \vec{P}, \vec{p})}(U \cap S = \emptyset)\Pr_{U^* \sim D^*}(U^* = S)\\
&\leq  \sum_{S \subseteq M}\Pr_{U \sim \ally(k, \vec{P}, \vec{p})}(U \cap S = \emptyset)\Pr_{U^* \sim D^*}(U^* = S) \tag{\autoref{lemma:pcdistally}}\\
&=  \sum_{S \subseteq M}\Pr_{\substack{U \sim \ally(k, \vec{P}, \vec{p}) \\ U^* \sim D^*}}(U \cap S = \emptyset, U^* = S)  =\Pr_{\substack{U \sim \ally(k, \vec{P}, \vec{p}) \\ U^* \sim D^*}}( U \cap U^* = \emptyset). 
\end{align*}
\end{proof}

\begin{proof}[Proof of~\Cref{lemma:pcdist}]
Let $D$ denote the partition parameter $(k, \vec{P}, \vec{p}) $ and $D'$ denote the parameter $(k', \vec{P'}, \vec{p'})$. 
Let $U$ and $U'$ be sets sampled from distributions $\pc(D)$ and $\pc(D')  $ respectively. 
For $z \in M$, we define the indicator random variable $X_z$ to be such that $X_z = 1$ if and only if $z \notin U \cap U'$. We have that 
\begin{equation}
\label{eq:pcdist1}
\begin{split}
\E[X_z] = \Pr(X_z = 1) &=  \Pr_{\substack{U \sim \pc(D) \\ U' \sim \pc(D')}}(z \notin U \cap U') = 1 -  \Pr_{\substack{U \sim \pc(D) \\ U' \sim \pc(D')}}(z \in U \cap U') \\
&= 1 -  \Pr_{U \sim \pc(D)}(z \in U) \cdot \Pr_{U' \sim \pc(D')}(z \in U') =  1 - \frac{p_{\vec{P}[z]}}{\lvert{P_{\vec{P}[z]}}\rvert} \cdot \frac{p'_{\vec{P'}[z]}}{\lvert{P'_{\vec{P'}[z]}}\rvert},
\end{split}
\end{equation}
implying 
\(
\sum_{z \in [m]} \E[X_z] = m - \sum_{z \in M}\frac{p_{\vec{P}[z]}}{\lvert{P_{\vec{P}[z]}}\rvert} \cdot \frac{p'_{\vec{P'}[z]}}{\lvert{P'_{\vec{P'}[z]}}\rvert}. = m - \Delta
\).
 We now show that the random variables $X_1, \cdots, X_m$ are negatively correlated (\autoref{def:negcor}), whence it follows from \autoref{lemma:negcorchernoff} that 
 \begin{align*}
\Pr_{\substack{U \sim \pc(D) \\ U' \sim \pc(D')}} \left(\lvert{U \cap U'}\rvert < \Delta - \varepsilon m\right) &= \Pr\left(\sum_{z \in [m]} X_z >\sum_{z \in [m]} \E[X_z] + \varepsilon m  \right) \leq   \exp(-\varepsilon^2 (m -  \Delta)/3 ).
 \end{align*} 
 
 In order to show that the random variables $X_1, \cdots, X_m$ are negatively correlated, we pick an arbitrary subset $S$ of $M$ and show that $\Pr(\forall z \in S: X_z = 1) \leq \prod_{z \in S} \Pr(X_z = 1)$. We have:
\begin{align*}
\Pr(\forall z \in S: X_z = 1) &= \Pr_{\substack{U \sim \pc(D) \\ U' \sim \pc(D') }} \left(S  \cap U \cap U' = \emptyset \right) \\
&\leq \Pr_{\substack{U \sim \ally(D) \\ U' \sim \pc(D')}} \left(S  \cap U \cap U' = \emptyset \right) \tag{\autoref{cor:pcdistally}}\\
&\leq \Pr_{\substack{U \sim \ally(D) \\ U' \sim \ally(D')}} \left(S  \cap U \cap U' = \emptyset \right) \tag{\autoref{cor:pcdistally}}\\
&= \Pr_{\substack{U \sim \ally(D) \\ U' \sim \ally(D')}} \left(\forall z \in S: z \notin U \cap U' \right) \\
&= \prod_{z \in S}\Pr_{\substack{U \sim \ally(D) \\ U' \sim \ally(D')}} \left(z \notin U \cap U' \right) \\
&= \prod_{z \in S}\left(1 - \frac{p_{\vec{P}[z]}}{\lvert{P_{\vec{P}[z]}}\rvert} \cdot \frac{p'_{\vec{P'}[z]}}{\lvert{P'_{\vec{P'}[z]}}\rvert} \right)\\
&= \prod_{z \in S} \Pr(X_z = 1). \tag{\autoref{eq:pcdist1}}
\end{align*}

\end{proof}

\begin{proof}[Proof of~\Cref{lemma:part}] We only argue for the case $j = 1$ as the case $j = 2$ is symmetric. Let $\mathcal{C}$ be the set of all sequences $\vec{Z}'$ of $k_1$ subsets of $M$ satisfying $\lvert{\part_{\vec{S} \Vert \vec{Z}'}}\rvert = \vec{a}_1$. If $\vec{Z} \notin \mathcal{C}$, then the result holds as both the terms are $0$. We, thus assume that $\vec{Z} \in \mathcal{C}$. We immediately get $\Pr_{\vec{S}_1 \sim \mu_1}\left(\vec{S}_1 = \vec{Z}\right)  = \frac{1}{\lvert{\mathcal{C}}\rvert}$.

For $\vec{Z}' \in \mathcal{C}$, define the set $\mathcal{D}(\vec{Z}')$ to be the set of all sequences  $\vec{Z}''$ of $k_2$ subsets of $M$ such that $\lvert{\part_{\vec{S} \Vert \vec{Z}' \Vert \vec{Z}''}}\rvert = \vec{a}$. Owing to the fact that $\Pr_{\vec{S}_1 \sim \mu_1, \vec{S}_2 \sim \mu_2}\left(\lvert{\part_{\vec{S} \Vert \vec{S}_1 \Vert \vec{S}_2}}\rvert = \vec{a}\right) > 0$, we have  $\lvert{\part_{\vec{S} \Vert \vec{Z}''}}\rvert = \vec{a}_2$  for all $\vec{Z}'' \in \mathcal{D}(\vec{Z}')$. Furthermore, by symmetry, the value of $\lvert{\mathcal{D}(\vec{Z}')}\rvert$ is the same for all $\vec{Z}' \in \mathcal{C}$.

It follows that
\[ \Pr_{\substack{\vec{S}_1 \sim \mu_1\\ \vec{S}_2 \sim \mu_2}}\left(\vec{S}_1 = \vec{Z} \mid \lvert{\part_{\vec{S} \Vert \vec{S}_1 \Vert \vec{S}_2}}\rvert = \vec{a}\right) = \frac{\lvert{\mathcal{D}(\vec{Z})}\rvert}{\sum_{\vec{Z}' \in \mathcal{C}} \lvert{\mathcal{D}(\vec{Z}')}\rvert} = \frac{1}{\lvert{\mathcal{C}}\rvert},
\]
finishing the proof.
\end{proof}

\begin{proof}[Proof of~\Cref{cor:part}]

Observe that there exist unique $\vec{a}'_1 = \vec{a}'_1(\vec{S}, \vec{a}_1)$ and $ \vec{a}'_2 =  \vec{a}'_2(\vec{S}, \vec{a}_2)$, both in $ \mathbb{Z}^{2^{k+1}}$ such that, for any  $j \in \{1, 2\}$ and $A \subseteq M$, 
\[
\lvert{\part_{\vec{S}} \cap A }\rvert = \vec{a}_j \iff \lvert{\part_{\vec{S} \Vert A}}\rvert = \vec{a}'_j.
\]
Similarly, for any $\vec{a}$ such that $\Pr_{A_1 \sim \mu_1, A_2 \sim \mu_2}\left(\lvert{\part_{\vec{S}} \cap A_1 \cap A_2 }\rvert = \vec{a}\right) > 0$, there exists a unique $\vec{a}' = \vec{a}' (\vec{S}, \vec{a}_1, \vec{a}_2, \vec{a}) \in \mathbb{Z}^{2^{k+2}}$ such that, for all $A_1, A_2$ such that $\lvert{\part_{\vec{S}} \cap A_j }\rvert = \vec{a}_j$ for $j \in [2]$, we have, 
\[
\lvert{\part_{\vec{S}} \cap A_1 \cap A_2 }\rvert = \vec{a} \iff \lvert{\part_{\vec{S} \Vert A_1 \Vert A_2} }\rvert = \vec{a}'.
\] 
The proof then follows by applying \autoref{lemma:part} with  $k_1 = k_2 = 1$.
\end{proof}

\end{document}